\newcommand\hl[1]{#1}
\newcommand{\bra}[1]{\left\langle #1 \right|}
\newcommand{\ket}[1]{\left|#1\right\rangle}
\newcommand{\braket}[2]{\left\langle#1 |  #2\right\rangle}
\newcommand{\be}{\begin{equation}}
\newcommand{\ee}{\end{equation}}
\newtheorem{theorem}{Theorem}
\newtheorem{corollary}{Corollary}[theorem]
\newtheorem{lemma}[theorem]{Lemma}
\newtheorem{conj}{Conjecture}
\newenvironment{customthm}[1]
  {\innercustomthm}
  {\endinnercustomthm}
\def\nn{\nonumber\\}
\DeclareMathOperator{\gap}{gap}
\DeclareMathOperator{\ngap}{ngap}
\DeclareMathOperator{\Per}{Per}
\DeclareMathOperator\erf{erf}
\newcommand{\nc}{\newcommand}
\nc{\rnc}{\renewcommand}
\nc\cF{\mathcal{F}}
\nc\bbC{\mathbb{C}}
\nc\bbF{\mathbb{F}}
\nc\bbM{\mathbb{M}}
\nc\bbN{\mathbb{N}}
\nc\bbR{\mathbb{R}}
\nc\bbZ{\mathbb{Z}}
\nc\ptNc{a}
\nc\ptSBc{a'}
\nc\piNc{b}
\nc\piSBc{b'}
\begin{document}

\preprint{MIT-CTP/5019}

\title{How many qubits are needed for quantum computational supremacy?}
\author{Alexander M. Dalzell}\email{adalzell@caltech.edu}
\affiliation{Institute for Quantum Information and Matter, California Institute of Technology, Pasadena, CA 91125, USA}
\affiliation{Department of Physics, Massachusetts Institute of Technology, Cambrdige, Massachusetts 02139, USA}
\orcid{0000-0002-3756-8500}
\author{Aram W. Harrow}\email{aram@mit.edu}\affiliation{Center for Theoretical Physics, Massachusetts Institute of Technology, Cambridge, Massachusetts 02139, USA}
\orcid{0000-0003-3220-7682}
\author{Dax Enshan Koh}\email{dax.koh@zapatacomputing.com}\affiliation{Department of Mathematics, Massachusetts Institute of Technology, Cambridge, Massachusetts 02139, USA}
\affiliation{Zapata Computing, Inc., 100 Federal Street, 20th Floor, Boston, Massachusetts 02110, USA}
\orcid{0000-0002-8968-591X}
\author{Rolando L. La Placa}\email{rlaplaca@mit.edu}\affiliation{Center for Theoretical Physics, Massachusetts Institute of Technology, Cambridge, Massachusetts 02139, USA}
\orcid{0000-0001-7867-226X}

\begin{abstract}

Quantum computational supremacy arguments, which describe a way for a quantum computer to perform a task that cannot also be done by a classical computer, typically require some sort of computational assumption related to the limitations of classical computation. One common assumption is that the polynomial hierarchy ($\mathsf{PH}$) does not collapse, a stronger version of the statement that $\mathsf{P} \neq \mathsf{NP}$, which leads to the conclusion that any classical simulation of certain families of quantum circuits requires time scaling worse than any polynomial in the size of the circuits. However, the asymptotic nature of this conclusion prevents us from calculating exactly how many qubits these quantum circuits must have for their classical simulation to be intractable on modern classical supercomputers. We refine these quantum computational supremacy arguments and perform such a calculation by imposing fine-grained versions of the non-collapse conjecture. \hl{Our first two conjectures poly3-NSETH($\ptNc$) and per-int-NSETH($\piNc$) take specific classical counting problems related to the number of zeros of a degree-3 polynomial in $n$ variables over $\mathbb{F}_2$ or the permanent of an $n \times n$ integer-valued matrix, and assert that any non-deterministic algorithm that solves them requires $2^{cn}$ time steps, where $c \in \{\ptNc,\piNc\}$. A third conjecture poly3-ave-SBSETH($\ptSBc$) asserts a similar statement about average-case algorithms living in the exponential-time version of the complexity class $\SBP$. We analyze evidence for these conjectures and argue that they are plausible when $\ptNc=1/2$, $\piNc = 0.999$ and $\ptSBc = 1/2$. 


Imposing poly3-NSETH(1/2) and per-int-NSETH(0.999), and assuming that the runtime of a hypothetical quantum circuit simulation algorithm would scale linearly with the number of gates/constraints/optical elements, we conclude that Instantaneous Quantum Polynomial-Time (IQP) circuits with 208 qubits and 500 gates, Quantum Approximate Optimization Algorithm (QAOA) circuits with 420 qubits and 500 constraints and boson sampling circuits (i.e.~linear optical networks) with 98 photons and 500 optical elements are large enough for the task of producing samples from their output distributions up to constant multiplicative error to be intractable on current technology. Imposing poly3-ave-SBSETH(1/2), we additionally rule out simulations with constant additive error for IQP and QAOA circuits of the same size. Without the assumption of linearly increasing simulation time, we can make analogous statements for circuits with slightly fewer qubits but requiring $10^4$ to $10^7$ gates.}

\end{abstract}

\maketitle


\section{Introduction}\label{sec:introduction}

Quantum computational supremacy (QCS) is the goal of carrying out a computational task on a quantum computer that cannot be performed by any classical computer \cite{preskill2012quantum}. Ingredients of this include choosing an appropriate task, building a quantum device that can perform it, ideally verifying that it was done correctly, and finally using arguments from complexity theory to support the claim that no classical computer can do the
same \cite{harrow2017quantum}. Recent advances indicate that the experimental ingredient might be available \hl{very soon --- indeed, Google recently reported \cite{arute2019quantum} that it has attained QCS with a 53 qubit superconducting device (later we comment more on how this announcement fits in with our analysis) --- } but the choice of task, its verification, and its complexity-theoretic justification remain important open theoretical research questions. In particular, based on the current status of complexity theory, establishing limitations on classical computing for the purpose of assessing how close we are to demonstrating QCS requires making conjectures, and thus we are presented with a range of choices. If we make stronger conjectures then we can use a smaller and more restricted quantum computer while ruling out the existence of more powerful classical simulation algorithms. Weaker conjectures, on the other hand, are more defensible and can be based on more widely studied mathematical principles.

A leading example of a strong conjecture is the Quantum Threshold Assumption (QUATH) proposed by Aaronson and Chen \cite{aaronson2016complexity}, which states that 
there is no efficient (i.e.~polynomial-time) classical algorithm that takes as input a description of a random quantum circuit $C$, and decides whether $\lvert\bra{ 0^n } C \ket{0^n}\rvert^2$ is greater or less than the median of all $\lvert\bra{ 0^n } C \ket{0^n}\rvert^2$ values, with success probability at least $\tfrac 12 + \Omega(\tfrac 1{2^n})$ over the choice of $C$.
This conjecture gives one of the strongest possible statements about the hardness of simulating quantum circuits that is not already ruled out by known simulations. 

A weaker conjecture is the statement that the polynomial hierarchy ($\PH$) does not collapse, which is closely related to the assertion that $\P \neq \NP$. Under this assumption, it has been shown that there cannot exist an efficient classical algorithm to produce samples from the output distribution of certain families of quantum circuits \cite{terhal2002adaptive,aaronson2011computational,bremner2010classical,jozsa2014classical, koh2015further, morimae2014hardness, fujii2014impossibility, farhi2016quantum, bouland2016complexity, aaronson2016computational,bouland2017quantum,bermejo2017architectures,hangleiter2017anti,morimae2018merlin}, up to constant multiplicative error. The three families we focus on in this work are Instantaneous Quantum Polynomial-time (IQP) circuits \cite{shepherd2009temporally,bremner2010classical}, Quantum Approximate Optimization Algorithm (QAOA) circuits \cite{farhi2014quantum,farhi2016quantum}, and boson sampling circuits (i.e.~linear optical networks) \cite{aaronson2011computational}, all of which are among those whose simulation is hard for the \PH. Indeed, a key selling point for work in QCS is that it could be based not on the conjectured hardness of a particular quantum circuit family or even quantum mechanics in general, but instead on highly plausible, purely classical computational conjectures, such as the non-collapse of the \PH.

However, the non-collapse of the $\PH$ is in a sense too weak of a conjecture to be practically useful. The conjecture rules out polynomial-time simulation algorithms for these families of circuits, but does not describe a concrete superpolynomial lower bound. Thus, assuming only the non-collapse of the $\PH$ would be consistent with a simulation of an $n$-qubit quantum system running in time $n^{f(n)}$ for an arbitrarily slowly growing function $f(n)$, say $\log\log\log\log(n)$. A stronger conjecture might lead to a requirement that simulation algorithms be exponential time, meaning that there is some constant $c$ for which its runtime is $\geq 2^{cn}$. Even this, though, is not strong enough; it remains possible that the constant $c$ is sufficiently small that we cannot rule out a scenario where highly parallelized state-of-the-art classical supercomputers, which operate at as many as $10^{17}-10^{18}$ floating-point operations per second (FLOPS), are able to simulate any circuit that might be experimentally realized in the near-term. For example, Neville et al.~\cite{neville2017classical}, as well as Clifford and Clifford \cite{clifford2018classical} recently developed classical algorithms that produce samples from the output of boson sampling circuits, the former of which has been shown to simulate $n=30$ photons on a standard laptop in just half an hour, contradicting the belief of many that $20$ to $30$ photons are sufficient to demonstrate a definitive quantum advantage over classical computation. A stronger conjecture that restricts the value of the exponential factor $c$, a so-called ``fine-grained'' conjecture, is needed to move forward on assessing the viability of QCS protocols. The framework of fine-grained complexity has gathered much interest in its own right in the last decade (see \cite{williams2015hardness} for survey), yielding unexpected connections between the fine-grained runtime of solutions to different problems.

In this work, we examine existing QCS arguments for IQP, QAOA, and boson sampling circuits from a fine-grained perspective. While many previous arguments \cite{aaronson2011computational,bremner2010classical,farhi2016quantum} center on the counting complexity class $\PP$, which can be related to quantum circuits via postselection \cite{aaronson2005quantum}, the fine-graining process runs more smoothly when we instead use the counting class $\co\C_=\P$, \hl{which was first utilized in the context of QCS in \cite{fujii2014impossibility,fujii2015power}}. The class $\co\C_=\P$ is the set of languages for which there exists an efficient classical probabilistic algorithm that accepts with probability exactly 1/2 only on inputs not in the language. It can be related to quantum circuits via non-determinism: $\co\C_=\P = \NQP$ \cite{fenner1999determining}, where $\NQP$, a quantum analogue of $\NP$, is the class of languages for which there exists an efficient quantum circuit that has non-zero acceptance probability only on inputs in the language. Moreover, this equality still holds when we restrict $\NQP$ to quantum computations with IQP, QAOA, or boson sampling circuits. Additionally, it is known that if $\co\C_=\P$ were to be equal to $\NP$, the $\PH$ would collapse to the second level \cite{toda1992counting,fenner1999determining}. Thus, by making the assumption that there is a problem in $\co\C_=\P$ that does not admit a non-deterministic polynomial-time solution, i.e.~$\co\C_=\P \not\subset \NP$, we conclude that there does not exist a classical simulation algorithm that samples from the output distribution of IQP or QAOA circuits up to constant multiplicative error, for this would imply $\NP = \NQP = \co\C_=\P$, contradicting the assumption.

To make a fine-grained version of this statement, we pick a specific $\co\C_=\P$-complete problem related to the number of zeros of degree-3 polynomials over the field $\mathbb{F}_2$, which we call $\texttt{poly3-NONBALANCED}$, and we assume that
\texttt{poly3-NONBALANCED} does not have a non-deterministic algorithm running in fewer than $T(n)$ time steps for an explicit function $T(n)$. We choose $T(n)=2^{\ptNc n-1}$ for a fixed constant $\ptNc $ and call this conjecture the degree-3 polynomial Non-deterministic Strong Exponential Time Hypothesis (poly3-NSETH($\ptNc $)). It is clear that poly3-NSETH($\ptNc $) is false when $\ptNc  > 1$ due to the brute-force deterministic counting algorithm that iterates through each of the $2^n$ possible inputs to the function $f$. However, a non-trivial algorithm by Lokshtanov,  Paturi, Tamaki, Williams and Yu (LPTWY)~\cite{lokshtanov2017beating} gives a better-than-brute-force, deterministic algorithm for counting zeros to systems of degree-$k$ polynomial that rules out poly3-NSETH($\ptNc $) whenever $\ptNc  > 0.9965$.  It may be possible to improve this constant while keeping the same basic method but, as we discuss in Appendix~\ref{app:algo}, we expect any such improvements to be small.  Refuting poly3-NSETH($\ptNc $) for values of $\ptNc $ substantially below 1 would require the development of novel techniques. 

Assuming poly3-NSETH($\ptNc $), in Section \ref{sec:multerror} we derive a fine-grained lower bound on the runtime for any \hl{multiplicative-error} classical simulation algorithm for QAOA and IQP circuits with $n$ qubits. In essence, what we show is that a classical simulation algorithm that beats our lower bounds could be used as a subroutine to break poly3-NSETH($\ptNc $).  Then, we repeat the process for boson sampling circuits with $n$ photons by replacing poly3-NSETH($\ptNc $) with a similar conjecture we call per-int-NSETH($\piNc $) involving the permanent of $n \times n$ integer-valued matrices. In this case, however, there is no known algorithm that can rule out any values of $\piNc $ when $\piNc < 1$. Accordingly, the lower bound we derive on the simulation time of boson sampling circuits when we take $\piNc  = 0.999$ is essentially tight, matching the runtime of the naive simulation algorithm up to factors logarithmic in the total runtime. 
Recently, a similar approach was applied to obtain lower bounds on the difficulty of computing output probabilities of quantum circuits based on the SETH conjecture~\cite{monotone-sim-LB}.  Our work has the disadvantage of using a less well-studied and possibly stronger conjecture (poly3-NSETH($\ptNc $)) but the advantage of ruling out classical algorithms for sampling, i.e.~for the same tasks performed by the quantum computer. In Section \ref{sec:evidence}, we discuss evidence for our conjectures poly3-NSETH($\ptNc$) and per-int-NSETH($\piNc$), and discuss their relationship to other proposed fine-grained conjectures.

\hl{However, realistic near-term quantum devices, which are subject to non-negligible noise, sample from a distribution that differs from ideal with constant \textit{additive} error, not constant multiplicative error, and ruling out additive-error classical simulation algorithms presents additional technical challenges. Previous work ruled out polynomial-time simulation algorithms of this type by conjecturing the non-collapse of the $\PH$ and additionally that certain problems we know to be hard for the $\PH$ in the worst case are also hard in the average case \cite{aaronson2011computational,bremner2016average,bouland2018quantum,bouland2017quantum}. In Section \ref{sec:adderror}, we present an argument that also rules out some exponential-time additive-error simulation algorithms for IQP and QAOA circuits based on a fine-grained conjecture we call poly3-ave-SBSETH($\ptSBc$). Our analysis may be viewed generally as a fine-grained version of previous work; however, our approach has crucial novel elements that go beyond simply being fine-grained. Unlike previous work on additive-error QCS, our analysis bypasses the need for Stockmeyer's theorem (which would incur significant fine-grained overhead) and as a result it is perhaps conceptually simpler. Additionally, in order to give evidence for poly3-ave-SBSETH($\ptSBc$), which is both fine-grained and average-case, we make several technical observations which could be of independent interest. We prove an average-case black-box query lower bound, and we give a self-reduction from the worst case to a hybrid average/worst-case for the problem of computing the number of zeros of a degree-3 polynomial over $\mathbb{F}_2$. Along the way, we also provide a more complete picture of the distribution of the number of zeros for uniformly random degree-3 polynomials over $\mathbb{F}_2$ --- extending work that showed the distribution anticoncentrates \cite{bremner2016average} --- by formally proving that all its moments approach those of a Gaussian distribution as the number of variables in the polynomial increases.}

Finally in Section \ref{sec:numqubits}, we show how these lower bounds lead to estimates for the number of qubits that would be sufficient for quantum supremacy under our conjectures. We conclude that classically simulating (up to multiplicative error) general IQP circuits with $93/\ptNc $ qubits, QAOA circuits with $185/\ptNc $ qubits, or boson sampling circuits with $93/\piNc $ photons would require one century for today's fastest supercomputers, which we consider to be a good measure of intractability. \hl{For additive error we may replace $\ptNc$ with $\ptSBc$.} We believe values for $\ptNc$, $\ptSBc$, and $\piNc $, leading to plausible conjectures are $\ptNc = 1/2$, which is substantially below best known better-than-brute-force algorithms, \hl{$\ptSBc=1/2$, which matches best known algorithms}, and $\piNc  = 0.999$, which is roughly equivalent to asserting that the best known brute force algorithm is optimal up to subexponential factors. The relative factor of two in the number of qubits for QAOA circuits comes from a need for ancilla qubits in constructing a QAOA circuit to solve the \texttt{poly3-NONBALANCED} problem. However, these circuits must have $10^4$ to $10^7$ gates for these bounds to apply. \hl{Under the additional assumption that the complexity of any simulation algorithm would scale linearly with the number of gates, we conclude that circuits with only 500 gates and 208 IQP qubits, 420 QAOA qubits, or 98 photons would be sufficient for QCS.}  By comparison, factoring a 1024-bit integer, which is sufficiently beyond the capabilities of today's classical computers running best known algorithms, has been estimated to require more than 2000 qubits and on the order of $10^{11}$ gates using Shor's algorithm \cite{roetteler2017quantum}.

\section{Background}
\subsection{Counting complexity and quantum computational supremacy}

The computational assumptions underlying our work and many previous QCS results utilize a relationship between quantum circuits and counting complexity classes that is not seen to exist for classical computation. To understand this relationship, we quickly review several definitions and key results.

Let $n \geq 1$, and $f:\{0,1\}^n \rightarrow \{0,1\}$ be a Boolean function. The \textit{gap} of $f$ is defined to be
\begin{equation}\label{eq:gapdef}
\gap(f)= \sum_{x\in \{0,1\}^n} (-1)^{f(x)}.
\end{equation}

Note that the number of zeros of $f$ may be written in terms of the gap, as follows:
\begin{equation}
|\{x\in \{0,1\}^n:f(x)=0\}|= \tfrac 12(2^n+ \gap (f)).
\end{equation}

Various complexity classes may be defined in terms of the gap.
The class $\#\P$ is defined to be the class of functions $f:\{0,1\}^* \rightarrow \mathbb N$ for which there exists a polynomial $p$ and a polynomial-time Turing machine $M$ such that for all $x \in \{0,1\}^*$,
\begin{eqnarray}
f(x) &=& |\{ y \in \{0,1\}^{p(|x|)}: M(x,y) = 0\}| \nn
&=&
\frac 12(2^{p(|x|)} + \gap( M(x,\cdot))).
\end{eqnarray}
Thus, $\#\P$ contains functions that \textit{count} the number of zeros of a polynomial-time computable Boolean function.

A language $L$ is in $\PP$ if there exists a polynomial $p$ and a polynomial-time Turing machine $M$ such that for all $x \in \{0,1\}^*$,
\begin{eqnarray}
x \in L  &\iff & |\{ y \in \{0,1\}^{p(|x|)}: M(x,y) = 0\}| \nn
&&< |\{ y \in \{0,1\}^{p(|x|)}: M(x,y) = 1\}| \nn
&\iff & \gap( M(x,\cdot))<0.
\end{eqnarray}
  
The class $\NP$ is defined similarly, but where
\begin{eqnarray}
x \in L  &\iff & |\{ y \in \{0,1\}^{p(|x|)}: M(x,y) = 1\}| \neq 0 \nn
&\iff & \gap( M(x,\cdot)) \neq 2^{p(|x|)},
\end{eqnarray}
and the class $\co\C_=\P$, where
\begin{eqnarray}
x \in L  &\iff & |\{ y \in \{0,1\}^{p(|x|)}: M(x,y) = 0\}| \nn &&\neq |\{ y \in \{0,1\}^{p(|x|)}: M(x,y) = 1\}| \nn
&\iff & \gap( M(x,\cdot)) \neq 0.
\end{eqnarray}

By interpreting $M$ as a probabilistic algorithm and $y$ as the random string of bits used by $M$, we can redefine $\NP$, $\PP$, and $\co\C_=\P$ as the classes of languages for which there exists a polynomial-time Turing machine $M$ whose acceptance probability on input $x$ is non-zero, at least 1/2, and not equal to 1/2, respectively, only when $x$ is in the language. 

Of these classes, only $\NP$ is known to be part of the polynomial hierarchy ($\PH$), which is a class composed of an infinite number of levels generalizing the notion of $\NP$. Furthermore, the other three classes, $\#\P$, $\PP$, and $\co\C_=\P$, which we refer to as counting classes, are known to be hard for the $\PH$: Toda's theorem \cite{toda1991pp} tells us that a $\#\P$ or $\PP$ oracle is sufficient to solve any problem in the $\PH$ in polynomial time, and other work by Toda and Ogiwara \cite{toda1992counting} shows that there is a randomized reduction from any problem in the $\PH$ to a $\co\C_=\P$ problem. Stated another way, if $\PP$ or $\co\C_=\P$ were to be contained in a level of the $\PH$, the $\PH$ would necessarily collapse, meaning that the entire $\PH$ would be contained within one of its levels. For example, if $\P=\NP$, then the entire $\PH$ would be equal to $\P$, its zeroth level. The assumption that the $\PH$ does not collapse is thus a stronger version of the statement $\P \neq \NP$, and it is widely believed for similar reasons.

Furthermore, these counting classes can be connected to quantum circuits. Aaronson showed that $\PP = \mathsf{PostBQP}$ \cite{aaronson2005quantum}, where $\mathsf{PostBQP}$ is the set of problems solvable by quantum circuits that have the (unphysical) power to choose, or \textit{postselect} the value of measurement outcomes that normally would be probabilistic. By contrast, classical circuits endowed with this same power form the class $\mathsf{PostBPP}$ which is known to lie in the third level of the $\PH$ \cite{han1997threshold}.

The story is similar for $\co\C_=\P$. It was shown that $\co\C_=\P = \NQP$ \cite{fenner1999determining}, where $\NQP$ is the quantum generalization of the class $\NP$, defined to be the set of languages $L$ for which there exists a polynomial-time uniformly generated\footnote{We say a circuit family
$\{C_x\}$
is \textit{uniformly generated} if there exists a polynomial-time Turing machine that
on input $x$ outputs the description of the circuit $\{C_x\}$.} family of circuits $\{C_x\}$ such that for all strings $x$, $x$ is the language $L$ if and only if the quantum circuit $C_x$ has a non-zero acceptance probability.  This can also be thought of as $\PostBQP$ with one-sided error.
If there existed an efficient classical algorithm to produce samples from the output distribution of quantum circuits up to constant multiplicative error, then $\NP$ would be equal to $\NQP$, and therefore to $\co\C_=\P$, leading to the collapse of the $\PH$ (to the second level \cite{fenner1999determining,fujii2014impossibility,fujii2015power}).

We refer to simulation algorithms of this type as approximate simulation algorithms with multiplicative error. Stated precisely, if $Q(y)$ is the probability that a quantum circuit produces the output $y$, then a classical simulation algorithm has multiplicative error $\epsilon$ if its probability of producing outcome $y$ is $P(y)$ and
\begin{equation}
\lvert P(y)-Q(y) \rvert \leq \epsilon Q(y)
\end{equation}
for all possible outcomes $y$.

This contrasts with a simulation algorithm with additive error $\epsilon$, for which 
\begin{equation}\label{eq:additiveerror}
\sum_y\lvert P(y)-Q(y) \rvert \leq \epsilon .
\end{equation}

The argument we have sketched only rules out polynomial-time simulation algorithms with multiplicative error. \hl{ In Section \ref{sec:adderror}, we discuss arguments \cite{aaronson2011computational, bremner2016average,bouland2017quantum, bouland2018quantum} that rule out additive-error simulation algorithms. These are more complex and require imposing additional conjectures.}

\subsection{IQP Circuits}
The previous argument only considers simulation algorithms for arbitrary quantum circuits, but the result can be extended to also rule out efficient simulation algorithms for subclasses of quantum circuits. An example of one such subclass is the set of instantaneous quantum circuits \cite{shepherd2009temporally,bremner2010classical}. Problems that can be solved by instantaneous quantum circuits with a polynomial number of gates form the Instantaneous Quantum Polynomial-time (IQP) complexity class, and we will refer to the circuits themselves as IQP circuits. There are several equivalent ways to define the IQP model; we do so as follows.

An IQP circuit is a circuit where a Hadamard gate is applied to each qubit at the beginning and end of the computation, but the rest of the gates, which we refer to as the \textit{internal} gates, are diagonal. Each qubit begins in the $\ket 0$ state but is immediately sent to the $\ket + = H\ket 0 = (\ket 0 + \ket 1)/\sqrt{2}$ state under the Hadamard operation, and each qubit is measured at the end of the computation in the computational basis. All of the internal diagonal gates commute, and therefore can be implemented in any order. An example of an IQP circuit is shown in Figure \ref{fig:IQPcircuitexample}.

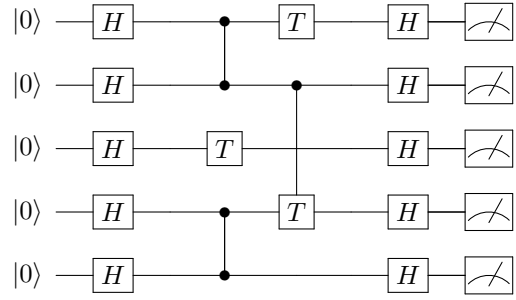
\begin{figure}[ht]
\centerline{
\Qcircuit @C=1.4em @R=1.0em {
\lstick{\ket 0} & \gate{H} & \qw & \ctrl{1} & \gate{T} & \qw & \gate{H} & \meter \qw \\
\lstick{\ket 0} & \gate{H} & \qw & \ctrl{-1} & \ctrl{2} & \qw & \gate{H} & \meter \qw \\
\lstick{\ket 0} & \gate{H} & \qw & \gate{T} & \qw & \qw & \gate{H} & \meter \qw\\
\lstick{\ket 0} & \gate{H} & \qw & \ctrl{1} & \gate{T} & \qw & \gate{H} & \meter \qw \\
\lstick{\ket 0} & \gate{H} & \qw & \ctrl{-1} & \qw & \qw & \gate{H} & \meter \qw 
}
}
\caption{\label{fig:IQPcircuitexample} Example of an IQP circuit. Each qubit must begin and end with a Hadamard gate, and all internal gates must be diagonal in the $Z$ basis. The vertical lines indicate controlled operations, and $T$ refers to the gate $T = \exp(-i\pi Z/8)$.}
\end{figure}


\subsection{Quantum approximate optimization algorithm (QAOA) circuits}

Another class of circuits that is not efficiently simulable classically if the polynomial hierarchy does not collapse are quantum approximate optimization algorithm (QAOA) circuits \cite{farhi2014quantum, farhi2016quantum}, which have some similarities with IQP circuits. In a sense, QAOA can be thought of as multiple rounds of instantaneous operations.

A QAOA circuit operates on $n$ qubits, which begin in the $\ket 0$ state but are immediately hit with a Hadamard gate, as in the IQP model (note that \cite{farhi2014quantum, farhi2016quantum} chose a different but equivalent convention). An integer $p$, and angles $\gamma_i, \beta_i$ for $i = 1,2, \ldots, p$ are chosen. A diagonal Hamiltonian $C$ is specified such that $C = \sum_\alpha C_\alpha$ where each $C_\alpha$ is a constraint on a small subset of the bits, meaning for any bit string $z$, either $C_\alpha\ket z = 0$ or $C_\alpha \ket z = \ket z$ and only a few bits of $z$ are involved in determining which is the case. We define the Hamiltonian $B = \sum_{j=1}^n X_j$, where $X_j$ is the Pauli-$X$ operation applied to qubit $j$, and let $U(H,\theta) = \exp(-iH\theta)$. The remainder of the circuit consists of applying $U(C,\gamma_1)$, $U(B,\beta_1)$, $U(C,\gamma_2)$, $U(B,\beta_2)$, etc. for a total of $2p$ \hl{operations}. Finally the qubits are measured in the computational basis. The general framework for a QAOA circuit is depicted in Figure \ref{fig:QAOAcircuitexample}.

\begin{figure}[ht]
\centerline{
\Qcircuit @C=1.0em @R=1.0em {
\lstick{\ket 0} & \gate{H} & \multigate{4}{\rotatebox{270}{$U(C,\gamma_1)$}} & \multigate{4}{\rotatebox{270}{$U(B,\beta_1)$}} & \push{\;  \ldots   \;} \qw & \multigate{4}{\rotatebox{270}{$U(C,\gamma_p)$}} & \multigate{4}{\rotatebox{270}{$U(B,\beta_p)$}} & \meter \qw \\
\lstick{\ket 0} & \gate{H}  & \ghost{\rotatebox{270}{$U(C,\gamma_1)$}} & \ghost{\rotatebox{270}{$U(B,\beta_1)$}}  & \push{\;  \ldots   \;} \qw & \ghost{\rotatebox{270}{$U(C,\gamma_p)$}} & \ghost{\rotatebox{270}{$U(B,\beta_p)$}} & \meter \qw \\
\lstick{\ket 0} & \gate{H} & \ghost{\rotatebox{270}{$U(C,\gamma_1)$}} & \ghost{\rotatebox{270}{$U(B,\beta_1)$}} &  \push{\;  \ldots   \;} \qw & \ghost{\rotatebox{270}{$U(C,\gamma_p)$}} & \ghost{\rotatebox{270}{$U(B,\beta_p)$}} & \meter \qw\\
\lstick{\ket 0} & \gate{H} & \ghost{\rotatebox{270}{$U(C,\gamma_1)$}} & \ghost{\rotatebox{270}{$U(B,\beta_1)$}} &  \push{\;  \ldots   \;} \qw & \ghost{\rotatebox{270}{$U(C,\gamma_p)$}} & \ghost{\rotatebox{270}{$U(B,\beta_p)$}} & \meter \qw \\
\lstick{\ket 0} & \gate{H}  & \ghost{\rotatebox{270}{$U(C,\gamma_1)$}} & \ghost{\rotatebox{270}{$U(B,\beta_1)$}} &  \push{\;  \ldots   \;} \qw & \ghost{\rotatebox{270}{$U(C,\gamma_p)$}} & \ghost{\rotatebox{270}{$U(B,\beta_p)$}} & \meter \qw
}
}
\caption{\label{fig:QAOAcircuitexample} Framework for a QAOA circuit. Each qubit begins with a Hadamard gate, and then $2p$ \hl{operations} are performed alternating between applying Hamiltonian $C$ and applying Hamiltonian $B$.}
\end{figure}
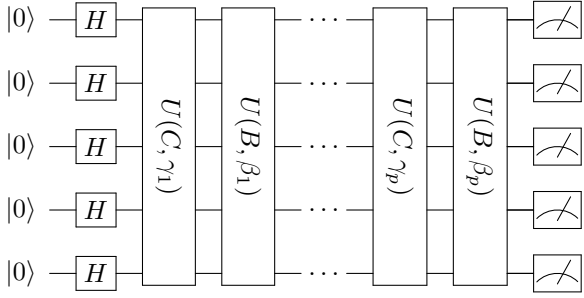

Since $U(C,\gamma_j) = \prod_\alpha U(C_\alpha,\gamma_j)$, the gate $U(C,\gamma_j)$ can be performed as a sequence of commuting gates that perform the unitaries associated with the constraints $C_\alpha$. Thus each $U(C,\gamma_j)$ could form the internal portion of an instantaneous quantum circuit.

Importantly, since the operator $C$ is a sum of many constraints, it represents a constraint satisfaction problem. For all bit strings $z$, $C\ket z = \lambda_z \ket z$, and a common problem asks us to find the maximum value of $\lambda_z$. There is evidence that QAOA circuits might be able to approximate this optimum value of $\lambda_z$ more efficiently than classical algorithms when $p > 1$ \cite{farhi2014quantum}, so in comparison to IQP circuits, QAOA circuits might have more practical value. 

\subsection{Boson sampling circuits}

The IQP and QAOA models are restrictions on the more general quantum circuit model. But quantum circuits are not the only model for computation on a quantum device. Linear quantum optical experiments, for example, can be modeled as a system of beam splitters and phase shifters acting upon identical photons existing in a certain number of different optical modes. Like the IQP and QAOA models, the linear optical model is not believed to be as powerful as the general quantum circuit model, but under the assumption that the $\PH$ does not collapse, it has been shown that classical simulation up to constant multiplicative error requires more than polynomial time \cite{aaronson2011computational}.

The basic framework \cite{aaronson2011computational} for the linear optical model is as follows. Suppose the system has $n$ photons among $m$ modes. A state of the system is a superposition $\sum_R \alpha_R \ket R$, where each $\ket R$ corresponds to a configuration of the $n$ photons among the $m$ modes, represented by the tuple $R=(r_1, \ldots, r_m)$ where each $r_i$ is a non-negative integer and $\sum_ir_i = n$. 

Passing these photons through a linear optical network composed of beam splitters and phase shifters, which we call a \textit{boson sampling circuit}, gives rise to a transformation on this Hilbert space. Valid transformations can be written as $\phi(U)$, where $U$ is any $m \times m$ unitary and $\phi$ is a fixed $\binom{n+m-1}{n}$-dimensional representation of $\mathcal{U}(m)$. The unitary $U$ fully describes the choice of circuit, and any $U$ can be exactly implemented using only $m(m+1)/2$ total beam splitters and phase shifters \cite{reck1994experimental}. We can define $\phi(U)$ by its matrix elements $\bra R \phi(U) \ket{R'}$, which will be related to the permanent of $n \times n$ matrices formed from $U$. The permanent of an $n \times n$ matrix $A$ is given by the formula

\begin{equation}\label{eq:permanent}
\text{Per}(A) = \sum_{\sigma \in \mathcal{S}_n}\prod_{i=1}^n A_{i,\sigma(i)},
\end{equation}
where $\mathcal{S}_n$ is the group of permutations on $\{1,\ldots, n\}$. Then, the matrix elements are

\begin{equation}
\bra R \phi(U)\ket{R'} = \frac{\text{Per}(U_{(R,R')})}{\sqrt{r_1!\ldots r_m!r'_1!\ldots r'_m!}},
\end{equation}
where $U_{(R,R')}$ is the $n \times n$ matrix formed by taking $r_i$ copies of row $i$ and $r'_j$ copies of column $j$ from $U$ \cite{aaronson2011computational}. As an example, if $n=3$, $m=2$, $R = (2,1)$, $R' = (1,2)$, and

\begin{equation}
U=\frac{1}{\sqrt{2}}
\begin{bmatrix}
    1     &  i\\
    -i      & -1
\end{bmatrix},
\end{equation}
then

\begin{equation}
U_{(R,R')}=\frac{1}{\sqrt{2}}
\begin{bmatrix}
    1     &  i & i\\
    1      & i & i \\
    -i	& 	-1	& -1
\end{bmatrix}.
\end{equation}

This sampling task is called \texttt{BosonSampling} since it could (in theory) be applied to any system of not only photons but any non-interacting bosons.

\section{Simulation algorithms with multiplicative error}\label{sec:multerror}

The goal of this section will be to perform a fine-grained analysis that rules out certain multiplicative-error classical simulation algorithms for the three previously mentioned families of quantum circuits.

\subsection{Degree-3 polynomials and the problem \texttt{poly3-NONBALANCED}} \label{sec:deg3polys}

Each of the three quantum circuit families that we have defined are especially amenable to this analysis due to their natural connection to \textit{specific} counting problems. 

The specific counting problem we will use for our analysis of IQP and QAOA we call \texttt{poly3-NONBALANCED}. The input to the problem is a polynomial over the field $\mathbb{F}_2$ in $n$ variables with degree at most 3 and no constant term. Since the only non-zero element in $\mathbb{F}_2$ is 1, every term in the polynomial has coefficient 1. One example could be $f(z) = z_1 + z_2 + z_1z_2 + z_1z_2z_3$. Evaluating $f$ for a given string $z$ to determine whether $f(z) = 0$ or $f(z) = 1$ can be done efficiently, but since there are $2^n$ possible strings $z$, the brute-force method takes exponential time to count the number of strings $z$ for which $f(z) = 0$, or equivalently, to compute $\gap(f)$ where $\gap$ is given by Eq.~\eqref{eq:gapdef}. LPTWY~\cite{lokshtanov2017beating} gave a deterministic algorithm for computing the gap of degree-3 polynomials in time scaling slightly better than brute force, but it still has exponential time --- $\text{poly}(n)2^{0.9965n}$. 

The question posed by \texttt{poly3-NONBALANCED} is whether $\gap(f) \neq 0$, that is, whether $f$ has the same number of 0 and 1 outputs. Thus, \texttt{poly3-NONBALANCED} is in the class $\co\C_=\P$. 

The problem \texttt{poly3-NONBALANCED} is a natural problem to work with because there is an elegant correspondence between degree-3 polynomials and IQP circuits involving Pauli $Z$ gates, controlled-$Z$ ($CZ$) gates, and controlled-controlled-$Z$ ($CCZ$) gates \cite{montanaro2017quantum}. Specifically, if $f$ is degree 3 then let
\be U'_f = \sum_{z\in \bbF_2^n} (-1)^{f(z)} \ket z \bra z
\ee
and let $U_f = H^{\otimes n} U'_f H^{\otimes n}$.
We can implement an IQP circuit $C_f$ that evaluates to $U_f$ as follows: if the term $z_i$ appears in $f$, then within the diagonal portion of $C_f$ we perform the gate $Z$ on qubit $i$; if the term $z_iz_j$ appears, we perform the $CZ$ gate between qubits $i$ and $j$; and if the term $z_iz_jz_k$ appears, we perform the $CCZ$ gate between the three qubits. For example, for the polynomial $f(z) = z_1 + z_2 + z_1z_2 + z_1z_2z_3$, the circuit $\mathcal{C}_f$ is  shown in Figure \ref{fig:polynomialcircuitexample}.

\begin{figure}[ht]
\centerline{
\Qcircuit @C=1.4em @R=1.0em {
\lstick{\ket 0} & \gate{H} & \gate{Z} & \qw & \ctrl{1} & \ctrl{2} & \gate{H} & \qw \\
\lstick{\ket 0} & \gate{H} & \qw & \gate{Z} & \ctrl{-1} & \ctrl{-1} & \gate{H} &  \qw \\
\lstick{\ket 0} & \gate{H} & \qw & \qw & \qw & \ctrl{-1} & \gate{H} & \qw
}
}
\caption{\label{fig:polynomialcircuitexample} IQP circuit $\mathcal{C}_f$ corresponding to the degree-3 polynomial $f(z) = z_1 + z_2 + z_1z_2 + z_1z_2z_3$. The unitary $U_f$ implemented by the circuit has the property that $\bra{\bar{0}} U_f \ket{\bar{0}} = \gap(f)/{2^n}$ where in this case $n = 3$. }
\end{figure}
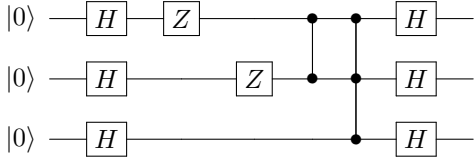

The crucial property of this correspondence is that $\bra{\bar 0} U_f \ket{\bar 0} = \frac{\gap(f)}{2^n}$, where $\ket{\bar 0}$ is shorthand for the starting $\ket 0 ^{\otimes n}$ state. This is easily seen by noting that the initial set of $H$ gates generates the equal superposition state $\ket B =\sum_{x=0}^{2^n-1} \ket x/\sqrt{2^n}$, so $\bra{\bar 0} U_f \ket{\bar 0} = \bra B U'_f \ket B$ where $U'_f$ is implemented by the internal diagonal portion of $\mathcal{C}_f$. Since $U'_f$ applies a $(-1)$ phase to states $\ket z$ for which $f(z)=1$, $\bra{\bar 0} U_f \ket{\bar 0} = \sum_{y=0}^{2^n-1}\sum_{x=0}^{2^n-1} (-1)^{f(x)}\braket{y}{x}/2^n = \sum_{x=0}^{2^n-1} (-1)^{f(x)}/2^n = \gap(f)/2^n$. Thus, $\gap(f)$ can be computed by calculating the amplitude of the $\ket{\bar 0}$ state produced by the circuit. If we define acceptance to occur when $\ket{\bar 0}$ is obtained upon measurement, then the circuit $C_f$ has non-zero acceptance probability only when $\gap(f) \neq 0$. This illustrates an explicit $\NQP$ algorithm for \texttt{poly3-NONBALANCED}, which was guaranteed to exist since $\NQP = \co\C_=\P$. 

Also crucial to note is that \texttt{poly3-NONBALANCED} is complete for the class $\co\C_=\P$. This is shown by adapting Montanaro's proof \cite{montanaro2017quantum} that computing $\gap(f)$ for a degree-3 polynomial $f$ over $\mathbb{F}_2$ is $\#\P$-complete. In that proof, Montanaro reduces from the problem of computing $\gap(g)$ for an arbitrary Boolean function $g$, which is $\#\P$-complete by definition. Since whether $\gap(g) \neq 0$ is $\co\C_=\P$-complete by definition, and the reduction has $\gap(g) \neq 0$ if and only if $\gap(f) \neq 0$, this also shows that \texttt{poly3-NONBALANCED} is $\co\C_=\P$-complete. One immediate consequence of this fact is that $\mathsf{NIQP}$, the class $\NQP$ restricted to quantum circuits of the IQP type, is equal to $\co\C_=\P$ (and hence $\NQP$), since the circuit $C_f$ is an $\mathsf{NIQP}$ solution to a $\co\C_=\P$-complete problem. 

\subsection{The permanent and the problem \texttt{per-int-NONZERO}}

In close analogy to the correspondence between degree-3 polynomials and IQP circuits composed of $Z$, $CZ$, and $CCZ$ gates, there is a correspondence between matrix permanents and boson sampling circuits.

We have already seen in the definition of the linear optical model that any amplitude in a boson sampling circuit on $n$ photons can be recast as the permanent of an $n \times n$ matrix, but the converse is also true: the permanent of any $n \times n$ matrix can be encoded into the amplitude of a boson sampling circuit on $n$ photons, up to a known constant of proportionality. 

To see how this works, given an $n \times n$ complex matrix $A$, we will construct a $2n \times 2n$ unitary matrix $U_A$ whose upper-left $n\times n$ block is equal to $cA$ for some $c>0$.  If we take $R=R'=(1^n,0^n)$ (i.e.~1 repeated $n$ times, followed by 0 repeated $n$ times), then we will have $\Per(U_{A(R,R')}) = c^n \Per(A)$.  Thus $\Per(A)$ is proportional to a particular boson sampling amplitude with $c$ an easily computable proportionality constant.

We can choose $c$ to be $\leq \|A\|^{-1}$, where $\|A\|$ is the largest singular value of $A$.  (Note that if we want the proportionality to hold uniformly across some class of $A$, we should choose $c$ to satisfy $c\|A\| \leq 1$ for all $A$ in this class.)
Then $\{cA, \sqrt{I_n - c^2 A^\dag A}\}$ are Kraus operators for a valid quantum operation, where $I_n$ is the $n \times n$ identity matrix, and
\be
\begin{bmatrix}
cA \\  \sqrt{I_n - c^2 A^\dag A} 
\end{bmatrix}
\ee
is an isometry.  We can extend this isometry to the following unitary.
\begin{equation}
U_A = 
\begin{bmatrix}
    cA     &  D \\
    \sqrt{I_n - c^2 A^\dag A}      & -\frac{1}{\sqrt{I_n-c^2A^\dag A}} c A^\dag D
\end{bmatrix},
\end{equation}
where $D = \left(I_n+c^2A(I_n-c^2 A^\dag A)^{-1}A^\dag\right)^{-1/2}$,
which is well defined since the argument of the inverse square root is positive definite and Hermitian. Thus the permanent of an arbitrary $n \times n$ matrix can be encoded into a boson sampling circuit with $n$ photons and $2n$ modes.

The matrix permanent is playing the role for boson sampling circuits that the gap of degree-3 polynomials played for IQP circuits with $Z$, $CZ$, and $CCZ$ gates; thus, it is natural to use the computational problem of determining if the permanent of an integer-valued matrix is not equal to 0, which we call \texttt{per-int-NONZERO}, in place of \texttt{poly3-NONBALANCED}.

In fact, \texttt{per-int-NONZERO} and \texttt{poly3-NONBALANCED} have several similarities. For example, like computing the number of zeros of a degree-3 polynomial, computing the permanent of an integer-valued matrix is $\#\P$-complete, a fact famously first demonstrated by Valiant \cite{valiant1979complexity}, and later reproved by Aaronson \cite{aaronson2011linear} using the linear optical framework. This completeness extends to \texttt{per-int-NONZERO}, which we show in Appendix \ref{app:coC=Phard} is $\co\C_=\P$-complete by reduction from $\texttt{poly3-NONBALANCED}$.

Additionally, for both problems, the best known algorithm is exponential and has runtime close to or equaling $2^n$. While \texttt{poly3-NONBALANCED} can be solved in poly$(n)2^{0.9965n}$ time, the best known algorithm for computing the permanent \cite{bjorklund2018generalized} requires $2^{n-\Omega(\sqrt{n/\log\log(n)})}$ deterministic time, which is only a subexponential improvement over the naive algorithm that utilizes Ryser's formula for the permanent \cite{ryser1963combinatorial} and requires at least $n2^n$ basic arithmetic operations. Using Ryser's formula is an improvement over the $O(n!)$ time steps implied by Eq.~\eqref{eq:permanent}, but its scaling is reminiscent of that required to solve a $\#\P$ problem by brute force. In principle it is possible that a faster algorithm exists for $\texttt{per-int-NONZERO}$, where we do not care about the actual value of the permanent, only whether it is nonzero, but  such methods are only known in special cases, such as nonnegative matrices.

Crucially, our construction shows that boson sampling circuits can solve \texttt{per-int-NONZERO} in non-deterministic polynomial time, since given $A$ we have shown how to construct a circuit corresponding to unitary $U_A$ with acceptance probability that is non-zero only when $\Per(A)$ is non-zero. This shows that $\mathsf{NBosonP}$, the linear optical analogue of $\mathsf{NIQP}$, is equal to $\co\C_=\P$ and by extension, to $\NQP$. 

\subsection{Lower bounds for multiplicative-error simulations}

\subsubsection{For IQP Circuits}

In the previous section, we described how to construct an $n$-qubit IQP circuit $C_f$ corresponding to a degree-3 polynomial $f$ over $n$ variables such that the acceptance probability of $C_f$ is non-zero if and only if $\gap(f)\neq 0$. The number of terms in $f$, and hence the number of internal diagonal gates in $C_f$ is at most
\begin{align}
g_1(n) &= \binom{n}{3}+\binom{n}{2}+\binom{n}{1} \nonumber\\
&= (n^3+ 5n)/6.
\end{align}
Now, suppose we had a classical algorithm that, for any $q$, produces samples from the output distribution of any IQP circuit with $q$ qubits and $g_1(q)$ internal gates, up to some multiplicative error constant, in $s_1(q)$ time steps for some function $s_1$. Throughout, we will assume all classical algorithms run in the Word RAM model of computation. 

Using this algorithm to simulate the IQP circuit $C_f$ generates a non-deterministic classical algorithm for $\texttt{poly3-NONBALANCED}$ running in $s_1(n)$ time steps. That is, the classical probabilistic algorithm that results from this simulation accepts on at least one computational path if only if the function $f$ is not balanced. 

Now, we impose a fine-grained version of the non-collapse assumption, which we motivate later in the section.

\begin{conj}\label{conj:poly3NSETH}[poly3-NSETH($\ptNc $)]

Any non-deterministic classical algorithm (in the Word RAM model of computation) that solves \texttt{poly3-NONBALANCED} requires in the worst case $2^{\ptNc n-1}$ time steps, where $n$ is the number of variables in the \texttt{poly3-NONBALANCED} instance. 
\end{conj}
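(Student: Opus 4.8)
Conjecture \ref{conj:poly3NSETH} is a hardness hypothesis rather than a theorem, so what is called for at this point is not a proof of the conjecture itself but (a) a delimitation of the range of $\ptNc$ in which it survives known algorithms and (b) the derivation it was set up to enable — a conditional exponential lower bound on $s_1(n)$ — and I would supply both immediately after the statement.

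For (a): the naive counting algorithm that evaluates $f$ on all $2^n$ inputs and accumulates $\gap(f)$ runs in $\mathrm{poly}(n)\,2^{n}$ time and is, trivially, a non-deterministic algorithm, so it refutes poly3-NSETH($\ptNc$) for every $\ptNc>1$; invoking the LPTWY algorithm \cite{lokshtanov2017beating} for counting zeros of degree-$3$ polynomials over $\mathbb{F}_2$, which runs in $\mathrm{poly}(n)\,2^{0.9965 n}$, pushes the refuted range to $\ptNc>0.9965$. One then records that no non-deterministic algorithm with exponent substantially below $1$ is known and defers to Appendix \ref{app:algo} the structural reasons to expect further polynomial-method improvements to be small. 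I would phrase everything about the choice $\ptNc=1/2$ as plausibility — noting in particular that nondeterminism is not known to help on a $\co\C_=\P$-complete, counting-flavoured problem — and claim nothing stronger.

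For (b), the part that genuinely admits a short proof: suppose a classical algorithm samples, up to a fixed multiplicative error $\epsilon<1$, from the output distribution of every IQP circuit on $q$ qubits with $g_1(q)$ internal gates in time $s_1(q)$. Given a degree-$3$ polynomial $f$ on $n$ variables with no constant term, construct the circuit $C_f$ of Section \ref{sec:deg3polys}: it has $n$ qubits and at most $g_1(n)$ internal $Z$/$CZ$/$CCZ$ gates (pad with trivial diagonal gates if $f$ has fewer than $g_1(n)$ terms), and satisfies $\bra{\bar 0} U_f \ket{\bar 0} = \gap(f)/2^n$, so with acceptance defined as observing $\ket{\bar 0}$ the true acceptance probability $Q(\bar 0)=\gap(f)^2/4^n$ is nonzero iff $f$ is not balanced. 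The multiplicative bound makes $P(\bar 0)$ vanish exactly when $Q(\bar 0)$ does (since $\epsilon<1$), so running the simulator on $C_f$, treating its internal random bits as the nondeterministic witness and accepting on any computational path that outputs $\bar 0$, is a non-deterministic Word-RAM algorithm for \texttt{poly3-NONBALANCED} of running time $s_1(n)+\mathrm{poly}(n)$ — the $\mathrm{poly}(n)$ covering the construction of $C_f$ and the reading of the sample. Conjecture \ref{conj:poly3NSETH} then yields $s_1(n)+\mathrm{poly}(n)\ge 2^{\ptNc n-1}$, hence $s_1(n)\ge 2^{\ptNc n-1}-\mathrm{poly}(n)$; the additive correction is immaterial asymptotically but should be tracked for the finite-size estimates of Section \ref{sec:numqubits}.

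The obstacle is conceptual, not technical: there is no route to proving the conjecture — that is precisely why it is posited — and the derivation in (b) is a few lines once one has the $C_f$ construction and the observation that multiplicative error is support-preserving. If one wanted instead to \emph{defend} poly3-NSETH($1/2$) rather than merely use it, the hard part would be arguing that neither nondeterminism nor the degree-$3$, no-constant-term promise structure can be leveraged into a $2^{n/2}$-time algorithm, which is the business of the evidence section rather than of a proof.
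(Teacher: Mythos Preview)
Your proposal is correct and matches the paper's approach: the paper likewise treats the statement as a conjecture requiring no proof, notes the refutation for $\ptNc>0.9965$ via LPTWY, and derives the lower bound $s_1(n)\ge 2^{\ptNc n-1}$ immediately from it. The only cosmetic difference is that the paper writes the bound as $s_1(n)\ge 2^{\ptNc n-1}$ without your explicit $\mathrm{poly}(n)$ correction, implicitly absorbing the circuit-construction overhead into the simulation time.
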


In the Word RAM model with word size $w$, memory is infinite and basic arithmetic operations on words of length $w$ take one time step. For concreteness, we assume that $w = \log_2(N)$ where $N$ is the length of the input encoding the degree-3 polynomial ($N = O(g_1(n)\log_2(n))$). This way the words can index the locations where the input data is stored. The Word RAM model has previously been used for fine-grained analyses \cite{williams2015hardness} and aims to represent how a real computer operates as faithfully as possible. 

Our conjecture immediately yields a lower bound on the simulation function $s_1$:
\begin{equation}\label{eq:iqpbound}
s_1(n) \geq 2^{\ptNc n-1}.
\end{equation}

This lower bound result relies on poly3-NSETH($\ptNc $), which we have not yet motivated. In particular, for our qubit calculations we will take the specific value of $\ptNc  = 1/2$.  This  value is comfortably below the best known limit $\ptNc  < 0.9965$ from \cite{lokshtanov2017beating}, whose algorithm is reproduced in Appendix \ref{app:algo}. In Section \ref{sec:evidence}, we attempt to provide additional motivation for poly3-NSETH(1/2) by showing its consistency with other fine-grained conjectures.

To our knowledge, the best known upper bound on $s_1(n)$ comes from the the naive $\text{poly}(n)2^n$-time ``Schr\"{o}dinger-style'' simulation algorithm that updates each of the $2^n$ amplitudes describing the state vector after each gate is performed, so this lower bound is not tight. 

\subsubsection{For QAOA circuits}

To perform the same analysis for QAOA circuits, we will turn the IQP circuit $C_f$ into a QAOA circuit. The modifications required are straightforward. We set $p$, the number of rounds of QAOA computation, equal to 1, \hl{and we let rotation angles $\gamma = \pi/2$ and $\beta = \pi/4$}. The first layer of Hadamard gates in $C_f$ is already built into the QAOA framework. To implement the $Z$, $CZ$, and $CCZ$ gates we write $Z = \exp(-i 2\gamma \ket 1 \bra 1)$, $CZ = \exp(-i 2\gamma \ket{11} \bra{11})$, and $CCZ = \exp(-i 2\gamma \ket{111} \bra{111})$ and build our constraint Hamiltonian $C$ accordingly: for each $Z$ gate we add two copies of the constraint that is satisfied only when the bit acted upon is 1; for each $CZ$ gate we add two copies of the constraint that is satisfied when both bits involved are 1; and for each $CCZ$ gate we add two copies of the constraint that is satisfied when all three bits involved are 1. Now, the operation $\exp(-i\gamma C)$ has exactly the effect of all the $Z$, $CZ$, and $CCZ$ gates combined. 

The final step is to implement the final column of $H$ gates, which is not built into the QAOA framework. First we define

\begin{equation}
\tilde{H}= \frac{1}{\sqrt{2}}
\begin{bmatrix}
    1     &  -i \\
    -i      &  1 
\end{bmatrix} = \exp\left(-i \frac{\pi}{4} X\right) = \exp(-i \beta X).
\end{equation}
Note that \hl{$\tilde{H} = H\exp(-i \frac{\pi}{4} Z)H$}, which can be rewritten \hl{$H = \exp(-i \gamma \ket 1\bra 1)H\tilde{H}$ (up to an unimportant global phase)}. \hl{Thus, we can replace the $H$ gates in the final column of the circuit $C_f$ with right-hand-side of the previous expression, the first factor of which can be performed by adding one copy of the $\ket 1 \bra 1$ constraint to $C$}. As described in \cite{farhi2016quantum}, the second factor ($H$ gate) can be implemented by introducing an ancilla qubit and \hl{four} new constraints between the original qubit and the ancilla. The original qubit is measured and if outcome $\ket 0$ is obtained, the state of the ancilla is $H$ applied to the input state on the original qubit. Thus we have teleported the $H$ gate onto the ancilla qubit within the QAOA framework. This is described in full in \cite{farhi2016quantum}, and we reproduce the gadget in Figure \ref{fig:qaoagadget}.

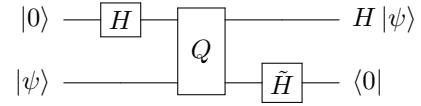
\begin{figure}[ht]
\centerline{
\Qcircuit @C=1.4em @R=1.0em {
\lstick{\ket 0} & \gate{H} & \multigate{1}{Q} & \qw  & \rstick{H\ket{\psi}} \qw \\
\lstick{\ket \psi} & \qw & \ghost{Q} & \gate{\tilde{H}} & \rstick{\bra{0}} \qw
}
}
\caption{\label{fig:qaoagadget} Gadget that uses an ancilla qubit to implement the $H$ gate within the QAOA framework. Here the gate $Q$ is the diagonal two-qubit gate \hl{$\text{diag}(1,i,1,-i)$}  which can be written as \hl{$\exp(-i\frac{\pi}{2}(3\ket{01}\bra{01}+\ket{11}\bra{11}))$}. Thus, it can be implemented by adding 4 constraints to the constraint Hamiltonian C. The $\tilde{H}$ gate is implemented by applying the Hamiltonian $B$ with $\beta=\pi/4$.}
\end{figure}

After replacing each $H$ gate with the gadget from Figure \ref{fig:qaoagadget}, every qubit begins with an $H$ gate, is acted upon by $\exp(-i\gamma C)$, and ends with a $\tilde{H}$ gate, which is implemented by the $\exp(-i\beta B)$ step of the QAOA framework. Thus, the resulting circuit is a QAOA circuit.

We had to introduce one ancilla per qubit in $C_f$, so our QAOA circuit has $2n$ qubits, instead of just $n$. However, it is still true that $\bra{\bar 0}V_f \ket{\bar 0} \propto \gap(f)$, where $V_f$ is now the unitary implemented by this new QAOA circuit and $\ket{\bar 0}$ is the state $\ket 0^{\otimes 2n}$. Hence the acceptance probability is non-zero if and only if $f$ is not balanced.

The circuit requires 2 constraints per term in the polynomial $f$, and an additional 5 constraints per qubit for the Hadamard gates at the end of the computation (1 from introducing $\tilde{H}$ and 4 from the gadget in Figure \ref{fig:qaoagadget}). This yields at most 
\begin{align}
g_2(\hl{2n})&=
2 g_1(n) + 5n \nonumber\\
&=
(n^3+20n)/ 3
\end{align}
constraints. 

As in the IQP case, we suppose a classical simulation algorithm produces samples from the output distribution of QAOA circuits with $q$ qubits and \hl{$g_2(q)$} constraints, up to multiplicative error constant, in time $s_2(q)$. Then, under the same conjecture poly3-NSETH($\ptNc $), we have
\begin{equation}
s_2(2n) \geq 2^{\ptNc n-1},
\end{equation}
which simplifies to
\begin{equation}\label{eq:qaoabound}
s_2(n) \geq 2^{\frac{\ptNc n}{2}-1}.
\end{equation}

The exponentiality of this lower bound is weaker by a factor of two in comparison to the lower bound for IQP circuits in Eq.~\eqref{eq:iqpbound}, due to the fact that one ancilla was introduced per variable to turn the circuit $C_f$ into a QAOA circuit. However, the best known upper bound for QAOA simulation is the naive $\text{poly}(n)2^n$ brute-force algorithm, as was the case for IQP circuits. This indicates that one might be able to eliminate the factor of two by replacing \texttt{poly3-NONBALANCED} with another problem.  Such a problem should be solvable by a QAOA circuit but not by non-deterministic algorithms running much faster than brute force.  We leave this for future work.

\subsubsection{For boson sampling circuits}

The story for boson sampling circuits is nearly identical, except using a conjecture related to the problem \texttt{per-int-NONZERO} instead of \texttt{poly3-NONBALANCED}. 

Given an integer-valued $n \times n$ matrix $A$, we showed previously how to construct a boson sampling circuit with $n$ photons, described by unitary $U_A$, that has non-zero acceptance probability only when $\Per(A) \neq 0$. This circuit has $2n$ modes, and hence requires at most
\begin{equation}
g_3(n) \hl{= (2n)(2n+1)/2} = 2n^2+n
\end{equation}
circuit elements, i.e.~beam splitters and phase shifters.

Paralleling our IQP and QAOA analysis, we suppose we have a classical algorithm that produces samples from the output distribution of a boson sampling circuit with $q$ photons and $g_3(q)$ total beam splitters and phase shifters, up to some multiplicative error constant, in $s_3(q)$ time steps for some function $s_3$. 

Using this algorithm to simulate the boson sampling circuit described by $U_A$ generates a non-deterministic algorithm for $\texttt{per-int-NONZERO}$ running in $s_3(n)$ time steps.

We replace poly3-NSETH($\ptNc $) with the version for \texttt{per-int-NONZERO}

\begin{conj}\label{conj:perintNSETH}[per-int-NSETH($\piNc $)]
Any non-deterministic classical algorithm (in the Word RAM model of computation) that solves \texttt{per-int-NONZERO} requires in the worst case $2^{\piNc n-1}$ time steps, where $n$ is the number of rows in the \texttt{per-int-NONZERO} instance.
\end{conj}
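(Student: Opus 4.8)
Unlike the preceding lower bounds, per-int-NSETH($\piNc$) is a \emph{conjecture}: since \texttt{per-int-NONZERO} is $\co\C_=\P$-complete (Appendix~\ref{app:coC=Phard}) and a non-deterministic polynomial-time algorithm for it would randomly collapse the $\PH$ (Toda--Ogiwara \cite{toda1992counting}), any unconditional proof would entail a major complexity-theoretic separation, and even a purely fine-grained proof is beyond current techniques. The ``proof'' I would present is therefore an evidence-and-barriers argument that the conjecture is plausible specifically at $\piNc = 0.999$.

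First I would survey the algorithmic landscape for the integer permanent and show that nothing known refutes the conjecture: Ryser's formula \cite{ryser1963combinatorial} gives $\Theta(n2^n)$ arithmetic operations, and the state of the art of Bj\"orklund et al.\ \cite{bjorklund2018generalized} runs in $2^{n-\Omega(\sqrt{n/\log\log n})}$ deterministic time --- only a subexponential improvement, hence consistent with per-int-NSETH($\piNc$) for every constant $\piNc<1$ and in particular $\piNc=0.999$. I would then argue that neither non-determinism nor the relaxation to detecting merely $\Per(A)\neq 0$ plausibly helps over the integers: in contrast to nonnegative matrices, where nonzero-detection reduces to existence of a perfect matching, cancellation makes the integer case appear to require essentially exact evaluation, and no non-deterministic algorithm beating the bounds above is known. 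This is precisely what makes a value of $\piNc$ so close to $1$ defensible here, in contrast to poly3-NSETH where the LPTWY algorithm \cite{lokshtanov2017beating} already forces $\ptNc<0.9965$.

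Second, I would try to derive per-int-NSETH from the companion conjecture poly3-NSETH($\ptNc$) by propagating fine-grained hardness through the reduction from \texttt{poly3-NONBALANCED} to \texttt{per-int-NONZERO} used in Appendix~\ref{app:coC=Phard}: if that reduction maps an $n$-variable degree-3 polynomial to an $m(n)\times m(n)$ integer matrix, then a non-deterministic solver for \texttt{per-int-NONZERO} running in $2^{\piNc m(n)-1}$ time would yield a $2^{\piNc m(n)-1}$-time solver for \texttt{poly3-NONBALANCED}, contradicting poly3-NSETH($\ptNc$) whenever $\piNc m(n)<\ptNc n$; so the conjecture holds with $\piNc=\ptNc n/m(n)$ in the limit. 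I would also look for an unconditional lower bound in a restricted algebraic / black-box query model, paralleling the average-case query bound the paper proves for degree-3 polynomials, showing that any algorithm touching $A$ only through its entries needs $2^{n-o(n)}$ queries to decide $\Per(A)\neq 0$.

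The main obstacle is intrinsic: there is no route to an unconditional theorem, so the deliverable is necessarily evidence rather than a proof. The sharpest concrete step --- making the poly3-to-permanent reduction quantitative enough to pin down $\piNc$ --- is unlikely to recover a constant as large as $0.999$, since the blow-up $m(n)$ in any such reduction is a fixed multiple of $n$ strictly larger than $1$; consequently the specific value $\piNc=0.999$ ultimately has to rest on the ``best known brute-force algorithm is optimal up to subexponential factors'' intuition, with the reduction serving only to certify that $\piNc$ is bounded away from $0$.
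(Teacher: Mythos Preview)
You correctly recognize that this is a conjecture, not a theorem, and that the appropriate deliverable is an evidence-and-barriers discussion rather than a proof. Your survey of the algorithmic landscape (Ryser, Bj\"orklund et al.) and your observation that no known algorithm refutes any $\piNc<1$ are exactly what the paper does, and your idea of pushing hardness through the reduction of Appendix~\ref{app:coC=Phard} is a reasonable addition (the paper does not pursue it, and you rightly note it cannot recover $\piNc=0.999$).

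There is, however, a genuine error in one of your proposed steps. You suggest proving an unconditional black-box lower bound ``showing that any algorithm touching $A$ only through its entries needs $2^{n-o(n)}$ queries to decide $\Per(A)\neq 0$.'' This cannot work: an $n\times n$ matrix has only $n^2$ entries, so any algorithm can read the entire input in $n^2$ queries, and no query lower bound larger than $n^2$ is possible in this model. The paper is explicit on this point, stating that ``there is no analogous black-box argument we can make for per-int-NSETH($\piNc$).'' The black-box argument for \texttt{poly3-NONBALANCED} (Theorem~\ref{thm:oracle-LB}) works because an $n$-variable Boolean function has $2^n$ evaluation points to query; the permanent has no such exponentially large oracle.

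The paper compensates for the missing black-box bound with a different piece of unconditional evidence you did not mention: the monotone-circuit lower bound of Jerrum and Snir~\cite{jerrum1982some}, which shows that any monotone arithmetic circuit computing the permanent needs $n(2^{n-1}-1)$ multiplications, essentially matching Ryser's formula. That is the paper's substitute for a query bound, and it would strengthen your evidence section to include it.
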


Unlike poly3-NSETH($\ptNc $), as far as we are aware there is no known better-than-brute force algorithm ruling out the conjecture for any value $\piNc < 1$. The algorithm in \cite{bjorklund2018generalized}, which is better-than-brute-force by subexponential factors rules out $\piNc = 1$.

This conjecture implies a lower bound on the simulation function

\begin{equation}\label{eq:bsbound}
s_3(n) \geq 2^{\piNc n-1}.
\end{equation}

Producing samples from the output of boson sampling circuits naively requires one to compute the permanent for many of the amplitudes. However, in the case of a binary output, where acceptance is defined to correspond to exactly one photon configuration, only one permanent need be calculated --- the one associated with the accepting configuration. Thus the asymptotic scaling of this lower bound when $\piNc =1-\delta$ is essentially tight with naive simulation methods as $\delta \rightarrow 0$, since Ryser's formula can be used to evaluate the permanent and simulate a boson sampling circuit in $O(n2^n)$ time steps.

\subsection{Evidence for conjectures}\label{sec:evidence}

Where previous quantum computational supremacy arguments only ruled out simulation algorithms with polynomial runtime, our analysis also rules out some algorithms with exponential runtime. These conclusions come at the expense of imposing stronger, fine-grained conjectures, but such assumptions are necessary for extracting the fine-grained lower bounds we seek.

Thus, our conjectures are necessarily less plausible than the statement that the $\PH$ does not collapse, and definitively proving our conjectures is impossible without simultaneously settling major open problems in complexity theory. However, we can give evidence for these conjectures by thinking about how one might try to refute them, and showing how they fit into the landscape of previously proposed fine-grained conjectures.

We start with poly3-NSETH($\ptNc $) and discuss why certain techniques for refuting it cannot work, how current techniques fall short of refuting it for values of $\ptNc $ significantly lower than 1, and why we should expect that completely different techniques would be needed to produce algorithms that rule out $\ptNc  < 1/2$. Then, we discuss how poly3-NSETH($\ptNc $) fits in consistently with other results in fine-grained complexity theory. Finally, we discuss how per-int-NSETH($\piNc $) is similar and different in these regards.

The conjecture poly3-NSETH($\ptNc $) asserts that determining whether a Boolean function is balanced takes non-deterministic exponential time, where that Boolean function takes the form of a degree-3 polynomial. It is worth noting that we can prove this conjecture with $\ptNc =1$ for Boolean functions in the black-box setting, where the non-deterministic algorithm can only interact with the Boolean function by querying its value on certain inputs. 

\begin{theorem}\label{thm:oracle-LB}
Let $f: \{0,1\}^n \rightarrow \{0,1\}$ be a Boolean function. A non-deterministic algorithm with black-box access to $f$ that accepts iff $\lvert\{x: f(x)=0\}\rvert \neq 2^{n-1}$, that is, iff $f$ is not balanced, must make at least $2^{n-1}+1$ queries to $f$. Moreover, this bound is optimal.
\end{theorem}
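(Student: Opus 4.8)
The plan is to prove the two directions separately: the lower bound of $2^{n-1}+1$ queries by a standard adversary/reconstruction argument on a single hard instance, and the matching upper bound by exhibiting an explicit non-deterministic algorithm with that query cost.

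For the lower bound, I would fix an arbitrary correct non-deterministic algorithm $\mathcal{A}$ with black-box access to $f$ and run it on the constant function $f \equiv 0$. Since $2^n \neq 2^{n-1}$ for $n \geq 1$, this $f$ is not balanced, so $\mathcal{A}$ must have at least one accepting computation path on it. Suppose, for contradiction, that some accepting path $P$ makes at most $2^{n-1}$ queries; let $S \subseteq \{0,1\}^n$ be the set of points queried along $P$, so $|S| \leq 2^{n-1}$ and every answer received along $P$ is $0$. Because $|\{0,1\}^n \setminus S| \geq 2^n - 2^{n-1} = 2^{n-1}$, I can define a function $g$ that equals $0$ on all of $S$ and equals $1$ on exactly $2^{n-1}$ points of the complement of $S$ (and $0$ elsewhere); then $g$ has exactly $2^{n-1}$ ones and is therefore balanced. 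The key observation is that the sequence of non-deterministic choices realizing $P$ is input-independent and $g$ agrees with $f \equiv 0$ on every point of $S$, so running $\mathcal{A}$ on input $g$ along exactly those choices reproduces $P$ query-for-query and answer-for-answer; hence $P$ is an accepting path of $\mathcal{A}$ on $g$, forcing $\mathcal{A}$ to accept the balanced function $g$ and contradicting correctness. Therefore every accepting path of $\mathcal{A}$ on the instance $f \equiv 0$ uses at least $2^{n-1}+1$ queries, so $\mathcal{A}$ makes at least $2^{n-1}+1$ queries on this instance.

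For optimality I would construct the following non-deterministic algorithm: nondeterministically guess a bit $c \in \{0,1\}$ and a subset $Y \subseteq \{0,1\}^n$ with $|Y| = 2^{n-1}+1$, query $f$ at every point of $Y$, and accept iff $f(y) = c$ for all $y \in Y$. Correctness follows because $f$ is not balanced exactly when it has at least $2^{n-1}+1$ zeros or at least $2^{n-1}+1$ ones: in the first case the path guessing $c=0$ together with a set $Y$ of zeros accepts, in the second the path with $c=1$ accepts, and if $f$ is balanced then no value is attained on more than $2^{n-1}$ inputs, so every path encounters a mismatch and rejects. Each path makes exactly $2^{n-1}+1$ queries, so this matches the lower bound and the bound is tight.

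I expect the only delicate point to be bookkeeping around the definition of non-deterministic query complexity: one must be careful that the reconstruction argument uses input-independence of the non-deterministic choices so that the accepting path $P$ ``transplants'' cleanly from $f \equiv 0$ to $g$, and one should verify the boundary case $|S| = 2^{n-1}$, where the complement has exactly $2^{n-1}$ points, all of which are set to $1$ in $g$; this causes no trouble. Everything else is routine counting.
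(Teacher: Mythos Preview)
Your proposal is correct and takes essentially the same approach as the paper: the same adversary argument for the lower bound (fix an accepting path on an unbalanced input, then flip unqueried points to make it balanced) and the same explicit algorithm for the upper bound. The only cosmetic difference is that you instantiate the unbalanced function as the constant $f\equiv 0$, whereas the paper starts from an arbitrary unbalanced $f_0$; this is immaterial since a single hard instance suffices for the lower bound.
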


\begin{proof}
First we prove the lower bound on the number of queries. Suppose $M$ is a non-deterministic algorithm with black-box access to $f$ that accepts whenever $f$ is not balanced. Let $f_0$ be a Boolean function that is not balanced; thus, at least one computation path of $M$ accepts if $f=f_0$. Choose one such path and let $S \subset \{0,1\}^n$ be the set of queries made by $M$ on this computation path. Suppose for contradiction that $\lvert S \rvert \leq 2^{n-1}$. Since at most half the possible inputs are in $S$, it is possible to construct another Boolean function $f_1$ that is balanced and agrees with $f_0$ on the set $S$. Since $f_0$ and $f_1$ agree on $S$, the computation that accepted when $f=f_0$ will proceed identically and accept when $f=f_1$. Thus $M$ accepts when $f=f_1$, which is balanced, yielding a contradiction. We conclude that $\lvert S \rvert \geq 2^{n-1}+1$. 

We can see that it is possible for $M$ to achieve this bound as follows: $M$ non-deterministically chooses $2^{n-1}+1$ of the $2^n$ possible inputs to $f$ and queries $f$ on these inputs. If all of the queries yield the same value, it accepts. Otherwise, it rejects. If $f$ is balanced, $M$ will reject no matter which set of queries it makes, whereas if $f$ is not balanced, there is at least one set of $2^{n-1}+1$ inputs on which $f$ takes the same value and $M$ will accept, so the algorithm succeeds.
\end{proof}

Theorem \ref{thm:oracle-LB} shows that the poly3-NSETH(1) conjecture cannot be disproved using an algorithm that simply evaluates the degree-3 polynomial $f$ for different inputs. Indeed, the algorithm by LPTWY~\cite{lokshtanov2017beating} exploits the fact that the Boolean functions are degree-3 polynomials in order to refute poly3-NSETH($\ptNc $) for $\ptNc  > 0.9965$. Refuting poly3-NSETH($\ptNc $) for even smaller values of $\ptNc $ would require more techniques that further utilize the structure associated with the \texttt{poly3-NONBALANCED} problem.  

In fact, the algorithm in \cite{lokshtanov2017beating} is substantially more general than what is necessary for our purposes; their deterministic algorithm counts the number of solutions to a system of $m$ degree-$k$ polynomial equations over finite field $\mathbb{F}_q$. The problem \texttt{poly3-NONBALANCED} is concerned only with the case where $m=1$, $k=3$, $q=2$, and all that matters is whether the number of zeros is equal to half the possible inputs. For this special case, the algorithm is considerably simpler, and we reproduce it in Appendix \ref{app:algo}. The basic technique is as follows: we fix some fraction $(1-\delta)$ of the $n$ variables and call $R$ the number of zeros of $f$ consistent with those fixed values. We can compute in time $O(2^{0.15n+0.85\delta n})$ a representation of $R$ as a polynomial with integer coefficients over the $(1-\delta)n$ fixed variables. Then, (even though $R$ has an exponential number of monomials in its representation) it is noted that one can evaluate $R$ for all $2^{(1-\delta)n}$ possible inputs in total time $O(2^{(1-\delta)n})$, as long as $\delta < 0.0035$. By evaluating and summing $R$ on all of its inputs, we compute the total number of zeros, and the total runtime is $O(2^{(1-\delta)n})$, which is better than brute force when we choose $\delta$ positive.

Note that this algorithm is deterministic, and giving it the power of non-determinism can only make it faster. However, by inspection of the algorithm from \cite{lokshtanov2017beating}, we see no clear way for non-determinism to be directly utilized to further accelerate the algorithm. This is consistent with the finding in Theorem~\ref{thm:oracle-LB} that (asymptotically speaking) the best non-deterministic algorithms are no faster than the best deterministic algorithms for the \texttt{NONBALANCED} problem in the black-box setting. However, it is worth mentioning that a gap between best known deterministic and non-deterministic algorithms has been observed for certain \co\NP-hard problems, for example in \cite{woods1998unsatisfiable}, where the problem of determining the \textit{unsatisfiability} of a system of $m$ degree-2 polynomials in $n$ variables over $\mathbb  F_2$ is shown to be possible in $\text{poly}(n)2^{n/2}$ non-deterministic time, an improvement over best known $O(2^{0.8765n})$ deterministic solution from LPTWY \cite{lokshtanov2017beating}. Additionally, when randomness is added to non-determinism yielding what is known as a Merlin-Arthur (\MA) protocol, similar results can be shown even for \#\P-hard problems like computing the permanent of $n \times n$ matrices, which is possible in $\text{poly}(n)2^{n/2}$ $\MA$ time \cite{williams2016strong} compared to $O(2^n)$ deterministic time. These results cast some doubt on the assumption that non-determinism is a useless resource for solving \texttt{poly3-NONBALANCED} or \texttt{per-int-NONZERO}, although notably none of these methods appear to break the $\Omega(2^{n/2})$ barrier.

Additionally, we mention that the authors of LPTWY \cite{lokshtanov2017beating} were concerned primarily with showing that better-than-brute-force algorithms were possible, perhaps leaving room for optimization of their constants. In our reproduction of their algorithm when $m=1$, $k=3$, and $q=2$ in Appendix \ref{app:algo}, we have followed their analysis and optimized the constants where possible yielding a slightly better runtime than what is stated explicitly in their paper. 

The conclusion is that techniques exist to rule out poly3-NSETH(1) but not for values of $\ptNc $ much lower than 1, even after some attempt at optimization. Moreover, we now provide evidence that drastically different techniques would need to be used if one wished to rule out poly3-NSETH($1/2$); that is, ruling out poly3-NSETH($\ptNc $) when $\ptNc  < 1/2$ could not be done by making only slight modifications or improvements using the same approach from \cite{lokshtanov2017beating}. Our reasoning stems from the tradeoff between the two contributions to the runtime of the algorithm: first, the computation of the polynomial representation for $R$; and second, the evaluation of $R$ for all $2^{(1-\delta)n}$ possible inputs. When $\delta$ is smaller than $0.0035$, the second contribution dominates for a total runtime $O(2^{(1-\delta)n})$. However, if this step were to be improved to allow for $\delta$ to exceed $1/2$, the first contribution to the runtime would begin to dominate for a total runtime of $O(2^{0.15n+0.85\delta n}) > O(2^{n/2})$. In other words, if we try to fix fewer than half of the variables, computing the representation of $R$ (which involves cycling through the $2^{\delta n}$ strings of unfixed variables) will necessarily take longer than evaluating $R$ and ultimately it will be impossible to produce an algorithm with runtime below $2^{n/2}$ through this method. While this is no proof, it increases the plausibility of poly3-NSETH(1/2).

Next we discuss how poly3-NSETH($\ptNc $) contrasts with previously proposed fine-grained conjectures. Well-known conjectures include the Exponential Time Hypothesis (ETH), which claims that there exists some $c$ such that no $O(2^{cn})$ time algorithm for $k$-\texttt{SAT} exists, and the Strong Exponential-Time Hypothesis (SETH) \cite{impagliazzo2001problems,calabro2009complexity}, which states that for any $\epsilon$ one can choose $k$ large enough such that there is no $O(2^{(1-\epsilon)n})$ algorithm for $k$-\texttt{SAT}. In other words, SETH states that no algorithm for $k$-\texttt{SAT} does substantially better than the naive brute-force algorithm when $k$ is unbounded. 

There is substantial evidence for ETH and SETH, even beyond the fact that decades of research on the \texttt{SAT} problem have failed to refute them. For instance, SETH implies fine-grained lower bounds on problems in $\P$ that match long-established upper bounds. One example is the orthogonal vectors (\texttt{OV}) problem, which asks if a set of $n$ vectors has a pair that is orthogonal. There is a brute-force $O(n^2)$ solution to \texttt{OV}, but $O(n^{2-\epsilon})$ is impossible for any $\epsilon > 0$ assuming SETH \cite{williams2005new,williams2014finding}. Thus, SETH being true would provide a satisfying rationale for why attempts to find faster algorithms for problems like \texttt{OV} have failed. On the other hand, the refutation of SETH would imply the existence of novel circuit lower bounds \cite{jahanjou2015local}. 


There are yet more fine-grained conjectures: replacing the problem $k$-\texttt{SAT} with $\#k$-\texttt{SAT} yields \#ETH and \#SETH, the counting versions of ETH and SETH. These hypotheses have interesting consequences of their own; for example, \#ETH implies that computing the permanent cannot be done in subexponential time \cite{dell2014exponential}. Additionally, if $k$-\texttt{TAUT} is the question of whether a $k$-DNF formula is satisfied by \textit{all} its inputs (which is $\co\NP$-complete), then the statement that no $O(2^{(1-\epsilon)n})$ algorithm exists for $k$-\texttt{TAUT} with unbounded $k$ is called the Non-deterministic Strong Exponential Time Hypothesis (NSETH) \cite{carmosino2016nondeterministic}. Like SETH, NSETH's refutation would imply circuit lower bounds \cite{jahanjou2015local,carmosino2016nondeterministic}. Additionally, NSETH is consistent with unconditional lower bounds that have been established in proof complexity \cite{beck2013strong,pudlak1999lower}.

The conjecture poly3-NSETH($\ptNc $) is similar to NSETH in that it asserts the non-existence of non-deterministic algorithms for a problem that is hard for $\co\NP$ (indeed, \texttt{poly3-NONBALANCED} is hard for the whole \PH), and it is similar to \#SETH in that it considers a counting problem. It is different from all of these conjectures because it is not based on satisfiability formulas, but rather on degree-3 polynomials over the field $\mathbb{F}_2$, a problem that has been far less studied. Additionally, poly3-NSETH($\ptNc $) goes beyond previous conjectures to assert not only that algorithms require $O(2^{\ptNc n})$ time, but that they actually require at least $2^{\ptNc n-1}$ time steps. It is not conventional to worry about constant prefactors as we have in this analysis, but doing so is necessary to perform practical runtime estimates. On this front, our analysis is robust in the sense that if poly3-NSETH($\ptNc $) or per-int-NSETH($\piNc $) were to fail by only a constant prefactor, the number of additional qubits we would estimate would increase only logarithmically in that constant. 

We are unable to show that poly3-NSETH($\ptNc $) is formally implied by any of the previously introduced conjectures. However, assuming ETH, we can prove that the deterministic version of poly3-NSETH($\ptNc $) holds for at least some $\ptNc $, i.e.~that there does not exist a deterministic $O(2^{\ptNc n})$ time algorithm for \texttt{poly3-NONBALANCED}.

\begin{theorem}
Assuming ETH, there exists a constant $\ptNc $ such that every deterministic algorithm that solves \texttt{poly3-NONBALANCED} requires $\Omega(2^{\ptNc n})$ time.
\end{theorem}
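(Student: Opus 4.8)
The plan is to reduce $3$-\texttt{SAT} to \texttt{poly3-NONBALANCED} with only a constant-factor blow-up in the number of variables, and then invoke ETH. The key preliminary is the Sparsification Lemma of Impagliazzo, Paturi and Zane, by which ETH is equivalent to: there is a constant $\epsilon_0>0$ such that no deterministic $O(2^{\epsilon_0 N})$-time algorithm decides satisfiability of $3$-CNF formulas $\Phi$ on $N$ variables with $m=O(N)$ clauses. It therefore suffices to build, from such a $\Phi$, a polynomial $g'$ over $\mathbb{F}_2$ of degree at most $3$, with no constant term, in $n=O(N)$ variables, such that $\gap(g')\neq 0$ iff $\Phi$ is satisfiable. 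Given this, a deterministic $O(2^{\ptNc n})$-time algorithm for \texttt{poly3-NONBALANCED} decides $3$-\texttt{SAT} with $O(N)$ clauses in time $O(2^{c'\ptNc N})$ for a fixed $c'$, contradicting the above once $\ptNc$ is a small enough constant. Since the theorem only asserts existence of some such $\ptNc$, no optimization is needed, and the passage from "no subexponential-time algorithm" to an explicit "$\Omega(2^{\ptNc n})$" worst-case lower bound is routine.

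For the reduction, first arithmetize $\Phi$ over $\mathbb{F}_2$. For clause $j$ with literals $l^j_1,l^j_2,l^j_3$ (each of the form $x_k$ or $1+x_k$) put $I_j(x)=(1+l^j_1)(1+l^j_2)(1+l^j_3)$, a degree-$3$ polynomial with $I_j(x)\in\{0,1\}$ equal to $1$ exactly when $x$ violates clause $j$. Introduce a fresh variable $y_j$ per clause and set $g(x,y)=\sum_{j=1}^m y_j I_j(x)$, of degree at most $4$ in the $N+m$ variables $x_1,\dots,x_N,y_1,\dots,y_m$; since every monomial of $g$ contains some $y_j$, it has no constant term. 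Fixing $x$ and summing over $y$, $g(x,y)=\sum_{j:I_j(x)=1}y_j$ is the zero functional if $x$ satisfies $\Phi$ (giving $2^m$ zeros) and a nonzero $\mathbb{F}_2$-linear functional otherwise (giving exactly $2^{m-1}$ zeros); hence $g$ has $2^{m-1}(2^N+s)$ zeros, where $s$ is the number of satisfying assignments, so $\gap(g)=-2^m s$, nonzero iff $\Phi$ is satisfiable. It remains to remove the degree-$4$ monomials (at most $m$ of them, one per clause, all with coefficient $1$): for a degree-$4$ monomial $x_ax_bx_cx_d$, introduce fresh $z,w$ and replace it by $x_ax_bz+wz+wx_cx_d$, of degree $\le 3$. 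Summing over $w$ first uses $\sum_{w\in\mathbb{F}_2}(-1)^{w(z+x_cx_d)}=2\,[z=x_cx_d]$, so each replacement multiplies $\gap$ by exactly $2$, does not change whether it vanishes, and introduces no new degree-$4$ monomial. Processing all of them gives a degree-$3$ polynomial $g'$ with no constant term in $N+3m=O(N)$ variables with $\gap(g')=-2^{m+t}s$, where $t\le m$ is the number of degree-$4$ monomials of $g$; this is nonzero iff $\Phi$ is satisfiable, completing the reduction and, modulo the routine steps, the proof.

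The crux is keeping the reduction linear in the number of variables. The generic proof that \texttt{poly3-NONBALANCED} is $\co\C_=\P$-complete routes through an arbitrary polynomial-time circuit and so blows the variable count up polynomially, which is far too lossy to combine with ETH; the two ingredients that repair this are invoking the Sparsification Lemma so that $m=O(N)$, and using a degree-reduction gadget that spends only $O(1)$ fresh variables per high-degree monomial rather than per gate. The attendant arithmetic --- tracking the exact power of $2$ multiplying $s$ --- is elementary but must be carried out so that the predicate "$\gap\neq 0$" is preserved exactly. Finally, note the reduction is many-one, so it constrains only \emph{deterministic} algorithms for \texttt{poly3-NONBALANCED}, consistently with the theorem statement: a nondeterministic fast algorithm would merely yield a nondeterministic fast algorithm for $3$-\texttt{SAT}, which is trivial since $3$-\texttt{SAT}$\in\NP$, so this argument says nothing about the full poly3-NSETH conjecture.
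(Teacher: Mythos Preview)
Your proof is correct and takes a genuinely different, more elementary route than the paper's. The paper also starts from the Sparsification Lemma, but then reduces via a quantum-circuit detour: it forms $\phi' = x_{n+1}(1-\phi)$, observes that $\phi'$ is balanced iff $\phi$ is unsatisfiable, builds a reversible circuit for $\phi'$ out of $O(m)$ $CCZ$ and $O(m)$ Hadamard gates with $O(m)$ ancillas, and finally invokes Montanaro's IQP-to-polynomial correspondence (with Hadamard gadgets) to obtain a degree-$3$ polynomial on $O(n+m)$ variables with the same gap. Your construction bypasses the circuit picture entirely: the per-clause auxiliaries $y_j$ give an explicit degree-$4$ polynomial with $\gap$ proportional to the number of satisfying assignments, and the two-variable identity $\sum_{z,w}(-1)^{x_ax_bz+wz+wx_cx_d}=2(-1)^{x_ax_bx_cx_d}$ reduces the degree while only doubling the gap. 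The upshot is the same $O(N)$-variable blow-up, but your argument is self-contained and does not rely on the quantum-circuit/polynomial dictionary at all, which is a nice simplification for a purely classical statement.

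One inconsequential slip: with the paper's convention $\gap(f)=\sum_x(-1)^{f(x)}=\#\{f=0\}-\#\{f=1\}$, your zero count $2^{m-1}(2^N+s)$ gives $\gap(g)=+2^m s$, not $-2^m s$. This does not affect the ``nonzero iff satisfiable'' conclusion. Your closing remark that the argument says nothing about \emph{non-deterministic} algorithms (and hence about poly3-NSETH proper) is also apt and matches the paper's framing.
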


\begin{proof}
Suppose for contradiction that no such constant existed; thus for any $\ptNc $ there is an algorithm for \texttt{poly3-NONBALANCED} running in less than $O(2^{\ptNc  n})$ time. We give a reduction from $k$-\texttt{SAT} to \texttt{poly3-NONBALANCED} showing that this leads to a contradiction with ETH.

The reduction is similar to that from \cite{montanaro2017quantum} showing that counting the number of zeros of a degree-3 polynomial is \#\P-complete. Given a $k$-\texttt{SAT} instance $\phi$ with $n$ variables and $m$ clauses, we can use the sparsification lemma to assume that $m$ is $O(n)$ \cite{impagliazzo2001problems,dell2014exponential}. Then we introduce one additional variable $x_{n+1}$ and examine the formula $\phi' = x_{n+1}(1-\phi)$. Note that $\phi$ is satisfiable if and only if $\phi'$ is not balanced. There is a quantum circuit $C$ made up only of $O(m)$ $CCZ$ and $O(m)$ Hadamard gates, that computes the value of $\phi'(z)$ into an auxiliary register for any input $z$ on the first $n+1$ qubits. The circuit also requires $O(m)$ ancilla qubits that begin and end in the $\ket{0}$ state. As described in \cite{montanaro2017quantum}, the circuit can be associated with a degree-3 polynomial $f$ --- the $H$ gates are replaced by gadgets similar to that in Figure \ref{fig:qaoagadget}, introducing more ancilla qubits but turning the circuit into an IQP circuit --- that has $O(n+m)$ variables, such that $\gap(f) = \gap(\phi')$. Thus, given any constant $c$, we can choose $\ptNc $ small enough such that a $O(2^{\ptNc  n})$ time algorithm for determining whether $f$ is not balanced implies a $O(2^{cn})$ algorithm for $k$-\texttt{SAT}. Since we assumed the existence of the former, ETH is contradicted, proving the claim. 
\end{proof}

A variant of this claim shows that, like computing the permanent, computing the number of zeros to a degree-3 polynomial over $\mathbb{F}_2$ cannot be done in subexponential time, assuming \#ETH. This observation provides a link between poly3-NSETH($\ptNc $) and per-int-NSETH($\piNc $).

In comparison to poly3-NSETH($\ptNc $), per-int-NSETH($\piNc $) has advantages and disadvantages. There is no analogous black-box argument we can make for per-int-NSETH($\piNc $). On the other hand, there is no known non-trivial algorithm that rules out the conjecture for any $\piNc < 1$, making it possible that solving \texttt{per-int-NONZERO} with Ryser's formula is essentially optimal. The possible optimality of Ryser's formula is also bolstered by work in \cite{jerrum1982some}, where it is unconditionally proven that a monotone circuit requires $n(2^{n-1}-1)$ multiplications to compute the permanent, essentially matching the complexity of Ryser's formula.  This was recently extended to show similar lower bounds on monotone circuits that estimate output amplitudes of quantum circuits~\cite{monotone-sim-LB}.
Of course, per-int-NSETH($1-\delta$) for vanishing $\delta$ goes further and asserts that computation via Ryser's formula is optimal even with the power of non-determinism.  Thus our conjecture formalizes the statement that non-determinism cannot significantly speed up computing whether the permanent is nonzero.

\section{Simulation algorithms with additive error}\label{sec:adderror}

\subsection{Overview} 

The conclusions in the previous section state that any classical algorithm running in time less than the stated lower bound cannot sample from a distribution that has constant multiplicative error with respect to the true output distribution of the quantum circuit. However, real-world near-term quantum devices also cannot sample from such a distribution since each gate they perform has imperfect fidelity. It is more accurate to model the device distribution as having some constant \textit{additive} error with respect to the true circuit distribution, as in the sense of Eq.~\eqref{eq:additiveerror}. Thus we would like to also rule out classical simulation algorithms with constant additive error. At first glance, this seems challenging since the quantum advantage our results exploit rests on the ability to construct a quantum state with an acceptance probability that is \textit{exactly} 0 in one case and not 0 in another. Constant multiplicative error preserves whether or not an output probability is 0, but constant additive error does not. 

With minor modifications, our multiplicative-error analysis could be made robust to a situation where the additive error on individual output probabilities is exponentially small. When the total additive error is a constant (i.e.~Eq.~\eqref{eq:additiveerror} with $\epsilon = O(1)$), the typical amount of error on a randomly chosen output probability will indeed be exponentially small (since there are exponentially many possible outputs); however, it is possible that a constant amount of error could be concentrated on the particular output in the distribution that dictates when to accept, while most others have exponentially small error. This issue, or something similar to it, arises whether one uses $\NQP = \co\C_=\P$ or $\mathsf{PostBQP}=\PP$ in their argument for quantum computational supremacy.


In previous work on QCS, this situation has been
handled by hiding the accepting output randomly among the exponentially many possible outputs, which makes it highly likely that the amount of error on that output probability is typical. Then, it must be conjectured that the hard classical problem solved by the quantum circuit is still hard in the average case; i.e.~it cannot be solved quickly even when the algorithm is allowed to fail on some small fraction of the instances. More concretely, previous work conjectured that it is $\#\P$-hard to approximate $\gap(f)^2$ on some sufficiently large constant fraction of $n$-variable degree-3 polynomial instances $f$ \cite{bremner2016average}, or analogously, that it is $\#\P$-hard to approximate $\Per(A)^2$ on some sufficiently large constant fraction of $n \times n$-matrix instances $A$ \cite{aaronson2011computational}.

We present a fine-grained version of this argument. As in the multiplicative case, we avoid a direct fine-graining of the exact logical path relying on $\PP = \PostBQP$ taken by \cite{aaronson2011computational,bremner2016average}. The central reason for this choice is so that we can avoid the step in the analysis where one classically estimates the acceptance probability of a classical randomized algorithm using resources that fall within the polynomial hierarchy (Stockmeyer's theorem \cite{stockmeyer1983complexity}). In particular, this estimation can be done efficiently with an oracle in the second level of the $\PH$, i.e.~$\P^{\Sigma_2^\P}$, or with randomness and an $\NP$ oracle, i.e.~$\BPP^\NP$. This is a costly step in the fine-grained world since constructing this algorithm explicitly would introduce significant overhead that would make our fine-grained lower bounds, and hence qubit estimates, worse. Additionally, the fine-grained conjecture we would need to make would pertain to algorithms lying in the third level of the exponential-time hierarchy (the natural generalization of the $\PH$) where little is known about fine-grained algorithms.

However, it is apparent that using the $\NQP=\co\C_=\P$ route will also be insufficient for this purpose. One reason was previously mentioned: $\NQP$ is not robust to additive error. But another important reason is that the problem $\texttt{poly3-NONBALANCED}$ is an average-case easy problem simply because all but an exponentially small fraction of degree-3 polynomials are not balanced (under plausible assumptions about the distribution of $\gap(f)$ with respect to random $f$). Hence one can construct a deterministic classical algorithm that outputs YES for every input instance $f$ and succeeds on an overwhelmingly large fraction of instances. 

Instead, we will use a third relationship between quantum and classical counting complexity to underlie our fine-grained, additive-error analysis: $\mathsf{SBQP} = \A_0\PP$ \cite{kuperberg2015hard}. The class $\mathsf{SBQP}$ is the quantum analogue of $\SBP$, which is, in a sense, an error-robust version of $\NP$ that formally lies between $\NP$ and $\MA$ \cite{bohler2006error}, within the second level of the $\PH$. Meanwhile $\A_0\PP$ is a classical counting complexity class that is hard for the $\PH$ \cite{vyalyi2003qma,kuperberg2015hard}, similar to $\PP$ and $\co\C_=\P$. In the context of QCS arguments, the class $\mathsf{SBP}$ was first used in \cite{fujii2014impossibility,fujii2015power}.

The specific counting problem we will solve with quantum circuits we call \texttt{poly3-SGAP} and still pertains to the gap of degree-3 polynomials over $\mathbb{F}_2$, but the question is different: now, given $f$, we ask if $\gap(f)^2$ is less than $2^{n-2}$ or if it is at least $2^{n-1}$, promised that one or the other is the case. We are able to prove that this problem is not a naively easy average-case problem in the way $\texttt{poly3-NONBALANCED}$ is, and we show how the ability to produce samples from quantum circuits with some small constant additive error would lead to a classical algorithm with ``$\mathsf{SB}$ power'' (replacing non-deterministic power in the analysis from previous sections) that solves the problem on a large fraction of instances. Our average-case conjecture asserts that any classical algorithm with $\mathsf{SB}$ power that succeeds on a large fraction of instances must have at least some exponential runtime, and this yields a lower bound on the sampling time for IQP and QAOA circuits. Beyond providing us with a fine-grained lower bound from which we can estimate the number of qubits for QCS, this analysis reveals a new perspective on additive-error QCS arguments --- one that avoids the need for Stockmeyer counting by replacing estimation of the quantity $\gap(f)^2$ with a decision problem of similar flavor.

We do not perform a fine-grained analysis of additive-error simulations for boson sampling circuits. However, we believe it would be possible to perform such an analysis with some additional work. This would require conjecturing the fine-grained average-case hardness of deciding whether $\Per(A)^2$ is at least a certain threshold, or at most a smaller threshold, for random matrices $A$. While there are no fundamental barriers to this analysis, the details would be more complicated than for IQP and QAOA, stemming from two factors. First, for IQP we can draw from a discrete set, the set of degree-3 polynomials. However, for boson sampling in the additive-error setting, we must draw from a continuous set, the set of Gaussian complex matrices. In a fine-grained analysis, we would need to describe a specific implementation that discretizes this set. Second, and more importantly, unlike the hiding mechanism for IQP, the hiding mechanism for boson sampling is only approximate, and it is conceptually more complicated than for IQP. For boson sampling, given a matrix $A$, one hides which amplitude they are interested in by drawing a large unitary $U$ randomly from the Haar distribution, and post-selecting on choices for which the matrix $A$ is a submatrix of $U$. If $U$ is sufficiently larger than $A$, and the entries of $A$ are i.i.d. Gaussian, then the position of the submatrix $A$ in $U$ will be approximately hidden \cite{aaronson2011computational}.

\subsection{Quantum computational supremacy with the complexity class \texorpdfstring{$\mathsf{SBP}$}{SBP}}
We now provide the definition of the classical complexity class $\SBP$. A language $L$ is in $\SBP$ if there exist polynomials $p$ and $q$, a constant $c>1$, and a polynomial-time Turing machine $M$ such that for all $x \in \{0,1\}^*$,
\begin{eqnarray}
x \in L  &\implies & |\{ y \in \{0,1\}^{p(|x|)}: M(x,y) = 1\}| \geq 2^{q(|x|)}\nn
x \not\in L &\implies& |\{ y \in \{0,1\}^{p(|x|)}: M(x,y) = 1\}|  \leq \frac{2^{q(|x|)}}{c}.\nonumber
\end{eqnarray}
Thinking of $M$ as a randomized machine with random bits $y$, we see that the acceptance probability of $M$ must be greater than some exponentially small threshold $2^{-(p(|x|)-q(|x|))}$ when the input $x$ is in $L$, and the acceptance probability must be smaller than some threshold that is a factor of $c$ smaller on all inputs $x$ not in the language. One may view this as a more error-robust version of $\NP$, which can also be worded in terms of thresholds: the machine must accept with at least some exponentially small probability (corresponding to a single computation path) when $x \in L$ and at most some smaller probability (namely 0) when $x \not\in L$. This interpretation makes it clear that the class $\SBP$ contains $\NP$. It is also known that $\SBP$ is contained in the second level of the $\PH$ \cite{bohler2006error}. Just as we might say a classical algorithm has the power of non-determinism to mean that it accepts its input if at least one of its computation paths accepts, we will say a classical algorithm has ``$\mathsf{SB}$ power'' and mean that it accepts its input as long as some exponentially small fraction of its computation paths accept, and rejects if fewer than $1/c$ times as many computation paths accept. Of course, there is a third possibility, that the number of accepting paths lies in the window between the two thresholds, and in this case the algorithm with $\mathsf{SB}$ power can act arbitrarily.

The quantum complexity class $\mathsf{SBQP}$ is the quantum generalization, defined roughly as languages that can be solved by polynomial-time quantum computation with $\mathsf{SB}$ power. It was shown that $\mathsf{SBQP} = \A_0\PP$ \cite{kuperberg2015hard} where $\A_0\PP$ is a classical counting complexity class defined \cite{vyalyi2003qma} to contain languages $L$ for which there are polynomials $p$ and $q$, a constant $c>1$, and a polynomial-time Turing machine $M$ such that 
\begin{eqnarray}
x \in L  &\implies & \gap(M(x, \cdot)) \geq 2^{q(|x|)}\nn
x \not\in L &\implies& \gap(M(x, \cdot)) \leq \frac{2^{q(|x|)}}{c}.\nonumber
\end{eqnarray}
The similarity to the definition of $\SBP$ is apparent. 

Formally speaking, we will actually be using promise versions of these complexity classes. For promise classes, languages are replaced by pairs of disjoint sets $(L_{YES}, L_{NO})$, where acceptance criteria must apply for inputs in $L_{YES}$ and rejection criteria for inputs in $L_{NO}$ but there may be inputs that lie in neither set (and there are no requirements on how the Turing machine or quantum circuits act on these inputs).

It was shown that the class $\mathsf{PromiseSBQP}=\mathsf{PromiseA}_0\PP$ is hard for the $\PH$, i.e.~$\PH \subseteq \P^\mathsf{PromiseSBQP}$ \cite{kuperberg2015hard}. This yields an alternate route to quantum computational supremacy: if one could classically simulate quantum circuits (with multiplicative error), then $\mathsf{PromiseSBQP}=\mathsf{PromiseSBP}$, which would imply the $\PH$ collapses to the third level. By exploiting $\mathsf{NQP} \subseteq \mathsf{SBQP}$ this statement could be improved to yield a collapse to the second level \cite{fujii2014impossibility, fujii2015power}.

\subsection{The problem \texttt{poly3-SGAP}}

To perform a fine-grained analysis we pick a specific problem suited to an analysis using $\SBP$, which we call \texttt{poly3-SGAP}. This problem plays the role of $\texttt{poly3-NONBALANCED}$ from the previous sections.

The input to $\texttt{poly3-SGAP}$ is a degree-3 polynomial $f$ over $\mathbb{F}_2$ with $n$ variables and no constant term, and the output should be YES if $(\gap(f)/2^n)^2 \geq 2^{-n-1}$, and NO if $(\gap(f)/2^n)^2 \leq 2^{-n-2}$. This is a promise problem; we are promised that the input $f$ does not satisfy $2^{-n-2} < (\gap(f)/2^n)^2 < 2^{-n-1}$. Refer to Figure \ref{fig:gapfdist} for a schematic of the division between YES and NO instances under the assumption that $\gap(f)$ is distributed as a Gaussian random variable. We let the set $\mathcal{F}_n^{\text{prom}}$ refer to all degree-3 polynomials with $n$ variables and no constant term that satisfy the promise, and if $f \in \mathcal{F}_n^{\text{prom}}$, we let $S(f) = 0$ if the answer to \texttt{poly3-SGAP} is NO and $S(f)=1$ if it is YES; thus \texttt{poly3-SGAP} is simply the task of computing the (partial) function $S$.

\begin{figure}[ht]
    \centering
    \includegraphics[width=0.95\columnwidth]{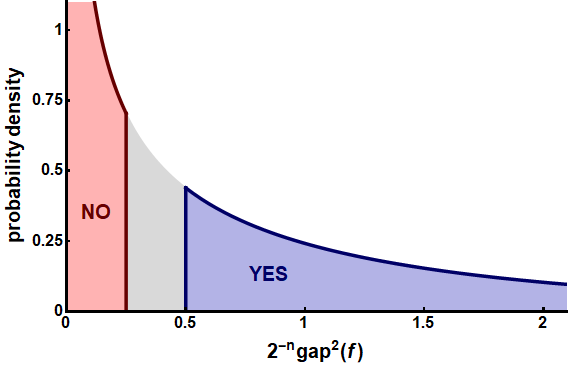}
    \caption{Distribution of $2^{-n}\gap(f)^2$ for randomly drawn $f$, under the assumption that $\gap(f)$ is distributed as a Gaussian with mean 0 and standard deviation $2^{n/2}$. For polynomials $f$ falling in the red (blue) area, occupying roughly 38\% (48\%) of the distribution, the answer to \texttt{poly3-SGAP} should be NO (YES). Polynomials falling in the gray area do not satisfy the promise. }
    \label{fig:gapfdist}
\end{figure}

Since one can reduce any Boolean function to a degree-3 polynomial over $\mathbb{F}_2$, \texttt{poly3-SGAP} is a prototypical $\mathsf{PromiseSBQP}$ problem. Indeed \texttt{poly3-SGAP} captures the hardness of $\mathsf{PromiseSBQP}$ since a \texttt{poly3-SGAP} oracle is sufficient to simulate a $\#\P$ oracle, which can be seen by adapting the proof in \cite{kuperberg2015hard} that $\P^{\#\P} = \P^{\mathsf{PromiseSBQP}}$. 

By the correspondence between IQP circuits and degree-3 polynomials, it is also clear that \texttt{poly3-SGAP} can be solved by an IQP circuit equipped with $\mathsf{SB}$ power, using only $n$ qubits and $g_1(n)$ gates. Given the input $f$, one simply needs to run the circuit $C_f$, whose acceptance probability is exactly $(\gap(f)/2^n)^2$; the acceptance probability is greater than $2^{-n-1}$ when the YES criteria are met, and it is less than $2^{-n-2}$ when the NO criteria are met.

Unlike non-determinism, $\mathsf{SB}$ power affords the brute-force algorithm a quadratic speedup; $\texttt{poly3-SGAP}$ can be solved in $\text{poly}(n)2^{0.5n}$ classical $\mathsf{SB}$ time. How this is done is discussed later in Theorem \ref{thm:oracle-ave-LB}.

\subsection{Lower bounds for additive-error simulation}

\subsubsection{For IQP circuits}

Deriving fine-grained lower bounds for the additive-error case will follow the same structure as the multiplicative-error case, with a few differences. We begin by assuming we have a classical simulation algorithm that for some fixed constant $\epsilon$, produces samples up to additive error $\epsilon$ from any IQP circuit  with $q$ qubits and $g_1(q)$ internal gates in $s_4(q)$ time steps. We will use this classical algorithm as a subroutine to solve \texttt{poly3-SGAP} in the average case.

Given as input a random instance $f$ from $\mathcal{F}_n^{\text{prom}}$, we let $\bar{f}$ refer to the degree-3 polynomial that has no degree-1 terms, but whose degree-2 and degree-3 terms are the same as $f$. Moreover let the set $[\bar{f}]$ contain all polynomials $g$ for which $\bar{f}=\bar{g}$. Let the $n$-bit string $\delta^f\in \{0,1\}^n$ be defined such that $\delta^f_i = 1$ if and only if the degree-1 term $x_i$ appears in the polynomial $f(x)$. Thus $f(x) = \bar{f}(x) + \delta^f \cdot x$, where $\delta^f \cdot x = \sum_i \delta^f_ix_i$. We construct the IQP circuit $C_{\bar{f}}$ which has $n$ qubits and $g_1(n)$ gates. It was noted previously that 
\begin{equation}
    \bra{\bar{0}}U_{\bar{f}} \ket{\bar{0}} = \gap(\bar{f})/2^n,
\end{equation}
where $U_f$ is the unitary implemented by the circuit $C_f$ and $\ket{\bar{0}}$ is the all $\ket{0}$ starting state. This can easily be generalized to additionally show that
\begin{equation}
    \bra{\delta^f}U_{\bar{f}} \ket{\bar{0}} = \gap(f)/2^n,
\end{equation}
that is, the probability of measuring outcome $\delta^f$ after running $U_{\bar{f}}$ is exactly proportional to $\gap(f)^2$.

Thus we construct the following classical algorithm $\mathcal{A}$: given $f$, we use our classical simulation algorithm to produce a sample from an output distribution that approximates that of $C_{\bar{f}}$ up to $\epsilon$ additive error. If outcome $\delta^f$ is obtained, $\mathcal{A}$ accepts, and otherwise $\mathcal{A}$ rejects. For any $z \in \{0,1\}^n$, let $P_{\bar{f}}(z)$ be the probability the classical simulation algorithm produces outcome $z$ on the circuit $C_{\bar{f}}$ and let $Q_{\bar{f}}(z)$ be probability associated with the true distribution. So the acceptance probability of the classical algorithm we constructed is exactly $P_{\bar{f}}(\delta^f)$. We claim that the algorithm $\mathcal{A}$, when considered as an algorithm with so-called $\mathsf{SB}$ power, solves \texttt{poly3-SGAP} on a large fraction ($1-60\epsilon$) of instances. This is formally stated as follows.

\begin{theorem}
For $f \in \mathcal{F}_n^{\text{prom}}$, let $\mathcal{A}(f)$ evaluate to 1 if $S(f) = 1$ and $P_{\bar{f}}(\delta^f) \geq 2^{-n-1}*5/6$ or if $S(f)=0$ and $P_{\bar{f}}(\delta^f) \leq 2^{-n-1}*2/3$. Otherwise it evaluates to 0. Then the fraction of $f \in \mathcal{F}_n^{\text{prom}}$ for which $\mathcal{A}(f) = 1$ is at least $1-60\epsilon$.
\end{theorem}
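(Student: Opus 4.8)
The plan is to exploit the \emph{exact} hiding built into the construction: for $f \in \mathcal{F}_n^{\text{prom}}$ the simulator only ever sees the circuit $C_{\bar f}$, which depends on $\bar f$ alone, whereas the degree-$1$ part of $f$ — encoded in the string $\delta^f$ — picks out \emph{which} output amplitude of $C_{\bar f}$ is examined, and $\bra{\delta^f}U_{\bar f}\ket{\bar 0}=\gap(f)/2^n$, so $Q_{\bar f}(\delta^f)=(\gap(f)/2^n)^2$. Because the simulator never learns $\delta^f$, it cannot steer its (bounded) additive error onto the single outcome that the algorithm $\mathcal{A}$ reads. The argument is then a counting/averaging argument over the equivalence classes $[\bar f]$, together with one distributional input about $\gap(f)$.

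First I would rewrite the event $\mathcal{A}(f)=0$ as a pointwise-error event at $\delta^f$. If $S(f)=1$ the promise gives $Q_{\bar f}(\delta^f)\ge 2^{-n-1}$, so $\mathcal{A}(f)=0$ forces $P_{\bar f}(\delta^f)<\tfrac56 2^{-n-1}$ and hence $|P_{\bar f}(\delta^f)-Q_{\bar f}(\delta^f)|>\tfrac16 2^{-n-1}$; symmetrically, if $S(f)=0$ then $Q_{\bar f}(\delta^f)\le 2^{-n-2}$ and $\mathcal{A}(f)=0$ forces $P_{\bar f}(\delta^f)>\tfrac23 2^{-n-1}$, again giving $|P_{\bar f}(\delta^f)-Q_{\bar f}(\delta^f)|>\tfrac16 2^{-n-1}=2^{-n}/12$. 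So every bad $f$ is one on which the simulator's error at the one outcome $\delta^f$ exceeds $2^{-n}/12$.

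Next, for a fixed $\bar f$, I would apply the additive-error hypothesis $\sum_z|P_{\bar f}(z)-Q_{\bar f}(z)|\le\epsilon$ with Markov's inequality to conclude that at most a $12\epsilon$ fraction of the $2^n$ outcomes $\delta\in\{0,1\}^n$ carry error exceeding $2^{-n}/12$. Since, for $f$ uniform over all degree-$3$ polynomials over $\mathbb{F}_2$ with no constant term, the pair $(\bar f,\delta^f)$ is distributed as ``$\bar f$ uniform, $\delta^f$ uniform and independent,'' averaging this fixed-$\bar f$ bound over $\bar f$ gives $\Pr_f[\mathcal{A}(f)=0\text{ and }f\in\mathcal{F}_n^{\text{prom}}]\le 12\epsilon$. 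Finally I would condition on the promise: $\Pr_{f\in\mathcal{F}_n^{\text{prom}}}[\mathcal{A}(f)=0]\le 12\epsilon/\Pr_f[f\in\mathcal{F}_n^{\text{prom}}]$, and lower-bound the denominator by $1/5$ using the distributional analysis of $\gap(f)^2$ for random degree-$3$ $f$ (the Gaussian-moment / anticoncentration result), which places at most $\approx 14\%$ — and in any case at most $4/5$ — of the mass of $(\gap(f)/2^n)^2$ in the forbidden window $(2^{-n-2},2^{-n-1})$. This yields $\Pr_{f\in\mathcal{F}_n^{\text{prom}}}[\mathcal{A}(f)=0]\le 60\epsilon$, i.e.\ $\mathcal{A}(f)=1$ on a $\ge 1-60\epsilon$ fraction of $\mathcal{F}_n^{\text{prom}}$.

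The main obstacle is exactly this last ingredient: making the density bound $\Pr_f[f\in\mathcal{F}_n^{\text{prom}}]\ge 1/5$ rigorous requires controlling the full distribution of $\gap(f)$ for uniformly random degree-$3$ $f$ — and, since the later qubit estimates use concrete finite $n$, possibly a sharper statement than the asymptotic Gaussian one. Everything else is exact bookkeeping, enabled by the exactness of the hiding $\bra{\delta^f}U_{\bar f}\ket{\bar 0}=\gap(f)/2^n$ and by our having replaced Stockmeyer estimation of $\gap(f)^2$ with the decision problem $\texttt{poly3-SGAP}$.
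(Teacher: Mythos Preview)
Your proposal is correct and follows essentially the same approach as the paper: the paper also reduces $\mathcal{A}(f)=0$ to the pointwise bound $|P_{\bar f}(\delta^f)-Q_{\bar f}(\delta^f)|\ge 2^{-n}/12$, combines this with the additive-error budget $\epsilon$ per circuit $C_{\bar f}$, and invokes the separate lemma that $\Pr_f[f\in\mathcal{F}_n^{\text{prom}}]\ge 1/5$ to pass from the unconditional bound to the conditional one. The only cosmetic difference is that the paper phrases the counting step as a contradiction/pigeonhole argument (find a single class $[\bar g]$ with a $\ge 12\epsilon$ fraction of bad $g'$) whereas you phrase it as Markov's inequality averaged over $\bar f$; these are equivalent.
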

\begin{proof}
Extend the function $\mathcal{A}$ to the entire set of degree-3 polynomials with $n$ variables by letting $\mathcal{A}(g)=1$ if $g \not\in \mathcal{F}_n^{\text{prom}}$.
Now suppose the statement were false. By Lemma \ref{lem:promisefraction}, below, the fraction of degree-3 polynomials in $\mathcal{F}_n^{\text{prom}}$ that satisfy the promise is at least 1/5; hence, the fraction of all degree-3 polynomials for which $\mathcal{A}(f)=0$ is at least $12\epsilon$. Thus, there must be some $g$ where the fraction of $g' \in [\bar{g}]$ for which $\mathcal{A}(g')=0$ is at least $12 \epsilon$. For each such $g'$ with $\mathcal{A}(g')=0$, it is either the case that $S(g') = 1$, $Q_{\bar{g}}(\delta^{g'}) \geq 2^{-n-1}$, and $P_{\bar{g}}(\delta^{g'}) < 2^{-n-1}*5/6$ or that $S(g') = 0$, $Q_{\bar{g}}(\delta^{g'}) \leq 2^{-n-1}*1/2$, and $P_{\bar{g}}(\delta^{g'}) > 2^{-n-1}*2/3$. In particular, $\lvert Q_{\bar{g}}(\delta^{g'})-P_{\bar{g}}(\delta^{g'}) \rvert \geq 2^{-n}/12$ for all such $g'$, and the number of $g'$ for which this is true is at least $2^n*12\epsilon$. Thus $\sum_{z}\lvert P_{\bar{g}}(z)-Q_{\bar{g}}(z) \rvert > \epsilon$, contradicting that the classical simulation algorithm has additive error at most $\epsilon$. 
\end{proof}

\begin{lemma}\label{lem:promisefraction}
For all $n$, the fraction of degree-3 polynomials $f$ with $n$ variables for which $f \in \mathcal{F}_n^{\text{prom}}$ is at least 1/5.
\end{lemma}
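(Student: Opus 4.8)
The plan is to prove the complementary bound: the fraction of degree-$3$ polynomials $f$ that \emph{violate} the promise, i.e.\ those with $2^{n-2} < \gap(f)^2 < 2^{n-1}$ (the ``gray band''), is at most $4/5$. Two kinds of quantitative input are needed. First, the low moments of $\gap(f)$ for uniformly random $f$: since $f$ has no constant term, $\gap(f)$ is always an even integer, and a short character-sum computation --- for a fixed tuple $(w,x,y,z)$, the expectation $\mathbb{E}_f[(-1)^{f(w)+f(x)+f(y)+f(z)}]$ is nonzero exactly when the multiset $\{w,x,y,z\}$ splits into two equal pairs (four distinct points with $w+x+y+z=0$ form a two-dimensional affine plane $\{p,p+u,p+v,p+u+v\}$, on which a quadratic monomial contributes $u_iv_j+v_iu_j\neq 0$ for some $i,j$, so cancellation fails) --- yields $\mathbb{E}_f[\gap(f)]=1$, $\mathbb{E}_f[\gap(f)^2]=2^n$, $\mathbb{E}_f[\gap(f)^4]=3\cdot 2^{2n}-2^{n+1}$, and more generally that the normalized moments $\mathbb{E}_f[(\gap(f)/2^{n/2})^k]$ all converge, with explicit $O_k(2^{-n})$ error, to the moments of a standard Gaussian. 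Second, an integrality observation about the gray band.

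I would first dispose of small $n$ using integrality. Because $\gap(f)^2$ is a perfect square divisible by $4$, the open interval $(2^{n-2},2^{n-1})$ contains no admissible value of $\gap(f)^2$ when $n\le 6$: for $n\le 4$ the interval has length less than $4$, and for $n=5,6$ one checks directly that no square of an even integer lies strictly inside. Hence for $n\le 6$ no $f$ lies in the gray band, so every degree-$3$ polynomial lies in $\mathcal{F}_n^{\text{prom}}$ and the fraction is $1\ge 1/5$.

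For $n\ge 7$ I would run a moment-problem (linear-programming) argument on $Y:=\gap(f)/2^{n/2}$, whose job is to bound $\Pr_f[Y^2\in(1/4,1/2)]$. Concretely: exhibit a fixed polynomial $q$ with $q(s)\ge \mathbf{1}[s\in(1/4,1/2)]$ for all $s\ge 0$ and with $\sum_j q_j(2j-1)!!$ comfortably below $4/5$ (a degree-two $q$ alone, corresponding to a Paley--Zygmund bound from the second and fourth moments, only gives $\Pr_f[\gap(f)^2\ge 2^{n-1}]\gtrsim 1/9$, short of $1/5$, so $q$ must have somewhat higher degree, using moments of $\gap(f)$ up to order roughly $8$). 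Then $\Pr_f[Y^2\in(1/4,1/2)]\le \mathbb{E}_f[q(Y^2)]=\sum_j q_j\,\mathbb{E}_f[Y^{2j}]\le \sum_j q_j(2j-1)!!+\sum_j|q_j|\cdot|\epsilon_j(n)|$, where $\epsilon_j(n)$ is the $2j$-th moment discrepancy; since each $\epsilon_j(n)$ is $O_j(2^{-n})$ with an explicit constant (for instance exactly $-2^{1-n}$ for $j=2$), the last sum lies below the available slack for every $n\ge 7$, giving gray fraction $\le 4/5$ as required. (Should the slack be too tight at the very first values of $n\ge 7$, those can be absorbed either by using a few more moments or by computing the distribution of $\gap(f)$ directly, organized via $f=f_0+z_n h$ with $f_0$ of degree $\le 3$ and $h$ of degree $\le 2$ on $n-1$ variables, so that $\gap(f)=\gap(f_0)+\gap(f_0+h)$.)

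The step I expect to be the main obstacle is controlling the ``NO'' half of the promise --- that $\gap(f)^2$ is atypically \emph{small} with constant probability. This cannot be certified from any fixed, short list of low moments: one can construct a distribution matching $\mathbb{E}[\gap^2]$ and $\mathbb{E}[\gap^4]$ (indeed even $\mathbb{E}[\gap^6]$) that places almost all of its mass inside the gray band, fixing the moments with a thin tail at large values. So the argument is forced to lean on the full near-Gaussian moment structure, and the only reason this yields a statement for \emph{every} $n$ rather than just asymptotically is the fast, explicit decay of the moment discrepancies --- with the integrality observation mopping up the handful of small $n$ where the Gaussian approximation is too crude for the polynomial bound to bite.
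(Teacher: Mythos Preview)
Your strategy is sound but organized differently from the paper's. The paper does not bound the gray band; instead it lower-bounds the two promise regions separately. For the YES region it applies Paley--Zygmund to $Z=\ngap(f)^2$ using only $\mathbb{E}[Z]=2^{-n}$ and $\mathbb{E}[Z^2]\le 3\cdot 2^{-2n}$ to get $\Pr[\ngap(f)^2\ge 2^{-n-1}]\ge 1/12$ (so this part needs no higher moments and holds for every $n$). For the NO region it constructs an explicit polynomial minorant of $\mathbf{1}[s\le 1/4]$---a degree-$15$ Chebyshev-based polynomial $p$ with $p(s)\le \mathbf{1}[s\le 1/4]$ on $[0,\infty)$ and $\mathbb{E}_{x\sim\mathcal N(0,1)}[p(x^2)]\approx 0.122$---and combines this with the moment-convergence theorem you also invoke to conclude $\Pr[\ngap(f)^2\le 2^{-n-2}]\ge 0.12$. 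Summing $1/12+0.12>1/5$ finishes.

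What each approach buys: the paper's split isolates the easy half (YES via second and fourth moments only) from the hard half (NO via a high-degree minorant), and it actually exhibits the polynomial. Your complementary gray-band bound is equivalent in spirit, and your integrality observation for $n\le 6$ is a genuine improvement over the paper, which invokes its NO-region lemma only for $n>n_0$ without explicitly treating small $n$. On the other hand, you have not constructed the polynomial majorant $q$ of $\mathbf{1}[s\in(1/4,1/2)]$, and that is precisely where the work lies; the paper's degree-$15$ construction shows the kind of object you would need. Your diagnosis that the NO side is the obstruction---unreachable from any fixed finite list of low moments and requiring the full near-Gaussian structure---is exactly right and is what forces the paper to go to degree $15$. (A minor numerical slip: Paley--Zygmund with $\theta=1/2$ and the fourth-moment ratio $3$ gives $1/12$, not $1/9$.)
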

\begin{proof}
In Ref.~\cite{bremner2016average}, it was shown that $\mathbb{E}((\gap(f)/2^n)^4) \leq 3 * 2^{-2n}$, where the expectation is taken over all degree-3 polynomials with $n$ variables. Moreover, $\mathbb{E}((\gap(f)/2^n)^2) = 2^{-n}$. The Paley-Zygmund inequality states that for a random variable $Z$, $\Pr(Z > \theta \mathbb{E}(Z)) \geq (1-\theta)^2 \mathbb{E}(Z)^2/\mathbb{E}(Z^2)$. Taking $Z = \gap(f)^2/2^{2n}$ and $\theta = 1/2$, we find that $\Pr((\gap(f)/2^n)^2 \geq 2^{-n-1}) \geq 1/12$. Additionally, in Lemma \ref{lem:massbound} in Appendix \ref{app:moments}, we extend the method of \cite{bremner2016average} to compute more moments and prove that $\Pr((\gap(f)/2^n)^2 \leq 2^{-n-2}) \geq 0.12$. Taken together, this shows that at least 1/5 of the instances satisfy the promise. 
\end{proof}

Now we make an average-case fine-grained conjecture about $\mathsf{SB}$ algorithms that solve \texttt{poly3-SGAP}.

\begin{conj}[poly3-ave-SBSETH($\ptSBc$)]\label{conj:poly3aveSBSETH}
    Suppose a classical probabilistic algorithm (in the Word RAM model of computation) runs in $p(n)$ time steps and has the following property: there exists a function $q(n) \leq p(n)$ and a constant $c$ such that for at least a 11/12 fraction of instances $f$ from $\mathcal{F}_n^{\text{prom}}$, either $\gap(f)^2 \geq 2^{n-1}$ and the probability the algorithm accepts $f$ is at least $2^{-q(n)}$ or $\gap(f)^2 \leq 2^{n-2}$ and the probability the algorithm accepts $f$ is at most $2^{-q(n)}/c$. Then, the algorithm must satisfy $p(n) \geq 2^{\ptSBc n-1}$. That is, there is no classical algorithm with $\mathsf{SB}$ power that solves $\texttt{poly3-SGAP}$ and runs in fewer than $2^{\ptSBc n-1}$ timesteps.
\end{conj}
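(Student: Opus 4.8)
Conjecture~\ref{conj:poly3aveSBSETH} is a fine-grained average-case hypothesis, so it cannot be established outright without resolving major open problems; what one can do — and what I would do — is assemble evidence in the same spirit as the evidence gathered for poly3-NSETH($\ptNc$) in Section~\ref{sec:evidence}. The plan has three components: (i) prove an unconditional query-model lower bound showing that $\ptSBc=1$ holds in the black-box setting and, more importantly, that $\ptSBc=1/2$ is the best one can hope for there; (ii) observe that the LPTWY-style algebraic speedups do not obviously transfer to \texttt{poly3-SGAP} equipped with $\mathsf{SB}$ power; and (iii) tie the average-case statement to worst-case hardness via a self-reduction, and pin down the distribution of $\gap(f)$ for random degree-3 $f$ well enough that the average-case promise is not vacuous.

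For (i) I would prove an $\mathsf{SB}$-analogue of Theorem~\ref{thm:oracle-LB}: any algorithm with black-box access to $f$ and $\mathsf{SB}$ power that, on at least an $11/12$ fraction of promise instances, accepts with probability above its threshold when $\gap(f)^2 \geq 2^{n-1}$ and below the $c$-times-smaller threshold when $\gap(f)^2 \leq 2^{n-2}$, must query $f$ at $\Omega(2^{n/2})$ points. The argument follows the counting scheme of Theorem~\ref{thm:oracle-LB} but must respect the pair of exponentially small $\mathsf{SBP}$ thresholds together with the allowed bad fraction: fix a YES instance the algorithm handles correctly, examine the ensemble of accepting computation paths and the points of $f$ they read, and show that if all these query sets are small then one can re-randomize $f$ off the queried points to produce NO instances that the algorithm still accepts with essentially the same probability — contradicting success on the bulk of inputs. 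The reason the bound is $\Omega(2^{n/2})$ rather than $\Omega(2^{n})$ is precisely that $\mathsf{SB}$ power buys a Grover-like quadratic speedup; this is matched by the $\text{poly}(n)2^{n/2}$-time $\mathsf{SB}$ algorithm of Theorem~\ref{thm:oracle-ave-LB}, which splits the $n$ variables into two blocks of size $n/2$, uses $\mathsf{SB}$ randomness to guess the first block, counts deterministically over the second, and tunes thresholds so the acceptance probability tracks $\gap(f)^2$ across the YES/NO gap.

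For (iii) I would give a worst-case to hybrid average/worst-case self-reduction for computing $\gap(f)$ on degree-3 $f$: precompose $f$ with a uniformly random invertible $\bbF_2$-linear map and a random affine shift, which sends a fixed worst-case instance to a distribution close to uniform over degree-3 polynomials while preserving $\gap$ up to controlled relabeling, then recover worst-case values by low-degree interpolation over the randomized parameters, following Lipton's reduction for the permanent; because $\gap$ is not itself a low-degree function of all the coefficients, only a subset may be randomized, hence the ``hybrid'' qualifier. In parallel I would extend the fourth-moment estimate of Bremner et al.~\cite{bremner2016average} to all moments: write $\bbE_f[\gap(f)^{2k}]$ as a sum over $2k$-tuples $(x_1,\dots,x_{2k})$ of the indicator that the degree-$\le 3$ monomial-evaluation vectors $\phi_3(x_1),\dots,\phi_3(x_{2k})$ sum to zero over $\bbF_2$, count that these tuples number $(2k-1)!!\,2^{kn}(1+o(1))$, and conclude that $\gap(f)/2^{n/2}$ converges in moments to a standard Gaussian — this justifies Figure~\ref{fig:gapfdist}, strengthens Lemma~\ref{lem:promisefraction}, and confirms that a constant fraction of instances satisfy the promise.

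The main obstacle is making the self-reduction in (iii) fine-grained: textbook random self-reductions and interpolation arguments incur polynomial blow-ups in the number of variables and in the number of oracle calls, and an uncontrolled blow-up of this kind would either degrade the exponent $\ptSBc$ or effectively move us to $\mathsf{SB}$ algorithms on larger instances, thereby weakening rather than supporting the conjecture. A secondary difficulty is bookkeeping in (i): the ``freeze the queried values, re-randomize the rest'' adversary must be carried out while simultaneously tracking the upper threshold, the $c$-scaled lower threshold, and the $1/12$ error allowance — none of which appear in the simpler non-deterministic argument of Theorem~\ref{thm:oracle-LB}.
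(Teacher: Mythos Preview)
Your overall plan is right in spirit: since the statement is a conjecture, the paper does not prove it but assembles evidence along the three axes you identify --- a black-box query bound (Theorem~\ref{thm:oracle-ave-LB}), a worst-to-quasi-average-case reduction (Theorem~\ref{thm:worstcaseavecasereduction}), and the Gaussian moment computation (Theorem~\ref{thm:moments}). Your moment calculation sketch in (iii) matches the paper's argument closely.

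Two substantive points where you diverge from the paper. First, in (i) you claim the black-box lower bound shows ``$\ptSBc=1$ holds'' --- it does not; the bound is $\Omega(2^{n/2})$, i.e.\ exactly $\ptSBc=1/2$, and the matching $O(2^{n/2})$ upper bound \emph{refutes} any $\ptSBc>1/2$ even in the black-box model. Your description of that upper-bound algorithm is also off: there are no variables to ``split into two blocks'' in the black-box setting. The paper's algorithm simply samples $L=O(2^{n/2})$ random inputs, accepts iff all evaluations agree, and computes that the resulting acceptance probability $(1/2+|\ngap(h)|/2)^L$ separates YES from NO instances by a constant factor once $L\approx 2^{n/2}$.

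Second, and more importantly, your self-reduction in (iii) is not the paper's, and your proposed route is unlikely to work. You suggest Lipton-style polynomial interpolation after precomposing with a random invertible $\bbF_2$-linear map; but such a precomposition preserves $\gap(f)$ exactly and hence gives nothing to interpolate, while parameterizing coefficients over $\bbF_2$ does not make $\gap(f)=\sum_x(-1)^{f(x)}$ a low-degree function of the parameter in any useful sense. The paper explicitly notes that its method ``is quite distinct from the strategy of polynomial interpolation.'' What it actually does is randomize only the degree-1 part (drawing $g\in[\bar f]$), perform a single bijective linear change of variables to write $g(x')=f'(x')+x_n'$, and exploit the exact identity $\gap(f)=\gap(g)+2\gap(f'\mid x_n'=1)$ to recurse down to $n-1$ variables. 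This is where the ``quasi-average-case'' limitation comes from: only the linear part is randomized because that is what the identity can absorb. Your interpolation-based plan would need a different mechanism to recover $\gap(f)$ from randomized instances, and you have not supplied one.
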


As long as $\epsilon < 1/720$, the algorithm $\mathcal{A}$ is exactly such an algorithm, with runtime $s_4(n)$. Thus, assuming poly3-ave-SBSETH($\ptSBc$), we obtain the lower bound
\begin{equation}\label{eq:iqpaddbound}
    s_4(n) \geq 2^{\ptSBc n-1}.
\end{equation}

The choice $11/12 \approx 0.92$ in Conjecture \ref{conj:poly3aveSBSETH} is motivated by a rigorous bound on the maximum success probability of a naive algorithm that always outputs the same answer on the \texttt{poly3-SGAP} problem. We discuss this further in Section \ref{sec:additiveevidence} and prove this bound in Appendix \ref{app:moments}. If we assume that the gap of random degree-3 polynomials is distributed as a Gaussian, as depicted in Figure \ref{fig:gapfdist}, this bound would be improved to roughly $0.56$. Moreover, the same assumption would allow the result in Lemma \ref{lem:promisefraction} to be improved from 1/5 to $0.86$. Assuming the Gaussian gap distribution and replacing 11/12 with 0.56 in Conjecture \ref{conj:poly3aveSBSETH}, the error tolerance of our additive-error approximation algorithm could be improved from $\epsilon<1/720$ to $\epsilon < 1/32$, which would be significantly more attainable experimentally.

\subsubsection{For QAOA circuits}

We previously demonstrated how one turns the IQP circuit $C_f$ into a QAOA circuit by introducing $n$ ancilla qubits. Using this construction as the only modification to the previous analysis yields the following lower bound. 

\begin{equation}\label{eq:qaoaaddbound}
    s_5(n) \geq 2^{\frac{\ptSBc n}{2} - 1},
\end{equation}
where $s_5(n)$ is the number of time steps for a hypothetical algorithm that simulates any QAOA circuit with $n$ qubits and $g_2(n)$ gates, up to additive error at most $\epsilon < 1/720$.

\subsection{Evidence for average-case conjecture}\label{sec:additiveevidence}

Our fine-grained lower bounds derived in the previous subsection rest on poly3-ave-SBSETH($\ptSBc$), a conjecture that asserts the nonexistence of fine-grained exponential-time classical algorithms with $\mathsf{SB}$ power that solve a hard problem in the average case. There are two main differences between poly3-ave-SBSETH($\ptSBc$) and the previously discussed poly3-NSETH($\ptNc$). First, poly3-ave-SBSETH($\ptSBc$) pertains to $\mathsf{SB}$ classical algorithms instead of non-deterministic classical algorithms. The power of $\mathsf{SB}$ is at least as powerful as non-determinism: $\SBP$ contains $\NP$. Yet, $\SBP$ still lies only in the second level of the $\PH$, somewhere between $\NP$ and $\MA$. This aspect makes poly3-ave-SBSETH($\ptSBc$) less plausible than poly3-NSETH($\ptNc$) when $\ptSBc = \ptNc$; as we will see later, the move from non-determinism to $\mathsf{SB}$ yields a quadratic speedup of the black-box query complexity, but the similarity between the two models makes additional speedups based on this aspect unlikely. The second and more substantial difference is that poly3-SBSETH($\ptSBc$) is about an average-case problem, while poly3-NSETH($\ptNc$) is about a worst-case problem. Indeed, while the problem \texttt{poly3-SGAP} is known to be hard for the $\PH$ in the worst case, its complexity is unknown in the average case, much less its fine-grained complexity.

Note that in \cite{ball2017average} and \cite{goldreich2017worst}, average-case fine-grained complexity was studied in the context of polynomial-time algorithms; they defined various average-case problems in $\P$ and showed that standard worst-case fine-grained conjectures like SETH imply lower bounds for these problems that essentially match known upper bounds. Ideally, we would be able to show something similar for our problem, but we are unable to do so due to the fact that our problem pertains to $\mathsf{SB}$ algorithms for $\#\P$-hard problems. 

Nonetheless, in the remainder of this section, we strive to put the average-case aspect of poly3-ave-SBSETH($\ptSBc$) in context and provide as much evidence as possible in its favor.

A necessary condition for \texttt{poly3-SGAP} to be average-case hard is that naive determinstic algorithms that produce the same output on every input cannot solve it on more than a constant fraction of inputs. The maximum success probability of such an algorithm is the fraction of instances for which the output should be YES, or the fraction for which the output should be NO, whichever is larger. In the limit as $n$ becomes large, this is given mathematically by
\begin{equation}\label{eq:naiveprob}
p_0 = \liminf_{n\rightarrow \infty}\left(\max_{j=0,1}\left( \Pr_{f \leftarrow \mathcal{F}_n^{\text{prom}}}(S(f)=j)\right)\right).
\end{equation}
We would say the problem is naively average-case easy if $p_0 = 1$. If, for large $n$, $\gap(f)/2^n$ were distributed as a Gaussian with mean 0 and variance $2^{-n}$ as illustrated in Figure \ref{fig:gapfdist}, then $p_0 = 0.48/(0.38+0.48) = 0.56$. In Appendix \ref{app:moments}, we show that, in the limit of large $n$, the moments $\mathbb{E}((\gap(f)/2^n)^{2k})$ approach $\mathbb{E}(x^{2k})$ when $x$ is distributed as a Gaussian with mean 0 and variance $2^{-n}$. Although we do not formally prove that this implies the distribution itself approaches a Gaussian distribution, we take it as evidence in favor of that hypothesis. Moreover, as we also show in Appendix \ref{app:moments}, our calculation of the moments formally implies that $p_0 \leq 11/12$, indicating that $\texttt{poly3-SGAP}$ is not naively average-case easy. This contrasts with the problem \texttt{poly3-NONBALANCED} for which we expect it to be the case that $p_0 = 1$; the fraction of instances for which $\gap(f)=0$ vanishes with increasing $n$.

It remains possible a more sophisticated algorithm could solve \texttt{poly3-SGAP} efficiently. We now provide arguments showing how certain approaches to find such an algorithm are doomed for failure.

We begin by extending the black-box argument from Section \ref{sec:evidence} so that it rules out $\mathsf{SB}$ algorithms solving the $\texttt{SGAP}$ problem in the average case for random general Boolean functions on $n$ variables:

\begin{customthm}{\ref{thm:oracle-ave-LB} (informal)}
A black-box algorithm equipped with $\mathsf{SB}$ power that solves the \texttt{SGAP} problem on $0.56+\epsilon$ fraction of instances for some $\epsilon>0$ must make at least $\Omega(2^{n/2})$ queries, and moreover there is a matching upper bound.
\end{customthm}

\begin{theorem}\label{thm:oracle-ave-LB}
Let $\mathcal{H}_n$ contain all $n$-bit Boolean functions for which either $(\gap(h)/2^n)^2 \geq 2^{-n-1}$ or $(\gap(h)/2^n)^2 \leq 2^{-n-2}$. Suppose a randomized algorithm $\mathcal{A}$ with black-box access to $h \in \mathcal{H}_n$ has the following properties: (1) $\mathcal{A}$ makes $L$ queries to $h$ and (2) there exists a function $t(n)$ and constants $n_0$, $c>1$, and $\epsilon > 0$, such that whenever $n \geq n_0$ and for a fraction at least $0.56+\epsilon$ of $h \in \mathcal{H}_n$, $\mathcal{A}(h)$ accepts with at least probability $t(n)$ if $(\gap(h)/2^n)^2 \geq 2^{-n-1}$, and accepts with at most probability $t(n)/c$ if $(\gap(h)/2^n)^2 \leq 2^{-n-2}$. Then $L \geq \Omega(2^{n/2})$. Moreover, there exists such an algorithm with $L = O(2^{n/2})$.
\end{theorem}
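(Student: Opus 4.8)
The plan is to prove both halves separately. The engine for the lower bound is an anticoncentration estimate for a uniformly random Boolean function $h$ on $n$ bits: since $\gap(h)$ is a sum of $2^n$ i.i.d.\ $\pm 1$ variables, its local central limit theorem is smooth at scale $2^{n/2}$, so fixing the values of $h$ on any set of at most $\ell$ coordinates perturbs the distribution of $\gap(h)$ by only $O(\ell/2^{n/2})$ in the bulk $|g|=O(2^{n/2})$. Quantitatively: conditioning on $\{\gap(h)=g\}$, every cylinder $C$ of codimension $d\le\ell$ satisfies $\nu(C\mid \gap(h)=g)=\nu(C)\,(1\pm O(\ell/2^{n/2}))$ uniformly in $C$ and in bulk $g$ (one writes $\nu(C\mid\gap=g)=2^{-d}\,\Pr[\gap(h_{\mathrm{rest}})=g-\sigma_C]/\Pr[\gap(h)=g]$ with $|\sigma_C|\le d$, and estimates the ratio of binomial point masses by Stirling). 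Because the YES and NO gap-distributions of $\texttt{poly3-SGAP}$ (or $\texttt{SGAP}$ for general $h$) are both supported in the bulk, and the limiting fraction $\rho_\infty$ of YES instances among $\mathcal{H}_n$ is, by the CLT, $\approx 0.556$ --- strictly between $1/2$ and $1$ --- this insensitivity will pin the achievable success probability at $\rho_\infty+o(1)<0.56+\epsilon$.

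\textbf{The matching upper bound.} The algorithm queries $h$ at $k=\Theta(2^{n/2})$ independent uniformly random points and accepts iff all $k$ answers agree. Writing $p=\Pr_x[h(x)=0]=\tfrac12+\gap(h)/2^{n+1}$, its acceptance probability is exactly $p^k+(1-p)^k$, which for $k\ge 2$ is strictly increasing in $|\gap(h)|$, hence in $(\gap(h)/2^n)^2$. Choosing $k$ so that $2k\,|\gap(h)|/2^{n+1}$ is a fixed positive constant at the $\texttt{SGAP}$ threshold makes this probability at \emph{every} YES instance a fixed constant factor $c>1$ larger than at \emph{every} NO instance (for large $n$ the relevant ratio tends to $\cosh(\text{const})/\cosh(\text{const}/\sqrt 2)>1$), so run with $\mathsf{SB}$ power it decides the problem correctly on all of $\mathcal{H}_n$ with $L=k=O(2^{n/2})$ queries; taking $t(n)$ to be the minimal YES acceptance probability verifies the hypotheses.

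\textbf{The lower bound.} First handle the non-deterministic analogue: an $L$-query non-deterministic algorithm accepts $h$ iff $h$ lies in a union $\mathrm{ACC}^+$ of codimension-$\le L$ cylinders, so by the key estimate $\Pr_{\text{YES}}[h\in\mathrm{ACC}^+]=\Pr_{\text{NO}}[h\in\mathrm{ACC}^+]\pm O(L/2^{n/2})$, and optimizing $\rho_\infty\gamma+(1-\rho_\infty)(1-\gamma)$ over $\gamma\in[0,1]$ gives success probability at most $\rho_\infty+o(1)$ unless $L=\Omega(2^{n/2})$. To upgrade to $\mathsf{SB}$: the acceptance probability $a(h)$ is a deterministic function of $h$, and its $k$-th moment $\bbE[a(h)^k]$ is read off from a $k$-fold transcript, which fixes $h$ on at most $kL$ coordinates; the same estimate therefore gives $\bbE_{\text{YES}}[a(h)^k]=\bbE_{\text{NO}}[a(h)^k]\pm O(kL/2^{n/2})$ for each fixed $k$. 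A moment-method argument (approximate the step function by a degree-$k$ polynomial, use matching moments, and pick a threshold $v\in(t(n)/c,\,t(n))$ that avoids the finitely many atoms of the laws of $a(h)$ --- possible since an interval is uncountable) yields $\Pr_{\text{NO}}[a\le t/c]\le\Pr_{\text{NO}}[a\le v]\le\Pr_{\text{YES}}[a\le v]+o(1)\le\Pr_{\text{YES}}[a<t]+o(1)$, and feeding this into the same optimization again caps the success probability at $\rho_\infty+o(1)<0.56+\epsilon$, forcing $L=\Omega(2^{n/2})$.

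\textbf{Main obstacle.} The delicate point is precisely the last step: the first-moment fact $|\bbE_{\text{YES}}[a]-\bbE_{\text{NO}}[a]|=o(1)$ is by itself worthless, because $\mathsf{SB}$ power amplifies an exponentially small difference in acceptance probabilities into a decision, so one genuinely needs distributional control of $a(h)$, not just its mean. Moreover the conjecture permits the threshold $t(n)$ to be as tiny as $2^{-p(n)}$ with $p(n)$ exponential, so there is no a priori ``gap'' to exploit at the level of probabilities; the argument must be set up so that all the error terms --- $O(kL/2^{n/2})$, the polynomial-approximation error, and the mass of $a(h)$ near the chosen threshold --- can be simultaneously driven to $0$, which is possible exactly because $L=o(2^{n/2})$ leaves room to let both $k$ and the bulk cutoff grow slowly with $n$.
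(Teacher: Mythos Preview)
Your upper bound is essentially the paper's own construction and is fine.

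The lower bound, however, has a real gap at the ``moment-method'' step. You correctly establish that for each fixed $k$, the $k$-th moments of the acceptance probability $a(h)$ under the YES and NO conditional distributions agree multiplicatively up to $1+O(kL/2^{n/2})$. But from this you cannot deduce $\Pr_{\mathrm{NO}}[a\le v]\le \Pr_{\mathrm{YES}}[a\le v]+o(1)$ at a threshold $v\in(t/c,t)$ when $t(n)$ is allowed to be arbitrarily small. Here is a concrete obstruction: take $a\equiv t$ w.p.\ $0.9$ and $a\equiv 1$ w.p.\ $0.1$ under YES, and $a\equiv t/c$ w.p.\ $0.9$ and $a\equiv 1$ w.p.\ $0.1$ under NO. For every fixed $k$, both $k$-th moments tend to $0.1$ as $t\to 0$, so the moments match to any prescribed accuracy; yet $\Pr_{\mathrm{NO}}[a\le v]-\Pr_{\mathrm{YES}}[a\le v]=0.9$ for all $v\in[t/c,t)$, and the algorithm would succeed on $100\%$ of YES and $90\%$ of NO. Matching finitely (or even slowly growing) many moments of a $[0,1]$-valued variable simply cannot resolve the CDF at scale $t$ when $t$ is, say, $2^{-2^{n/2}}$ (which is exactly the regime your own upper-bound algorithm produces). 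Your ``Main obstacle'' paragraph flags the worry but does not actually overcome it: letting $k$ grow with $n$ while keeping $kL/2^{n/2}=o(1)$ still leaves $k$ far too small to approximate the step function at $v\approx t(n)$.

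The paper sidesteps this entirely by working pointwise rather than distributionally. It couples each YES instance $h$ (restricted to a bounded-gap window $Y$ with $|Y|\sim|N|$) to nearby NO instances $g$ obtained by flipping $O(2^{n/2})$ truth-table entries. For any \emph{fixed} random string $r$ with $\mathcal{A}(h,r)=1$, the probability over the coupling that the $L$ queries hit a flipped entry is $O(L\,2^{-n/2})$, so $\mathcal{A}(g,r)=1$ as well. This gives $\bbE_{g}[|\{r:\mathcal{A}(g,r)=1\}|]\ge(1-O(L\,2^{-n/2}))\,|\{r:\mathcal{A}(h,r)=1\}|$, a \emph{multiplicative} comparison between $a(g)$ and $a(h)$ for the specific paired instances. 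Markov's inequality then bounds the fraction of $g$ with $a(g)\le t/c$ by $O(L\,2^{-n/2}/(1-1/c))$, which is $o(1)$ when $L=o(2^{n/2})$, independent of how small $t$ is. That is the missing idea: you need per-instance multiplicative control of $a(\cdot)$ under a coupling, not moment agreement of its law.
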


\begin{proof}
The main idea behind the proof is similar to that of Theorem \ref{thm:oracle-LB}: we take YES instances on which the algorithm succeeds, minimally modify the instance so that it is a NO instance, and argue that the algorithm must behave the same way (unless it makes many queries). Thus it cannot succeed on a large fraction of both the YES and the NO instances. The implementation here is more complicated than in Theorem \ref{thm:oracle-LB} because the algorithm we are dealing with is a $\mathsf{SB}$ algorithm and additionally because we must worry about the fraction of instances on which the algorithm can succeed, and not just whether or not it succeeds on all instances.

The distribution of $\gap(h)/2^n$ for $h$ a uniformly random Boolean function is a shifted and scaled binomial distribution with mean $0$ and variance $2^{-n}$, since each of the $2^n$ entries in the truth table is 0 or 1 with equal probability. As $n \rightarrow \infty$, it approaches a Gaussian distribution with the same mean and variance. Let $N \subset \mathcal{H}_n$ contain instances $h$ for which $(\gap(h)/2^n)^2 \leq 2^{-n-2}$, and let $Y \subset \mathcal{H}_n$ contain instances $h$ for which $2^{-n-1} \leq (\gap(h)/2^n)^2 \leq \gamma2^{-n}$, where $\gamma \approx 2.76$ is the constant for which $\erf(\sqrt{2}/4) + \erf(1/2) = \erf(\sqrt{2\gamma}/2)$, defined as such so that $\lvert N \rvert/\lvert Y \rvert \rightarrow 1$ as $n\rightarrow \infty$.

There exists an algorithm that simply always outputs YES and succeeds on a fraction 0.56 (as $n \rightarrow \infty$) of instances in $\mathcal{H}_n$  --- the instances in $\mathcal{H}_n \setminus N$ ---  without making any queries. If $\mathcal{A}$ succeeds on a fraction $0.56 + \epsilon$ of instances in $\mathcal{H}_n$, then it must succeed on at least a fraction $0.5+\epsilon$ of instances in $N \cup Y$, and we will show that this implies it makes at least $\Omega(2^{n/2})$ queries.

Fix an arbitrary choice of $t(n)$, $c$, and $\epsilon$. Consider an instance $h \in Y$ and an even integer $x$ such that $(x/2^n)^2 \leq 2^{-n-2}$. Let $N_{h,x}$ include all instances $g \in N$ for which $\gap(g)/2^n = x$ and $g$ and $h$ differ on exactly $\lvert \gap(g)-\gap(h) \rvert /2 \leq O(2^{n/2})$ of the $2^n$ input strings. That is, $g$ can be formed from $h$ solely by switching entries in the truth table from 0 to 1 when $\gap(h)>0$ or from 1 to 0 when $\gap(h)<0$. Consider the following process: first draw an instance $h$ uniformly at random from $Y$ and choose an integer $x$ according to the binomial distribution with mean 0 and variance $2^n$ (rejecting and repeating until $(x/2^n)^2 \leq 2^{-n-2}$), then with 1/2 probability output $h$ and with 1/2 probability output a uniformly random instance from $N_{h,x}$. The output of this procedure is an element from $N \cup Y$ and as $n\rightarrow \infty$, the distribution of outputs becomes arbitrarily close in $\ell_1$ distance to the uniform distribution.

Now suppose the algorithm $\mathcal{A}$ succeeds on input $h \in Y$ i.e.~its acceptance probability is at least $t(n)$. Viewing $\mathcal{A}$ as a deterministic machine taking as input $h$ as well as a uniformly random string $r \in R$, we have \begin{equation}
    |\{r \in R: \mathcal{A}(h,r) = 1\} |\geq t(n) |R|.
\end{equation}
Consider a fixed string $r$ for which $\mathcal{A}(h,r)=1$. The algorithm queries a fixed subset of size $L$ among the $2^n$ input strings. If (for any fixed choice of $x$) we choose a function $g \in N_{h,x}$ uniformly at random, then the set of input strings $z$ for which $g(z) \neq h(z)$ has cardinality $O(2^{n/2})$ and is equally likely to be any subset of that size containing only strings for which $h(z)=0$ (resp.~$h(z)=1$) when $\gap(h)>0$ (resp.~$\gap(h) < 0$). Thus, by the union bound, the probability over the choice of $g \in N_{h,x}$ that $\mathcal{A}(h,r)$ queries some string $z$ for which $h(z) \neq g(z)$ is at most $O(L 2^{-n/2})$. If all $L$ queried strings $z$ have $g(z) = h(z)$, then $\mathcal{A}$ proceeds the same on inputs $(g,r)$ as it does on inputs $(h,r)$ and outputs 1. This will be true for every $r$ for which $\mathcal{A}(h,r)=1$ and for every $x$. Thus we can say
\begin{eqnarray}
    &&\mathbb{E}_{g \in N_{h,x}}\left(|\{r\in R: \mathcal{A}(h,r) = \mathcal{A}(g,r) = 1\}|\right) \nonumber \\
    &\geq& \left(1-O(L 2^{-n/2})\right)|\{r\in R: \mathcal{A}(h,r) = 1\}|, \label{eq:expectationbound}
\end{eqnarray}
which implies
\begin{eqnarray}
    &&\Pr_{g \in N_{h,x}}\left(\frac{|\{r\in R: \mathcal{A}(h,r) = \mathcal{A}(g,r) = 1\}|}{|R|} \leq \frac{t(n)}{c}\right) \nonumber \\
    &\leq& \Pr_{g \in N_{h,x}}\left(\frac{|\{r\in R: \mathcal{A}(h,r) = \mathcal{A}(g,r) = 1\}|}{|\{r\in R: \mathcal{A}(h,r) = 1\}|} \leq \frac{1}{c}\right) \nonumber \\
    &\leq& O\left(\frac{L 2^{-n/2}}{1-1/c}\right),
\end{eqnarray}
where the last line follows from a rearrangement of Eq.~\eqref{eq:expectationbound} and Markov's inequality.
If $L = o(2^{n/2})$, then this quantity approaches 0 as $n \rightarrow \infty$. Now consider again the procedure of choosing a random instance $h$ from $Y$, a random $x$, and a random $g$ from $N_{h,x}$, then with equal chance running $\mathcal{A}$ on $h$ or running $\mathcal{A}$ on $g$. The bound above shows that this procedure can succeed with probability at most $1/2 + o(1)$. For any $\epsilon$, we may choose $n$ large enough so that the success fraction is smaller than $1/2+\epsilon/2$, and so that the procedure draws from a distribution that is $\epsilon/2$-close to uniform over $N \cup Y$, completing the proof that there is no algorithm satisfying the conditions in the statement with $L = o(2^{n/2})$. 

To show the upper bound, that $L = O(2^{n/2})$ is possible, we give a simple algorithm $\mathcal{A}$ which succeeds for every instance in $\mathcal{H}_n$ for any $n$. On an input $h$, the algorithm randomly selects $L=10*2^{n/2}$ input strings $\{z_i\}$ and queries $h(z_i)$ for each $z_i$. If $h(z_i) = h(z_j)$ for every $i,j$, then the algorithm accepts. Otherwise, it rejects. If $(\gap(h)/2^n)^2 \geq 2^{-n-1}$, then the probability of acceptance is at least
\begin{eqnarray}
\left(\frac{1}{2}+\frac{|\gap(h)|}{2^{n+1}}\right)^L &\geq& \frac{1}{2^L} \left(1 + 2^{(-n-1)/2}\right)^L \nonumber \\
&\geq& \frac{1}{2^L}\exp(L 2^{(-n-1)/2}*0.9) \nonumber \\
&=& 2^{-10*2^{n/2}}\exp(9/\sqrt{2}),
\end{eqnarray}
where in the second-to-last line we have used the bound $\log(1+u) \geq 0.9u$ when $u$ is sufficiently small.

Meanwhile, if $(\gap(h)/2^n)^2 \leq 2^{-n-2}$, then the probability of acceptance is at most
\begin{eqnarray}
2\left(\frac{1}{2}+\frac{|\gap(h)|}{2^{n+1}}\right)^L &\leq& \frac{2}{2^L} \left(1 + 2^{-n/2-1}\right)^L \nonumber \\
&\leq& 2*2^{-10*2^{n/2}} \exp(L 2^{-n/2-1}) \nonumber \\
&=& 2^{-10*2^{n/2}}*2\exp(5).
\end{eqnarray}
Thus, we may choose $t(n) = 2^{-10*2^{n/2}}\exp(9/\sqrt{2})$ and $c=1.5$ and then for every $h$, if $(\gap(h)/2^n)^2$ is above the threshold the acceptance probability is above $t(n)$ and if it is below the threshold the acceptance probability is below $t(n)/c$. This completes the proof. 
\end{proof}

The fact that the bound is tight formally rules out poly3-ave-SBSETH($\ptSBc$) for values of $\ptSBc > 0.5$. However, it indicates that refuting poly3-ave-SBSETH($\ptSBc$) for values of $\ptSBc < 0.5$ would require an algorithm that goes beyond simply evaluating the function $f(z)$ for various strings $z$; it would have to utilize the fact that $f$ is a random degree-3 polynomial over $\mathbb{F}_2$, and not a general Boolean function.

Note that the average-case aspect of the problem has no effect on the black-box query complexity in this case. The complexity is quadratically weaker ($\Theta(2^{n/2})$) than what we found for the \texttt{NONBALANCED} problem ($\Theta(2^n)$) in the worst case in Theorem \ref{thm:oracle-LB}, but this difference is a result of moving from non-deterministic algorithms to $\mathsf{SB}$ algorithms. It is \textit{not} a result of moving from the worst case to the average case. Indeed, the algorithm we give to show the upper bound $O(2^{n/2})$ on the query complexity in Theorem \ref{thm:oracle-ave-LB} solves the \texttt{SGAP} problem not only in the average case, but also in the worst case, demonstrating that there is no asymptotic difference between the average-case and worst-case black-box query complexity for $\mathsf{SB}$ algorithms solving the \texttt{SGAP} problem. We take this as evidence that the average-case aspect of poly3-ave-SBSETH($\ptSBc$) would not play a crucial role in its veracity.

We bolster this intuition by giving a worst-case-to-quasi-average-case reduction for counting zeros of degree-3 polynomials over $\mathbb{F}_2$. The ``quasi''-average-case problem we reduce to is to, for every degree-3 polynomial $f$, with high probability count the number of zeros to $g$ where $g$ is a random degree-3 polynomial whose degree-2 and degree-3 parts agree with $f$. This problem is harder than the fully average-case version where the degree-2 and degree-3 parts are also randomized. 

\begin{theorem}\label{thm:worstcaseavecasereduction}
    For any degree-3 polynomial $f$ with $n$ variables and no constant term, let $[\bar{f}]$ be the set of all degree-3 polynomials whose degree-2 and degree-3 parts agree with $f$. Suppose we have an algorithm $\mathcal{A}$ with runtime $T(n)$ that, for every $f$, computes $\gap(g)$ correctly on at least a fraction $1-1/(3n)$ of instances $g \in [\bar{f}]$. Then there is a randomized algorithm $\mathcal{A'}$ with runtime $nT(n)+\text{poly}(n)$ that computes $\gap(f)$ correctly for every $f$. 
\end{theorem}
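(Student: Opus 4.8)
\emph{Proof proposal.} This is a worst-case-to-quasi-average-case random self-reduction, and the design is constrained by the fact that the target problem freezes the degree-$2$ and degree-$3$ parts of the polynomial and only randomizes the degree-$1$ part. The plan is to turn the worst-case instance $f$ into $n$ oracle calls, each of which lands in a coset (of the subspace of linear forms) on which $\mathcal{A}$ is accurate with probability at least $1-1/(3n)$, and from whose answers $\gap(f)$ can be recovered; a union bound over the $n$ calls then gives success probability at least $1-n\cdot\tfrac{1}{3n}=\tfrac23$, and the runtime is $n$ oracle calls plus the polynomial-time bookkeeping, i.e.\ $nT(n)+\mathrm{poly}(n)$. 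The output is converted to a zero count via $\#\{x:f(x)=0\}=\tfrac12(2^n+\gap(f))$.

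First I would dispose of the degenerate case: if $f$ has no degree-$3$ monomial it is (at most) a quadratic polynomial over $\mathbb{F}_2$, and $\gap(f)$ — equivalently the number of solutions of a quadratic equation over $\mathbb{F}_2$ — is computable in $\mathrm{poly}(n)$ time from the rank and Arf invariant of the associated form, so $\mathcal{A}$ is not needed. Hence assume $f$ has at least one cubic term, and write $f=\bar f+\delta^f\cdot x$ with $\bar f$ the degree-$\{2,3\}$ part and $\delta^f\in\mathbb{F}_2^n$ the linear part, so that $[\bar f]=\{\bar f+\delta\cdot x:\delta\in\mathbb{F}_2^n\}$.

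Next, the core construction: produce $g_1,\dots,g_n$ by applying independent random $\gap$-preserving operations to $f$ — invertible affine changes of variable $x\mapsto M_ix+a_i$, together with a controlled adjustment of the linear part — arranged so that (i) from $\gap(g_i)$, the transformation data, and $\mathrm{poly}(n)$-time-computable auxiliary quantities (for instance the exact gaps of the degree-$\le2$ "derivative" polynomials $D_a\bar g_i(x)=\bar g_i(x)+\bar g_i(x+a)$, all computable in polynomial time), $\mathcal{A}'$ can reconstruct $\gap(f)$; and (ii) conditioned on its own degree-$\{2,3\}$ part, $g_i$ is spread over at least a constant fraction of the corresponding coset, so that $\mathcal{A}$'s correctness on a $1-1/(3n)$ fraction of that coset makes $\mathcal{A}(g_i)$ correct with probability $1-O(1/n)$. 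Query $v_i=\mathcal{A}(g_i)$, apply the recovery map of (i), and output the majority value (any repeated value already pins down $\gap(f)$); amplification to error $\eta$ costs an extra $O(\log(1/\eta))$ factor.

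The main obstacle is this last construction, and in particular reconciling (i) with (ii). The tension is genuine: randomizing the degree-$1$ part of a query enough to make it uniform over its coset replaces $\gap$ by an essentially unrelated value of the Walsh spectrum of $(-1)^{\bar g_i}$ (since $\gap(\bar g+\delta\cdot x)$ is exactly the $\delta$-th Walsh coefficient), whereas keeping the answer tied to $\gap(f)$ forces the degree-$1$ part of the query to be a fixed function of $\delta^f$ — and then the query is a single point of its coset, on which $\mathcal{A}$ may simply err, because the affine orbit of $f$ is a vanishingly small and possibly badly-placed subset of all degree-$3$ polynomials. The heart of the proof is to choose the right family of $\gap$-preserving (or $\gap$-controlled) transformations and the right reconstruction formula — exploiting that the degree-$\le2$ derivatives $D_a\bar f$ have gaps one can compute exactly and quickly — so that enough randomness enters the linear part to trigger $\mathcal{A}$'s average-case guarantee while $\gap(f)$ stays recoverable; this is exactly what the restriction to the \emph{quasi}-average-case (degree-$2$ and degree-$3$ parts held fixed) is meant to make possible, and dealing with the residual degeneracies of $\bar f$ (low-rank quadratic part, dependence on few variables) would need an additional short case analysis.
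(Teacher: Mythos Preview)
Your framing — $n$ oracle calls, union bound to $2/3$, runtime $nT(n)+\mathrm{poly}(n)$ — is correct, but you have explicitly left the core construction unresolved, and the route you sketch (parallel queries $g_1,\dots,g_n$ combined via the gaps of degree-$\le 2$ derivatives $D_a\bar f$ and a majority vote) does not get there. The tension you name between ``randomize enough to land in the average case'' and ``keep $\gap(f)$ recoverable'' is genuine, and the paper dissolves it not by a one-shot reconstruction formula but by \emph{recursion on the number of variables}.

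The paper's move: given $f$ on $n$ variables, draw $g\in[\bar f]$ uniformly at random (so the oracle's $1-1/(3n)$ guarantee applies directly — there is no need to argue about spreading over a constant fraction of the coset) and query $\mathcal A(g)$. Write $f+g=u\cdot x$ for $u\in\mathbb F_2^n$; if $u\neq 0$, apply the invertible linear change of variables that sends $u\cdot x$ to a single coordinate $x_n'$. In the new coordinates $g'=f'+x_n'$, and splitting on $x_n'$ gives
\[
\gap(f)=\gap(g)+2\,\gap(f'|_{x_n'=1}),
\]
where $f'|_{x_n'=1}$ is still a degree-$3$ polynomial, now in $n-1$ variables — so you \emph{cannot} compute its gap in polynomial time as you proposed for the $D_a\bar f$'s; instead you recurse on it (its degree-$\{2,3\}$ part differs from that of $f$, which is fine because the hypothesis is ``for every $f$''). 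After $n$ levels the residual polynomial has one variable and is handled by brute force. There is no majority step and no need for a separate degenerate-case analysis: the reconstruction is an exact telescoping sum, not a vote.
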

\begin{corollary}
It is $\#P$-hard to, for every $f$, compute $\gap(g)$ on a fraction $1-1/(3n)$ of instances $g \in [\bar{f}]$.
\end{corollary}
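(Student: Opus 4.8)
The plan is to obtain the corollary as an immediate consequence of Theorem~\ref{thm:worstcaseavecasereduction} combined with the worst-case $\#\P$-hardness of computing $\gap(f)$ for degree-3 polynomials over $\mathbb{F}_2$. Recall that this worst-case hardness is exactly the fact invoked in Section~\ref{sec:deg3polys}: adapting Montanaro's argument, computing $\gap(f)$ for an arbitrary degree-3 polynomial $f$ over $\mathbb{F}_2$ with no constant term is $\#\P$-complete, since there is a reduction from computing $\gap(g)$ for an arbitrary Boolean function $g$ (which is $\#\P$-complete by definition) to computing $\gap(f)$ for a degree-3 $f$ that preserves the value of the gap.

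First I would suppose we are given an algorithm $\mathcal{A}$ that solves the quasi-average-case problem in the statement --- that is, for \emph{every} degree-3 polynomial $f$ on $n$ variables with no constant term, $\mathcal{A}$ runs in time $T(n)$ and outputs $\gap(g)$ correctly on at least a $1-1/(3n)$ fraction of $g\in[\bar f]$. Theorem~\ref{thm:worstcaseavecasereduction} then yields a randomized algorithm $\mathcal{A}'$, making $n$ calls to $\mathcal{A}$ plus $\text{poly}(n)$ extra steps, that computes $\gap(f)$ correctly (with high probability over its own coins) for every $f$. Treating $\mathcal{A}$ as an oracle and composing $\mathcal{A}'$ with Montanaro's gap-preserving reduction, we obtain a randomized polynomial-time oracle machine that computes $\gap$ for an arbitrary Boolean function, hence solves an arbitrary $\#\P$ problem using an oracle for the quasi-average-case problem. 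This is precisely the assertion that the quasi-average-case problem is $\#\P$-hard.

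The only point I would flag explicitly is the \emph{mode} of reduction: because the reduction underlying Theorem~\ref{thm:worstcaseavecasereduction} is randomized (it samples random members of $[\bar f]$ and a random shift), the conclusion is $\#\P$-hardness under randomized ($\BPP$-Turing) reductions rather than deterministic many-one reductions --- the standard notion in this Lipton-style worst-case-to-average-case setting. I would note that the success probability of $\mathcal{A}'$ can be amplified from its stated value to $1-2^{-\text{poly}(n)}$ by independent repetition and majority vote, so this genuinely is a $\BPP$-Turing reduction. I do not expect any real obstacle here: all of the substantive work is already packaged in Theorem~\ref{thm:worstcaseavecasereduction}, and the corollary is a one-line deduction once the hardness is phrased in the appropriate (randomized-reduction) sense.
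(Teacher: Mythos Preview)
Your proposal is correct and matches the paper's approach: the corollary is stated without a separate proof in the paper, being an immediate consequence of Theorem~\ref{thm:worstcaseavecasereduction} together with the worst-case $\#\P$-hardness of computing $\gap(f)$ for degree-3 polynomials established in Section~\ref{sec:deg3polys}. Your added remark that the hardness is under randomized (BPP-Turing) reductions is a useful clarification the paper leaves implicit.
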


\begin{proof}[Proof of Theorem \ref{thm:worstcaseavecasereduction}]
    Let $\mathcal{P}$ be the quasi-average-case problem described in the statement. We will show that an oracle to $\mathcal{P}$ can be used to efficiently compute $\gap(f)$ in the worst case. Given a degree-3 polynomial $f$ with $n$ variables, choose a polynomial $g \in [\bar{f}]$ uniformly at random and use the oracle to $\mathcal{P}$ to compute $\gap(g)$. Write the degree-1 polynomial $f(x)+g(x)$ as $\sum_j u_j x_j$ for some vector $u \in \mathbb{F}_2^n$. If $u$ is the 0 vector, then $f=g$ and we may simply output $\gap(g)$. Otherwise, there is an index $j$ for which $u_j = 1$. Without loss of generality assume $j=n$. Then we can perform the bijective linear change of variables $x_n' = \sum_j u_j x_j$ and $x_k' = x_k$ for all $k \neq n$. This yields new degree-3 polynomials $f'(x')$ and $g'(x')$ such that $\gap(f) = \gap(f')$, $\gap(g) = \gap(g')$ (since the transformation is bijective), and $g'(x') = f'(x') + x_n'$.  
    
    We can express
    \begin{eqnarray}
        \gap(g) &=& \gap(g') = \gap(f' | x_n' = 0) - \gap(f' | x_n' = 1)  \nonumber \\
        \gap(f) &=& \gap(f') = \gap(f' | x_n' = 0) + \gap(f' | x_n' = 1), \nonumber
    \end{eqnarray}
    where $f'|x_n'$ denotes the degree-3 polynomial on $n-1$ variables formed by taking $f'$ and fixing the value of $x_n'$. Thus,
    \begin{equation}
        \gap(f) = \gap(g) + 2 \gap(f'|x_n'=1).
    \end{equation}
    
    We recursively compute $\gap(f'|x_n'=1)$, unless it has just one variable, in which case we compute it by brute-force, and we add twice the result to $\gap(g)$ to produce our output $\gap(f)$. 
    
    Each level of recursion requires $\text{poly}(n)$ steps to perform the linear transformation as well as one oracle call, which takes at most $T(n)$ time steps. There are $n$ recursion levels, meaning the total runtime of the algorithm is only $nT(n) + \text{poly}(n)$.
    
    The algorithm will succeed as long as it computes $\gap(g)$ correctly at each of the $n$ levels of recursion. Since each of these occurs with probability at least $1-1/(3n)$, by the union bound the procedure as whole succeeds with at least $2/3$ probability. Note that this success probability is over the randomness of the algorithm itself --- it could be boosted by repetition --- and not over the randomness in the instance. 
\end{proof}

This worst-case-to-quasi-average-case reduction is subideal for several reasons. First it is only quasi-average case and not fully average-case. However, randomizing the instances only over $[\bar{f}]$ and not over the entire set of degree-3 polynomials actually fits fairly well with the framework of our analysis, since for a fixed IQP circuit $\mathcal{C}_f$ the probability of sampling each of the $2^n$ outputs corresponds with $(\gap(g)/2^n)^2$ for a different $g \in [\bar{f}]$. A larger issue is that the worst-case-to-quasi-average-case reduction only works for computing $\gap(f)$ and not for computing $\gap(f)^2$: if one knows $\gap(g)^2$ for a random $g \in [\bar{f}]$, it is not clear how to determine whether $\gap(g) = \pm \sqrt{\gap(g)^2}$, leading to an exponentially large tree of possible values for $\gap(f)$ after $n$ levels of recursion. Finally, like worst-to-average-case reductions for computing the permanent \cite{lipton1989new,aaronson2011computational} or the output probability of a random quantum circuit \cite{bouland2018quantum,movassagh2018efficient,movassagh2019cayley}, our reduction requires the computation of $\gap(g)$ at each step to be exact, as any errors would be uncontrollably amplified by the recursion process. However, we note that the idea behind our reduction of exploiting gap-preserving linear transformations of degree-3 polynomials over $\mathbb{F}_2$ is quite distinct from the strategy of polynomial interpolation that underlies these previous worst-case-to-average-case reductions. 

Despite these shortcomings, the reduction rules out certain approaches to refuting our conjecture poly3-ave-SBSETH($\ptSBc$). In particular, an algorithm that refutes it by exactly computing $\gap(f)$ --- like the algorithm from LPTWY \cite{lokshtanov2017beating} --- would need to succeed in both the average case and the worst case, or otherwise succeed in the average case but \textit{not} the quasi-average case.

Taken together, the black-box bound and the worst-case-to-quasi-average-case reduction bolster the plausibility of poly3-ave-SBSETH($\ptSBc$), at least relative to poly3-NSETH($\ptNc$), because they present roadblocks on ways to utilize the average-case nature of the former conjecture to refute it without also refuting the latter.

For our qubit calculations, we take $\ptSBc = 0.5$ since any larger number is formally refuted by the black-box result and any smaller number would yield a non-trivial classical algorithm.

\section{Number of qubits to achieve quantum computational supremacy}\label{sec:numqubits}

\subsection{Our estimate}
We can use the lower bounds on the runtime of a hypothetical classical simulation algorithm for IQP, QAOA, and boson sampling circuits in Eqs.~\eqref{eq:iqpbound},  \eqref{eq:qaoabound}, \eqref{eq:bsbound},  \eqref{eq:iqpaddbound}, and \eqref{eq:qaoaaddbound} to estimate the minimum number of qubits required for classical simulation of these circuit models to be intractable.

The fastest supercomputers today can perform \hl{at between $10^{17}$ and $10^{18}$} FLOPS (floating-point operations per second)\footnote{A list of the fastest supercomputers is maintained at  \url{https://www.top500.org/statistics/list/}.}. Using our lower bounds, we can determine the number of qubits/photons $q$ such that the lower bound on $s_i(q)$ is equal to $\hl{10^{18}}\cdot 60\cdot 60 \cdot 24 \cdot 365\cdot 100$, the maximum number of floating-point operations today's supercomputers can perform in one century, for $i=1,2,3\hl{,4,5}$. Using our multiplicative-error lower bounds, we calculate that for IQP circuits it is \hl{93}/$\ptNc $ qubits (from Eq.~\eqref{eq:iqpbound}), for \textsc{QAOA} circuits it is \hl{185}/$\ptNc $ qubits (from Eq.~\eqref{eq:qaoabound}), and for boson sampling circuits it is \hl{93}/$\piNc $ photons (from Eq.~\eqref{eq:bsbound}). \hl{The additive-error bounds replace $\ptNc$ by $\ptSBc$ (from Eqs.~\eqref{eq:iqpaddbound} and \eqref{eq:qaoaaddbound}).} We take $\ptNc  = \ptSBc = 1/2$ and $\piNc  = 0.999$, and these estimates become \hl{185} qubits for IQP circuits, \hl{370} qubits for QAOA circuits, and \hl{93} photons for boson sampling circuits. For these values of $\ptNc, \hl{\ptSBc}, \piNc $, the number of circuit elements needed for the lower bound to apply is $g_1(185) =$ 1,060,000 gates for IQP circuits, $g_2(370) =$ 2,110,000 constraints for QAOA circuits, and $g_3(93)=$ 17,400 beam splitters and phase shifters for boson sampling circuits. 

Thus, assuming one operation in the Word RAM model of computation corresponds to one floating-point operation on a supercomputer, and assuming our conjectures poly3-NSETH(1/2), per-int-NSETH(0.999), \hl{and poly3-ave-SBSETH(1/2)}, we conclude that classically simulating circuits of the sizes quoted above \hl{(up to either multiplicative or sufficiently small constant additive error)} would take at least a century on modern classical technology, a timespan we take to be sufficiently intractable.

If, additionally, we assume that the runtime of the classical simulation algorithm grows linearly with the number of circuit elements (e.g., the naive \hl{``Schr\"{o}dinger-style''} simulation algorithm that updates the state vector after each gate), then we can make a similar statement for circuits with many fewer gates. The cost of this reduction in gates is only a few additional qubits, due to the exponential scaling of the lower bound. We can estimate the number of qubits required by finding $q$ such that $s_i(q)/g_i(q) = \hl{10^{18}}\cdot 60\cdot 60 \cdot 24 \cdot 365/5$, the maximum number of supercomputer operations in \hl{ 1/500 of a century}, for $i=1,2,3\hl{,4,5}$.
We conclude that an IQP circuit with \hl{208 qubits and 500 gates}, a QAOA circuit with \hl{420 qubits and 500 constraints}, and a boson sampling circuit with \hl{98 photons and 500 linear optical elements} each would require at least one century --- one year per 5 circuit elements --- to be simulated using a classical simulation algorithm of this type running on state-of-the-art supercomputers. 

\hl{Here we remark again that to make these estimates, we have diverged from conventional complexity theory to make conjectures that fix even the constant prefactors on the lower bounds for algorithmic runtimes, and additionally, we must assert that these runtime lower bounds hold not only asymptotically but also in the $90 < n < 500$ range where our estimates fall. However, these estimates are robust in the sense that modifications to the constant prefactors have only minor effect on the final estimate. } 

We also note that the relative factor of two in the estimate for QAOA circuits is a direct consequence of the fact that one ancilla qubit was introduced per variable in order to implement the $H$ gates at the end of the IQP circuit $C_f$ within the QAOA framework. This illustrates how our estimate relies on finding a natural problem for these restricted models of quantum circuits and an efficient way to solve that problem within the model.  Indeed, an earlier iteration of this estimate based on the satisfiability problem instead of the degree-3 polynomial problem or matrix permanent required many ancilla qubits and led to a qubit estimate above 10,000. 
\hl{
\subsection{Relationship to Google's quantum computational supremacy experiment}

Since the appearance of the first version of this paper, a collaboration between Google and others reported that it has achieved quantum computational supremacy with 53 superconducting qubits on a programmable quantum device \cite{arute2019quantum}. Their interpretation and approach to quantum computational supremacy differs in certain respects from the perspective presented in our work. For one, the task they have completed on their quantum computer, Random Circuit Sampling (RCS) of quantum circuits with local gates on 2D lattices, is \textit{not} one of the proposals we have considered for our analysis. RCS does not fit nicely into our analysis since, while it is hard to classically perform RCS noiselessly (assuming non-collapse of $\PH$), RCS does not appear to correspond naturally with a \textit{specific} purely classical counting problem that has garnered independent study in classical computer science. 

Additionally, their quantum device is noisy and produces samples from a distribution that has neither small multiplicative nor small additive error with the ideal distribution. Rather, they argue that their device experiences global depolarizing noise and samples from a distribution that has small fidelity $F>0$ with the ideal distribution, i.e.~the device distribution (approximately) satsifes $P(x) = F Q(x) + (1-F)/2^n$, where $Q$ is the ideal distribution. When $F=1$ (noiseless), this task cannot be performed classically in polynomial time unless the $\PH$ collapses, and when $F=0$, $P$ is the uniform distribution and sampling from it is classically easy. They argue that the task is hard when $F$ is a small constant (in their experiment on the order of $10^{-3}$) by invoking what is essentially a fine-grained reduction in the same spirit as what we present here (see Supplementary Material of \cite{arute2019quantum}). They show how a time $T$ classical simulation drawing samples from $P(x)$ implies a time $O(FT)$ classical Arthur-Merlin ($\AM$) protocol that decides whether $Q(x)$ is greater than one threshold or less than a smaller threshold (assuming one is the case), which is essentially the RCS analogue of \texttt{poly3-SGAP}. As such, we will call this problem \texttt{RCS-SGAP}. Indeed, in light of our analysis, perhaps they could have opted to reduce to an $\mathsf{SB}$ algorithm, as we have, instead of an $\AM$ algorithm with a similar factor $O(F)$ overhead.

Thus, if they wished to derive a rigorous lower bound similar to the ones in our paper, they would need to impose a fine-grained conjecture asserting the non-existence of classical exponential-time $\AM$ protocols for \texttt{RCS-SGAP}. While it is certainly possible that such a conjecture could be true, the fact that it would not be about a classical computational problem would make it harder to begin to give evidence for it. The conjecture would essentially be an assertion that quantum circuits do not admit classical simulations that are substantially faster than brute-force, which would ideally be the conclusion of a QCS argument based off some unrelated conjecture instead of the conjecture itself.

However, it is apparent from their work that their interpretation of QCS is slightly different from ours. While we hope to rule out all possibility of competitive classical simulation algorithms, both known and unknown, they meticulously compare and benchmark their quantum device against best known classical algorithms for performing RCS, while putting less emphasis on arguing that yet-to-be-developed classical algorithms for RCS will not be substantially faster than ones that are currently known. On the one hand, this makes sense since ultimately QCS is about the \textit{practical} ascendancy of quantum computers over classical ones, and unknown classical algorithms are certainly not practical. On the other hand, one of the founding rationales \cite{aaronson2011computational} for sampling-based QCS was that it could be based on classical conjectures like the non-collapse of the $\PH$ whose refutation would bring cascading consequences to many other problems in classical complexity theory, as opposed to conjectures that merely assert that problems that are easy for quantum computers but not known to be easy for classical ones, such as factoring, in fact do not admit efficient classical algorithms. Under their more practically oriented interpretation, it is less important to base conjectures off of classical problems if one carefully studies best known existing algorithms for the quantum problem. 

In the end, they are able to declare QCS with only 53 qubits, a substantially smaller number than our estimates. This difference comes from our conservative approach to supercomputer performance, but more importantly from our aforementioned attempt to rule out all hypothetical simulation algorithms, known and unknown, and not just assume that the best known algorithms are optimal. Their effort to study the performance of best known classical simulation algorithms for RCS on one of the best supercomputers in the world reveals how supercomputer memory limitations are a key consideration that we have not introduced into our analysis. However, it is difficult for us to make a similarly detailed assessment of a supercomputer's performance when the classical algorithms we consider are hypothetical and the only thing we know about them is their runtime, forcing us to be conservative. Additionally, by insisting that we make conjectures about classical problems and not about the hardness of simulating the circuits themselves, we end up with simulation lower bounds that are \textit{not} tight with best known simulation algorithms, except in the case of boson sampling circuits. Our conjectures lead to qubit estimates that are larger by roughly a factor of 2 for IQP circuits and a factor of 4 for QAOA circuits, compared to if we had simply conjectured that best known simulation algorithms for those circuit families were optimal. 

Given our conservative approach and our more demanding requirements for QCS, it is in our view an encouraging sign that our qubit estimates are within a small constant factor of what the Google experiment was able to achieve.
}

\section{Conclusion}

Previous quantum computational supremacy arguments proved that polynomial-time simulation algorithms for certain kinds of quantum circuits would imply unexpected algorithms for classical counting problems within the polynomial-time hierarchy. We have taken this further by showing that even somewhat mild improvements over exponential-time best known simulation algorithms would imply non-trivial and unexpected algorithms for specific counting problems in certain cases. Thus, by conjecturing that these non-trivial classical counting algorithms cannot exist, we obtain lower bounds on the runtime of the simulation algorithms. In the case of boson sampling circuits, these lower bounds are essentially asymptotically tight when the strongest form of our conjecture is imposed.

Our conjectures for multiplicative-error simulation, poly3-NSETH($\ptNc $) and  per-int-NSETH($\piNc $), are fine-grained manifestations of the assumption that the $\PH$ does not collapse. \hl{Meanwhile, our conjecture for additive-error simulation, poly3-ave-SBSETH($\ptSBc$) is a fine-grained manifestation of the non-collapse assumption plus the statement that a certain counting problem is hard for the $\PH$ even on average.} While unproven, the non-collapse conjecture is extremely plausible; its refutation would entail many unexpected ramifications in complexity theory. This contrasts with the assumption that factoring has no efficient classical algorithm, which would also entail hardness of simulation but is less plausible because the consequences of its refutation on our current understanding of complexity theory would be minimal. Of course, the fine-grained nature of our conjectures makes them less plausible than the non-collapse of the $\PH$, but they are in line with current knowledge and beliefs in fine-grained complexity theory when $\ptNc, \hl{\ptSBc}  \leq 1/2$ and $\piNc  < 1$. 

It is worth comparing our approach with using a fine-grained version of the conjecture that $\PP\not\subset \Sigma_3^{\P}$, which is the complexity theoretic conjecture proposed in Aaronson-Arkhipov~\cite{aaronson2011computational}. An immediate advantage of our approach is that we avoid invoking Stockmeyer's theorem \cite{stockmeyer1983complexity} to estimate the acceptance probability of a hypothetical classical simulation algorithm for quantum circuits in $\Sigma_3P$; the fine-grained cost of this step would be significant, ultimately increasing our qubit estimates by roughly a factor of 3 \cite{dalzell2017lower}. Additionally, to understand the range of plausible fine-grained conjectures relating to $\Sigma_3$ algorithms, we might start with oracle bounds, analogous to our Theorem~\ref{thm:oracle-LB}.
 Known oracle lower bounds for the majority function show only that $\Sigma_3$ circuits that compute the majority of an oracle function (the oracle analogue of $\PP$) need size $\Omega(2^{n/5})$. This would correspond to taking $a$ or $b$ equal to $1/5$ which would increase our qubit estimates by a factor of $2.5$ or $5$ respectively.  The proof is also more complex, involving the switching lemma~\cite{dalzell2017lower}.  Thus our approach based on $\co\C_=\P \not\subset \NP$ instead of $\PP\not\subset \Sigma_3^{\P}$ yields both a much simpler proof and a tighter bound. 

The main motivation for imposing these fine-grained conjectures was to make an estimate of how large quantum circuits must be to rule out practical classical simulation on state-of-the-art classical computers. \hl{We chose the IQP, QAOA, and boson sampling models for our analysis because they are prominent QCS proposals and because their close connection with specific hard counting problems (i.e.~computing the gap of degree-3 polynomials over $\mathbb{F}_2$ and computing the permanent) made them amenable to perform a fine-grained analysis with little unnecessary overhead.} Our estimate relies on poly3-NSETH(1/2), per-int-NSETH(0.999), and \hl{poly3-ave-SBSETH(1/2)}, but it is somewhat robust to failure of these conjectures in the sense that if they fail in favor of mildly weaker versions, our estimate will increase only slightly. For example, replacing these conjectures with the slightly weaker poly3-NSETH($1/2d$), per-int-NSETH($1/d$), and \hl{poly3-ave-SBSETH($1/2d$)} increases the qubit estimate by only a factor of $d$, and replacing $2^{cn-1}$ time steps with $2^{cn-1}/d$ time steps in either conjecture (i.e.~$c\in \{\ptNc,\piNc\}$) increases the estimate by only $\log_2(d)$ qubits.

Our qubit estimates of fewer than 200 qubits for IQP circuits, fewer than 400 qubits for QAOA circuits, and fewer than 100 photons for boson sampling circuits are beyond current experimental capabilities but potentially within reach in the near future. Additionally, our estimate for boson sampling circuits is consistent with recently improved simulation algorithms \cite{neville2017classical,clifford2018classical} that can simulate circuits with up to as many as 50 photons but would quickly become intractable for higher numbers of photons.

\hl{Here we emphasize the importance of ruling out simulations with $O(1)$ additive error and not merely simulations with $O(1)$ multiplicative error. Experimental noise in real quantum systems without fault tolerance is likely to be large enough that most realistic devices could not achieve the noise rates for which our multiplicative-error bounds apply. This is not as problematic for the additive-error bounds; quantum systems with small but potentially reaslistic noise rates would generate an output distribution that has $\epsilon = O(1)$ additive error with respect to the ideal output distribution. Our bounds require that this additive error satisfy $\epsilon < 1/720$, and we argue that this could probably be improved to $\epsilon < 1/32$. These numbers are in line with previous additive-error QCS analyses (e.g., $\epsilon < 1/192$ in \cite{bremner2016average}), but finding ways to boost them might make QCS considerably more attainable for near-term devices under our analysis.

An important place our analysis can be improved is in providing additional evidence for our conjectures. While the conjectures poly3-NSETH($\ptNc $), per-int-NSETH($\piNc $), and poly3-ave-SBSETH($\ptSBc$) are consistent with other fine-grained conjectures like SETH, NSETH, and \#SETH, it is an open question whether it is possible to prove a concrete relationship with one of these conjectures, which would be important evidence in their favor. In particular, poly3-ave-SBSETH($\ptSBc$), which is a fine-grained statement about $\mathsf{SB}$ algorithms for an average-case problem, is unlike any conjecture previously proposed of which we are aware. The evidence we have provided, in the form of a black-box average-case query lower bound and a worst-case-to-quasi-average-case reduction, rules out certain methods one might use to refute it, but given little previous work on statements of this nature, it is hard for us to assess the likelihood that other methods for refuting it exist. 

Nevertheless, if anything, this highlights a weakness in QCS more generally. To arrive at practical qubit estimates, QCS arguments must be made fine-grained, but doing so uncovers the need to make conjectures about classical problems whose fine-grained complexity has not previously garnered considerable attention.  }

Finally, we conclude by noting that our analysis would likely be applicable to many other classes of quantum circuits whose efficient classical simulation entails the collapse of the $\PH$. \hl{Indeed, since the release of the first version of this paper, our method was extended in \cite{morimae2019fine} to derive fine-grained lower bounds for the simulation (up to multiplicative error) of the one-clean-qubit (DQC1) model, and the Hadamard-classical circuit model (HC1Q) \cite{morimae2018merlin}, as well as provide a lower bound on Clifford+$T$ simulation in terms of the number of $T$ gates. In \cite{morimae2019fine}, they also provide a compelling argument that it would be difficult to show that NSETH implies a version of the conjectures underlying their (and our) analysis. Additionally, in \cite{morimae2019depth}, the fine-grained lower bounds for simulation from our work and from \cite{monotone-sim-LB} (which used SETH to rule out exponential-time algorithms that compute output probabilities of quantum circuits) were extended to be based on other fine-grained assumptions, including the well-studied Orthogonal Vectors, 3-SUM, and All Pairs Shortest Path conjectures \cite{williams2015hardness}. } Other models that are universal under post-selection where this method may apply include various kinds of extended Clifford circuits \cite{jozsa2014classical,koh2015further}, and conjugated Clifford circuits \cite{bouland2017quantum}.

\hl{
\textit{Note:} The first version of this paper included only the lower bounds for multiplicative-error simulations and did not include any of the content of Section \ref{sec:adderror}. We would like to draw the reader's attention to Ref.~\cite{morimae2019additive} by Morimae and Tamaki, which was posted as the additive-error lower bounds that appear in the current version of this paper were in preparation. Ref.~\cite{morimae2019additive} is an independent analysis for additive-error QCS that shows several results, some of which overlap with ours in Section \ref{sec:adderror}, based on new fine-grained conjectures that are similar in spirit but different in detail to our poly3-ave-SBSETH($\ptSBc$). 
}

\section*{Acknowledgments}
We are grateful to Ashley Montanaro for important suggestions and conversations about degree-3 polynomials and the computational problem underlying this work. We would also like to thank Virginia Williams and Ryan Williams for useful discussions, and for pointing out Ref.~\cite{lokshtanov2017beating} and correcting an error in an earlier version of the paper. Finally, we acknowledge Richard Kueng for suggesting the proof strategy in Lemma \ref{lem:massbound}, as well as Tomoyuki Morimae and Suguru Tamaki for comments on a draft of this paper. 

AMD acknowledges support from the Undergraduate Research Opportunities Program (UROP) at MIT, the Dominic Orr Fellowship at Caltech, and the National Science Foundation Graduate Research Fellowship under Grant No. DGE‐1745301. 
AWH was funded by NSF grants CCF-1452616 and CCF-1729369, ARO contract
W911NF-17-1-0433 and the MIT-IBM Watson AI Lab under the project  {\it Machine Learning in Hilbert space}.
DEK was supported
by the National Science Scholarship from the Agency for Science, Technology and Research (A*STAR) and Enabling Practical-scale Quantum Computation (EPiQC), a National Science Foundation (NSF) Expedition in Computing, under grant CCF-1729369.
RLP was funded by the MIT School of Science Fellowship, the MIT Department of Physics, and the MIT-IBM Watson AI Lab.

\appendix

\section{Reduction from \texttt{poly3-NON-} \texttt{BALANCED} to \texttt{per-int-NONZERO}}\label{app:coC=Phard}

Valiant famously showed that computing the permanent of an integer matrix is \#\P-hard by reduction from \texttt{\#3SAT} \cite{valiant1979complexity}. A concise reproduction of this proof can be found in \cite{arora2009computational}. The main idea for our reduction is the same, the only change being in the details of the clause and variable gadgets we use for the construction.

There is a bijective correspondence between $n \times n$ matrices and directed graphs with $n$ vertices, where the entry $A_{ij}$ of a matrix $A$ corresponds to the edge weight from vertex $i$ to vertex $j$ in the associated graph $G_A$. A cycle cover of $G_A$ is a subset of the edges of $G_A$ forming some number of cycles in which each vertex appears in exactly one cycle. The weight of a cycle cover is the product of the weights of all the edges traversed by one of the cycles. From the definition of the permanent in Eq.~\eqref{eq:permanent}, we can see that the sum of the weights of all the cycle covers of $G_A$ is given by $\Per(A)$. 

It will be straightforward to convert the reduction from \texttt{\#3SAT} to computing the permanent into a reduction from \texttt{poly3-NONBALANCED} to \texttt{per-int-NONZERO} since degree-3 polynomials and 3-CNF formulas have a common structure in the sense that both involve $n$ variables where groups of three variables appear together in terms/clauses.

Suppose we are given a degree-3 polynomial $f$ with $n$ variables and $m$ clauses. We build a corresponding graph $G_f$ by including one term gadget for each of the $m$ terms and one variable gadget for each of the $n$ variables, and then connecting them in a certain way. These gadgets are shown in Figure \ref{fig:termgadget} and Figure \ref{fig:variablegadget}. If a term of $f$ has fewer than three variables, we can repeat one of the variables that appears in that term (e.g.~$x_1x_2 = x_1x_2x_2$), and thereby assume that each term has three variables. Each term gadget has three dotted edges corresponding to the three variables that appear in that term. A variable that appears $t$ times will have $t$ dotted edges in its variable gadget. Thus, each dotted variable edge from some node $u$ to node $u'$ has a corresponding dotted edge from node $v$ to node $v'$ in a term gadget associated with a term in which that variable appears. Each such pair of dotted edges indicates that the nodes $u$, $u'$, $v$ and $v'$ should be connected using the XOR gadget shown in Figure \ref{fig:xorgadget}. Thus, the dotted edges are not part of the final graph. The XOR gadget has the effect of ensuring that any cycle cover of the graph uses one of the two dotted edges but not both. The effective weight of an edge connected to an XOR gadget is 4. 

\begin{figure}[ht]
\begin{center}
\begin{tikzpicture}[->,>=stealth',shorten >=1pt,auto,node distance=3cm,thick,main node/.style={circle,draw,font=\sffamily\Large\bfseries,minimum size=24 pt}]

  \node[main node] (2)  {};
  \node[main node] (1) [above of=2]{v'};
  \node[main node] (3) [below right of=2] {v};
  \node[main node] (4) [below left of=2] {};

  \path[every node/.style={font=\sffamily\small}]
    (1) edge [bend right=10] node[left] {} (2)
    	edge [dashed,bend right=25] node[left] {4} (4)
    (2) edge [bend right=10] node [right] {} (1)
        edge [bend right=10] node {} (4)
        edge [loop left] node {-1} (2)
        edge [bend right=10] node[left] {} (3)
    (3) edge [bend right=10] node[right] {} (2)
        edge [bend right=10] node[right] {} (4)
        edge [dashed,bend right=25] node[right] {4} (1)
    (4) edge [bend right=10] node [right] {} (3)
        edge [bend right=10] node[right] {2} (2)
        edge [dashed,bend right=45] node[below] {4} (3);
\end{tikzpicture}
\end{center}

\caption{\label{fig:termgadget} Gadget for each term in the degree-3 polynomial $f$. Unlabeled edges are assumed to have weight 1. The three dashed lines are connected via the XOR gadget to the dashed lines in the variable gadgets for the variables that appear in the term, as exemplified by the labeling of vertices $v$ and $v'$ in the context of Figure \ref{fig:xorgadget}. If all three variables are true, the term gadget will contribute a cycle cover factor of $-1$, excluding the factors of 4 from dotted edges. If at least one variable is false, the term will contribute a cycle cover factor of 1.}
\end{figure}
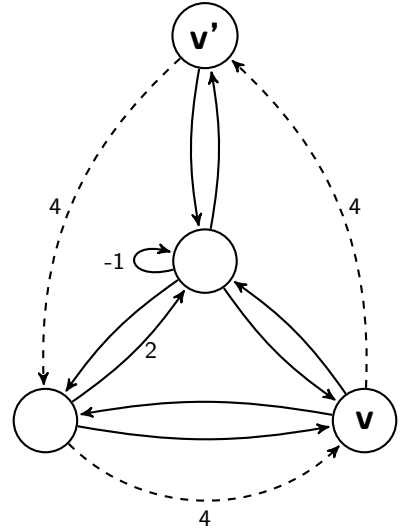

\begin{figure}
\begin{center}
\begin{tikzpicture}[->,>=stealth',shorten >=1pt,auto,node distance=3cm,
                    thick,main node/.style={circle,draw,font=\sffamily\Large\bfseries,minimum size=24 pt}]

  \node[main node] (1) {};
  \node[main node] (2) [below of=1] {};
  \node[main node] (3) [right of=2] {u};
  \node[main node] (4) [right of=3] {u'};
  \node[main node] (5) [above of=4] {};

  \path[every node/.style={font=\sffamily\small}]
    (1) edge [bend left=20] node[left] {} (5)
    	edge [dashed,bend right=25] node[left] {4} (2)
    (2) edge [dashed,bend right=25] node [below] {4} (3)
        edge [loop below] node {} (2)
    (3) edge [dashed,bend right=25] node [below] {4} (4)
        edge [loop below] node {} (3)
    (4) edge [dashed,bend right=25] node [right] {4} (5)
        edge [loop below] node {} (4)
    (5)	edge node[right] {} (1) ;
\end{tikzpicture}
\end{center}

\caption{\label{fig:variablegadget} Gadget for each variable in the degree-3 polynomial $f$. The number of dashed lines is equal to the number of terms in which the variable appears, so this example is for a variable that appears in four terms. The dashed lines are connected to the dashed lines in the term gadget in which that variable appears via the XOR gadget, as exemplified by the labeling of vertices $u$ and $u'$ in the context of Figure \ref{fig:xorgadget}.}
\end{figure}
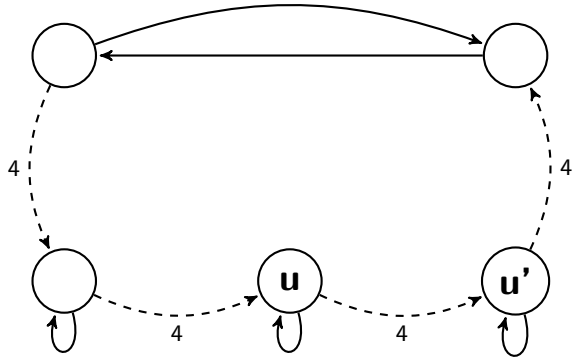

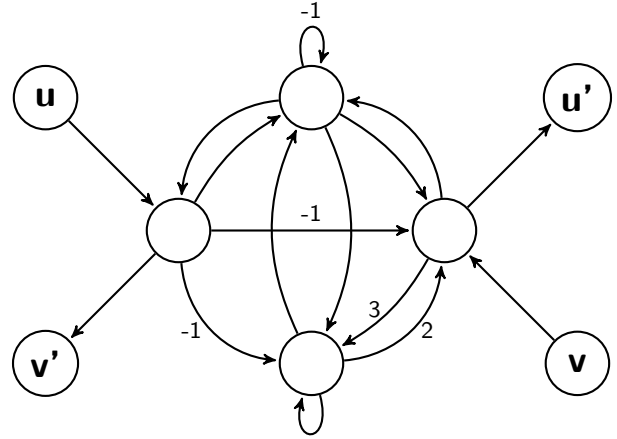
\begin{figure}[ht]
\begin{center}
\begin{tikzpicture}[->,>=stealth',shorten >=1pt,auto,node distance=2.5cm,
                    thick,main node/.style={circle,draw,font=\sffamily\Large\bfseries,minimum size=24 pt}]

  \node[main node] (1) {u};
  \node[main node] (2) [below right of=1] {};
  \node[main node] (3) [below left of=2] {v'};
  \node[main node] (4) [above right of=2] {};
  \node[main node] (5) [below right of=2] {};
  \node[main node] (6) [below right of=4] {};
  \node[main node] (7) [below right of=6] {v};
  \node[main node] (8) [above right of=6] {u'};

  \path[every node/.style={font=\sffamily\small}]
    (1) edge node[left] {} (2)
    (2) edge node[left] {} (3)
    	edge [bend left=15] node[left] {} (4)
        edge [bend right=40] node[left] {-1} (5)
        edge  node[above] {-1} (6)
    (4) edge [bend right=40] node[left] {} (2)
    	edge [bend left=15] node[left] {} (6)
        edge [bend left=25] node[left] {} (5)
        edge [loop above] node {-1} (4)
    (5) edge [bend right=40] node[right] {2} (6)
    	edge [bend left=25] node[left] {} (4)
        edge [loop below] node {} (4)
    (6) edge node[left] {} (8)
        edge [bend right=40] node[left] {} (4)
        edge [bend left=15] node[left] {3} (5)
    (7) edge node[left] {} (6);
\end{tikzpicture}
\end{center}

\caption{\label{fig:xorgadget} XOR gadget that connects dotted lines from node $u$ to $u'$ in the variable gadget with dotted lines from node $v$ to $v'$ in the term gadget. The effect of the XOR gadget is that any cycle cover must use either the edge from $u$ to $u'$ or the edge from $v$ to $v'$, but not both. Each XOR gadget contributes a factor of 4 to the weight of the cycle cover.}
\end{figure}

Every cycle cover of $G_f$ corresponds to some setting of the variables $z_1, \ldots, z_n$. If the cycle cover traverses the solid lines at the top of the variable gadget associated with variable $z_j$, then the corresponding setting has $z_j=1$. In this case, the cycle cover cannot also traverse the dotted lines at the bottom of the $z_j$ variable gadget. Thus, due to the XOR gadget, the cycle cover \textit{must} traverse the dotted lines corresponding to $z_j$ in each term gadget associated with a term in which $z_j$ appears. 

On the other hand, if the cycle cover uses the dotted lines in the $z_j$ gadget instead of the solid lines at the top, this corresponds to $z_j=0$, and the cycle cover cannot also traverse the edges corresponding to $z_j$ in the term gadgets associated with terms in which $z_j$ appears.

When all three dotted edges of a term gadget are traversed, this corresponds to all three variables in the term being set to 1. There is only one way to cycle cover the term gadget in this case, and it has a weight of $-1$, excluding the factors of 4 that come from the dotted edges in the XOR gadget. Meanwhile, if at least one dotted edge in the term gadget is not traversed, the total weight of all cycle covers will contribute a factor of $1$, again excluding the factors of 4. Thus, each assignment $z$ for which $f(z) = 0$ corresponds to cycle covers that satisfy an even number of terms, with total weight $4^{3m}$ since exactly $3m$ XOR gadgets are involved. Each assignment for which $f(z)=1$ corresponds to cycle covers that satisfy an odd number of terms, with total weight $-4^{3m}$. Thus, the total cycle cover weight of $G_f$, and by extension the permanent of the integer-valued matrix corresponding to $G_f$ is non-zero if and only if $\gap(f) \neq 0$. The number of vertices in $G_f$ is a polynomial in the number of variables of $f$, so this completes the reduction from \texttt{poly3-NONBALANCED} to \texttt{per-int-NONZERO}. Since \texttt{poly3-NONBALANCED} is $\co\C_=\P$-complete, \texttt{per-int-NONZERO} is $\co\C_=\P$-complete as well.

\section{Better-than-brute-force solution to \texttt{poly3-NONBALANCED}}\label{app:algo}

LPTWY~\cite{lokshtanov2017beating} gave a better-than-brute-force randomized algorithm that determines whether a system of $m$ degree-$k$ polynomial equations over finite field $\mathbb{F}_q$ has a solution (i.e.~a setting of the variables that makes all $m$ polynomials equal to 0). They also derandomized this procedure to create a better-than-brute-force deterministic algorithm that \textit{counts} the number of solutions to a system of $m$ degree-$k$ polynomial equations over finite field $\mathbb{F}_q$. Applying their deterministic algorithm for the special case $m=1$, $k=3$, $q=2$ (for which it is considerably simpler) yields a deterministic solution for \texttt{poly3-NONBALANCED}. We give a simple reproduction of their algorithm in this case below.

\begin{theorem}
There is a deterministic algorithm for \texttt{poly3-NONBALANCED} running in time $\text{poly}(n)2^{(1-\delta) n}$ where $\delta = 0.0035$.
\end{theorem}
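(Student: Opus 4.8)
The plan is to reproduce, for the special case $m=1$, $k=3$, $q=2$, the better-than-brute-force counting algorithm of LPTWY~\cite{lokshtanov2017beating}, following the blueprint already sketched in Section~\ref{sec:evidence}: split the variables into a ``free'' block and a ``fixed'' block, replace the sum over the free block by a single integer polynomial in the fixed variables that can be written down compactly, and then evaluate that polynomial on the entire fixed subcube by a fast (Yates/M\"obius-style) multipoint routine. Write $z=(x,y)$ with $x\in\{0,1\}^{(1-\delta)n}$ and $y\in\{0,1\}^{\delta n}$, and set $R(x)=|\{y: f(x,y)=0\}|=\tfrac12\big(2^{\delta n}+P(x)\big)$ where $P(x):=\sum_{y}(-1)^{f(x,y)}$. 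Since the total number of zeros of $f$ is $\sum_x R(x)=\tfrac12\big(2^{n}+\sum_x P(x)\big)$ and $\sum_x P(x)=\gap(f)$, the problem \texttt{poly3-NONBALANCED} reduces to computing $\sum_x P(x)$ and checking whether it is nonzero; so it suffices to evaluate $P$ everywhere and sum.

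First I would build an explicit integer-coefficient polynomial representation of $P$ in the variables $x_1,\dots,x_{(1-\delta)n}$. Grouping the at most $g_1(n)=O(n^3)$ monomials of $f$ according to how many $y$-variables they contain gives $f(x,y)=f_0(x)\oplus f_1(x,y)\oplus f_2(x,y)\oplus f_3(y)$ over $\mathbb{F}_2$; arithmetizing via the identity $(-1)^{t}=1-2t$ for $t\in\{0,1\}$ and using idempotence of Boolean variables expresses $(-1)^{f(x,y)}$ as a product of factors each affine in a single monomial, and performing the sum over $y\in\{0,1\}^{\delta n}$ (the step where the degree-$3$ structure is essential; for a generic Boolean $f$ it would fail) yields $P$ in a form computable in time $O(2^{0.15n+0.85\,\delta n})$, even though $P$ itself may have up to $2^{(1-\delta)n}$ monomials. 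I would carry this out in full in the appendix, paying attention to the numerical constants so as to get a slightly better exponent than the one stated in~\cite{lokshtanov2017beating}.

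Next I would perform the batched evaluation: from the compact representation of $P$, compute $P(x)$ simultaneously for all $2^{(1-\delta)n}$ points $x\in\{0,1\}^{(1-\delta)n}$ in total time $O(2^{(1-\delta)n})$ up to $\mathrm{poly}(n)$ factors (using a fast subset-sum/zeta transform so that the per-point work is amortized away), accumulate $\sum_x P(x)$, and compare with $0$; this is a deterministic procedure. The LPTWY analysis shows that this evaluation step fits within the claimed $O(2^{(1-\delta)n})$ budget precisely when $\delta\le 0.0035$, which is the binding constraint; balancing against the $O(2^{0.15n+0.85\,\delta n})$ cost of the previous step (which is smaller for all $\delta$ in this range) and taking $\delta=0.0035$ gives a deterministic algorithm with total running time $\mathrm{poly}(n)\,2^{(1-\delta)n}$, as claimed. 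All arithmetic is on integers of magnitude at most $2^n$, i.e.\ $O(n/w)$ machine words, so it contributes only polynomial overhead in the Word RAM model.

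The step I expect to be the main obstacle is the construction of the compact representation of $P$ in Step~2 --- i.e.\ genuinely carrying out $\sum_y(-1)^{f(x,y)}$ so that the result is a polynomial in $x$ that is both writable and later evaluable within the stated time bounds, while exploiting that $f$ has degree $3$ and tracking the constant $0.15$ in the exponent. This is essentially the technical heart of LPTWY specialized to our setting; the remaining pieces (the fast multipoint evaluation and the bookkeeping of integer sizes) are routine by comparison.
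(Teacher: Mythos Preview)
Your high-level outline matches the paper's: split the variables into a fixed block of size $(1-\delta)n$ and a free block of size $\delta n$, construct a compact integer polynomial over the fixed block that records the count over the free block, and fast-evaluate it everywhere. You also correctly quote the two running-time contributions $O(2^{0.15n+0.85\delta n})$ and $O(2^{(1-\delta)n})$ from the discussion in Section~\ref{sec:evidence}, and correctly identify that the binding constraint $\delta\le 0.0035$ comes from balancing the monomial count against the evaluation budget.

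However, the concrete route you propose for Step~2 does not work. Direct arithmetization via $(-1)^{t}=1-2t$ expresses $(-1)^{f(x,y)}=\prod_i(1-2m_i(x,y))$ over the $O(n^3)$ monomials $m_i$ of $f$, and after multilinearization this integer polynomial generically has degree $n$ in the Boolean variables (already $f=x_1x_2x_3+x_4x_5x_6$ produces a degree-$6$ term in the product); summing over $y$ leaves $P(x)$ with degree up to $(1-\delta)n$ in $x$. That is exactly why you end up with ``up to $2^{(1-\delta)n}$ monomials,'' but then the representation cannot possibly be written down in time $O(2^{0.15n+0.85\delta n})$ as you claim, nor does the fast multipoint lemma from~\cite{lokshtanov2017beating} apply, since it requires the number of monomials to be at most $2^{0.15(1-\delta)n}$. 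The degree-$3$ structure of $f$ does not by itself force $P$ to have low degree.

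The missing ingredient is the Beigel--Tarui modular lift~\cite{beigel1994acc} that the paper uses: instead of $(-1)^{f}$, one works with
\[
\hat Q_l(y,a)\;=\;1-(1-f)^{l}\sum_{j=0}^{l-1}\binom{l+j-1}{j}f^{j},
\]
which satisfies $\hat Q_l\equiv f\pmod{2^l}$ and, crucially, has degree only $3(2l-1)$ in the Boolean variables because it is a degree-$(2l-1)$ polynomial \emph{in $f$} and $\deg f=3$. Taking $l=\delta n$ makes the count exact over the $\delta n$-sized free block while keeping the degree of $R_l(y)=\sum_a\hat Q_l(y,a)$ at most $6\delta n-3$, so the number of monomials is $M((1-\delta)n,6\delta n-3)\le 2^{0.15(1-\delta)n}$ and both the construction and the fast evaluation fit in the claimed time. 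Without this low-degree lift, the representation step that you rightly flag as the main obstacle is indeed an obstacle, and the argument does not go through.
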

\begin{proof}
The algorithm beats brute force by finding a clever way to efficiently represent the number of zeros of a degree-3 polynomial with $n$ variables when $(1-\delta)n$ of the variables have been fixed. Then, by summing the number of zeros associated with the $2^{(1-\delta)n}$ possible settings of these variables, the algorithm computes the total number of zeros in $\text{poly}(n)2^{(1-\delta)n}$ time, which is better than brute-force $\text{poly}(n)2^n$.

First we describe the algorithm. The input is the degree-3 polynomial $f$, which has $n$ variables. In the following we have $x \in \{0,1\}^n$, and we let $y$ be the first $(1-\delta)n$ bits of $x$ and $a$ be the last $\delta n$ bits of $x$. Following the notation from \cite{lokshtanov2017beating}, we define

\begin{equation}\label{eq:Qlya}
\hat{Q}_l(y,a) = 1-(1-f(x))^l \sum_{j=0}^{l-1} \binom{l+j-1}{j} f(x)^j.
\end{equation}

In \cite{beigel1994acc}, it is shown that if $f(x) \equiv 0 \mod 2$, then $\hat{Q}_l(y,a) \equiv 0 \mod 2^l$ and if $f(x) \equiv 1\mod 2$, then $\hat{Q}_l(y,a) \equiv 1\mod 2^l$. We define

\begin{equation}\label{eq:rly}
R_l(y) = \sum_{a \in \{0,1\}^{\delta n}}  \hat{Q}_l(y,a)
\end{equation}
and observe that $R_l(y)$ gives the number of settings $x$ (mod $2^l$) for which $f(x)=1$ and the first $(1-\delta)n$ bits of $x$ are $y$. 

The algorithm operates by enumerating all values of $y$, computing $R_l(y)$ when $l = \delta n$ (which is large enough so that the number of settings for which $f(x)=1$ will never exceed $2^l$ for a given value of $y$), and summing all the results. This gives the total number of inputs $x$ for which $f(x) = 1$. The algorithm rejects if this number is $2^{n-1}$, and otherwise accepts.

There are two contributions to the runtime. The first is the computation of a representation of $R_{\delta n}(y)$ as a sum of monomials in $(1-\delta) n$ variables of $y$ with integer coefficients. Each monomial has degree at most $6 \delta n -3$. The number of possible monomials with coefficient 1 over $a$ variables with degree at most $b$ is
\begin{equation}\label{eq:nummonomials}
M(a,b) = \binom{a+b}{b} \leq (1+a/b)^b(1+b/a)^a,
\end{equation}
and, from Eq.~\eqref{eq:Qlya}, it is apparent that $\hat{Q}_{\delta n}(y,a)$ can be computed by a polynomially long sequence of sums or products of a pair of polynomials, where a product always includes either the polynomial $(1-f(x))$ or $f(x)$, which have degree only 3. Thus each step in the sequence takes time at most $\text{poly}(n)M((1-\delta)n, 6\delta n-3)$. For a certain value of $a$, a polynomial number of such steps required to create a representation of $\hat{Q}_{\delta n}(y,a)$ and then $R_{\delta n}$ is the sum over $2^{\delta n}$ such representations (one for each setting of $a$). Thus the total time is also bounded by $\text{poly}(n)2^{\delta n} M((1-\delta)n, 6\delta n-3)$. 

The second contribution to the runtime is the evaluation of this polynomial for all points $y$, given its representation computed as described. It is shown in Lemma 2.3 of \cite{lokshtanov2017beating} that this evaluation can be performed in time $\text{poly}(n)2^{(1-\delta)n}$, so long as the representation of $R_{\delta n}$ has fewer than $2^{0.15(1-\delta)n}$ monomials. This is satisfied as long as
\begin{equation}
M((1-\delta)n, 6 \delta n -3) \leq 2^{0.15(1-\delta)n},
\end{equation}
which, using Eq.~\eqref{eq:nummonomials}, can be seen to occur whenever $\delta < 0.0035$. This is an improvement on the general formula in \cite{lokshtanov2017beating}, which when evaluated for $k=3$ and $q=2$ yields a bound of $\delta < 0.00061$. 

Assuming $\delta$ satisfies this bound, the total runtime is the sum of the two contributions, $\text{poly}(n)2^{\delta n} M((1-\delta)n, 6\delta n-3)+ \text{poly}(n)2^{(1-\delta)n}$. The first term is smaller than $\text{poly}(n)2^{(0.85\delta+0.15)n}$, so the second term dominates, and the total runtime is $\text{poly}(n)2^{(1-\delta)n}$, proving the theorem.
\end{proof}

Consider ways in which the runtime could be improved. Suppose the evaluation time were to be improved such that the polynomial $R_{\delta n}$ could be evaluated in time $\text{poly}(n)2^{(1-\delta)n}$ even when $\delta >0.5$. With no further changes to the algorithm, the first contribution to the runtime stemming from the time required to compute the representation of $R_{\delta n}$ would now dominate and the runtime would still exceed $2^{0.5 n}$. Moreover, as long as $R_l$ is expressed as a sum over $2^{\delta n}$ terms as in Eq.~\eqref{eq:rly}, it is hard to see how any current techniques would allow this representation to be computed in less than $2^{0.5n}$ time when $\delta > 0.5$. 

Stated another way, this method of beating brute force by enumerating over only a fraction $(1-\delta) n$ of the variables and evaluating the number of solutions when those variables have been fixed in $2^{(1-\delta)n}$ time will surely break down when $\delta > 0.5$ because there will be more variables not fixed than fixed, and the preparation of the efficient representation of the number of zeros will become the slowest step.

\section{The moments of the \texorpdfstring{$\gap(f)$}{gap(f)} distribution}\label{app:moments}

In this appendix we compute the limiting value of the moments of the distribution of the quantity $(\gap(f)/2^n)^2$ where $f$ is drawn uniformly at random from the set of degree-3 polynomials with $n$ variables over the field $\mathbb{F}_2$. As the number of variables increases, the values become arbitrarily close to what they would be if $\gap(f)/2^n$ were a random Gaussian variable with mean 0 and variance $2^{-n}$. We believe this observation could be relevant in other applications. Following the proof, we briefly comment on other possible implications and also what would be different if $f$ is only a uniformly random degree-2 polynomial. Then, we use this fact to show that the amount of probability mass in the NO part of the distribution (see Figure \ref{fig:gapfdist}) is at least 0.12 for sufficiently large $n$.

Let $\mathcal{F}_n$ be the set of degree-3 polynomials over the field $\mathbb{F}_2$ with $n$ variables and no constant term and let $\ngap(f) = \gap(f)/2^n$.

\begin{theorem}\label{thm:moments}
For any $k$ and any $\epsilon$, there exists a constant $n_0$ such that whenever $n > n_0$
\begin{equation}
    \lvert 2^{nk}\mathbb{E}_{f \in \mathcal{F}_n} (\ngap(f)^{2k}) - (2k-1)!! \rvert \leq \epsilon,
\end{equation}
where $(2k-1)!! = 1*3*5*\ldots*(2k-1)$
\end{theorem}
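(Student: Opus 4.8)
The plan is to compute the moment $\mathbb{E}_{f \in \mathcal{F}_n}(\gap(f)^{2k})$ directly by expanding the definition of the gap and reducing everything to a counting problem over tuples of points in $\mathbb{F}_2^n$. Writing $\ngap(f) = 2^{-n}\sum_{x} (-1)^{f(x)}$, we have
\begin{equation}
2^{2nk}\,\mathbb{E}_f(\ngap(f)^{2k}) = \sum_{x_1,\dots,x_{2k} \in \mathbb{F}_2^n} \mathbb{E}_f\!\left[(-1)^{f(x_1)+\cdots+f(x_{2k})}\right].
\end{equation}
Since $f$ is a uniformly random degree-$\leq 3$ polynomial with no constant term, $f$ is a uniformly random $\mathbb{F}_2$-linear combination of the monomials $\{x_i\}$, $\{x_ix_j\}$, $\{x_ix_jx_\ell\}$. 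Hence $\sum_{t} f(x_t)$ is itself a linear functional of the coefficient vector of $f$, and its expectation of $(-1)^{(\cdot)}$ is $1$ if that functional is identically zero and $0$ otherwise. So the inner expectation is an indicator: it equals $1$ precisely when, for every monomial $\mu$ (of degree $1$, $2$, or $3$), an even number of the $x_t$ satisfy $\mu(x_t) = 1$. The moment is therefore exactly the number of $2k$-tuples $(x_1,\dots,x_{2k})$ with this ``evenness'' property, and the whole theorem becomes a combinatorial estimate.

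Next I would analyze which tuples are ``good.'' Organize a tuple by its multiset of distinct values and pair up equal entries; the cleanest sub-case is when the $2k$ points split into $k$ pairs of equal points — such tuples are automatically good, and there are $(2k-1)!!\,\bigl(2^n\bigr)^k (1 + o(1))$ of them (the $(2k-1)!!$ counts perfect matchings of $2k$ labels into pairs, and distinctness of the $k$ chosen values costs only a lower-order correction). Dividing by $2^{2nk}$ gives the leading term $(2k-1)!!\,2^{-nk}$, matching the Gaussian moment. The remaining work is to show that every other configuration — tuples whose distinct values are not a union of $k$ equal pairs — contributes $o(2^{-nk})$ after normalization, i.e.\ the number of such good tuples is $o(2^{nk})$. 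Here one uses that the degree-$1$ monomials alone already force strong constraints: the condition ``for each $i$, an even number of $x_t$ have $(x_t)_i = 1$'' says $\sum_t x_t = 0$ in $\mathbb{F}_2^n$ (summing coordinatewise). Likewise the degree-$2$ conditions say $\sum_t (x_t)_i (x_t)_j = 0$ for all $i<j$, i.e.\ $\sum_t x_t x_t^{\mathsf T}$ has zero off-diagonal (and the diagonal is forced by the degree-$1$ condition), so $\sum_t x_t x_t^{\mathsf T} = 0$ over $\mathbb{F}_2$; and the degree-$3$ conditions say the symmetric $3$-tensor $\sum_t x_t^{\otimes 3}$ vanishes off its ``repeated-index'' locus. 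For a tuple not of pair-type, these linear conditions on the $x_t$ are non-degenerate enough to cut the count down by at least one extra factor of $2^n$ relative to the naive $2^{n \cdot (\#\text{free points})}$; I'd make this precise by fixing the pattern of coincidences among the $x_t$, solving the resulting affine system over $\mathbb{F}_2$, and bounding its solution count, then summing over the finitely many ($k$-dependent but $n$-independent) coincidence patterns. Because the number of patterns is bounded independent of $n$, the $o(\cdot)$ is uniform enough to extract the $n_0(k,\epsilon)$ in the statement.

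The main obstacle will be the bookkeeping in the second step: showing cleanly that every non-pair-type coincidence pattern is penalized by the degree-$1$ (and, when needed, degree-$2$ and degree-$3$) constraints. The degree-$1$ constraint $\sum_t x_t = 0$ alone handles many patterns — e.g.\ it kills any pattern with an odd number of points in some ``block'' — but patterns like six points forming two triples of equal values, or four distinct values each appearing once but summing to zero, survive the linear constraints and must be excluded using the quadratic or cubic conditions (the sum-of-rank-one-matrices or sum-of-rank-one-tensors vanishing). The key quantitative lemma is that for a fixed coincidence pattern with $r$ distinct values that is \emph{not} the all-pairs pattern, the affine variety $\{\sum_t x_t = 0,\ \sum_t x_tx_t^{\mathsf T}=0,\ \sum_t x_t^{\otimes 3}=0\}$ inside $(\mathbb{F}_2^n)^r$ has dimension at most $(k-1)n + O_k(1)$; equivalently its size is $O_k(2^{(k-1)n})$, which is $o(2^{kn})$. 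Granting this, summing the $O_k(1)$ patterns' contributions and dividing by $2^{2nk}$ yields $(2k-1)!!\,2^{-nk} + o(2^{-nk})$, and multiplying through by $2^{nk}$ gives the claimed bound $\lvert 2^{nk}\mathbb{E}_f(\ngap(f)^{2k}) - (2k-1)!!\rvert \leq \epsilon$ for $n > n_0(k,\epsilon)$. (For the reader's orientation, the $k=2$ case recovers the fourth-moment bound $\mathbb{E}(\ngap(f)^4)\leq 3\cdot 2^{-2n}$ already used from \cite{bremner2016average}.)
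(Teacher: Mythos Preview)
Your setup and identification of the leading term are correct and match the paper: both expand the moment as a sum over $2k$-tuples with the indicator $\mathds{1}[\sum_t \mu(x_t)=0 \text{ for all degree-}\leq 3\text{ monomials }\mu]$, and both recognize that tuples in which the $2k$ points form $k$ equal pairs account for $(2k-1)!!\cdot 2^{kn}(1+o(1))$ good tuples.

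The divergence is in how the remainder is controlled. You group by coincidence pattern and assert a ``key quantitative lemma'' bounding each non-pair pattern by $O_k(2^{(k-1)n})$. The paper instead views the tuple as a $2k\times n$ matrix $X$ and groups by the span $H_X\subset\mathbb{F}_2^{2k}$ of its \emph{columns}. The good-tuple condition becomes the clean algebraic statement $H_X^\times\subset H_X^\perp$ (entrywise product), which forces $\dim H_X\leq k$; then one shows that every $k$-dimensional $H$ satisfying this condition arises from a pairing of the $2k$ row indices, giving the count $(2k-1)!!$ directly, while subspaces of dimension $\leq k-1$ contribute at most $O_k(2^{(k-1)n})$ automatically.

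Your lemma is true, but the sketch you give for it is where the gap lies. The constraints $\sum_t x_t x_t^{\mathsf T}=0$ and $\sum_t x_t^{\otimes 3}=0$ are quadratic and cubic in the $x_t$, so ``solving the resulting affine system'' is not the right framing, and over $\mathbb{F}_2$ there is no dimension-counting shortcut that converts polynomial constraints into the point-count bound you need. In fact, to prove your lemma one is led straight back to the paper's idea: for a pattern with $r$ distinct values of which $s\geq 2$ have odd multiplicity, the constraint on those $s$ values says precisely that the column span of their $s\times n$ matrix satisfies $H^\times\subset H^\perp$ in $\mathbb{F}_2^{s}$, hence has dimension $\leq s/2$, with equality forcing two of the values to coincide. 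Combined with $2r-s\leq 2k$, this yields the $O_k(2^{(k-1)n})$ bound. So the column-span argument is not avoided by your decomposition; it is the missing ingredient that makes it work.
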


\begin{proof}

A degree-3 polynomial $f$ over $\mathbb{F}_2$ with no constant term can be written
\begin{equation} 
f(z_1, \ldots, z_n) = \sum_{a,b,c=1}^n \alpha_{abc} z_az_bz_c,
\end{equation}
where $\alpha_{ijk} \in \{0,1\}$. Note that $z_p = z_p^2 = z_p^3$ when $z_p \in \mathbb{F}_2$. This is not a one-to-one mapping. Degree-1 monomials $z_p$ correspond to the single term in the expansion for which which $a=b=c=p$, but degree-2 monomials $z_pz_q$ correspond to the six terms in which, for example, $a=b=p$ and $c=q$. Degree-3 monomials also each correspond to six terms in the expansion. Using this correspondence
\begin{eqnarray}
    \ngap(f) &=& 2^{-n}\sum_x (-1)^{f(x)} \nonumber \\
    &=& 2^{-n} \sum_x (-1)^{\sum_{a,b,c} \alpha_{abc} x_a x_b x_c} \nonumber \\
    &=& 2^{-n} \sum_x \prod_{a,b,c} (-1)^{\alpha_{abc} x_a x_b x_c}.
\end{eqnarray}

Choosing the coefficients $\alpha_{abc}$ uniformly at random from $\{0,1\}$ is equivalent to choosing a degree-3 polynomial from $\mathcal{F}_n$ uniformly at random. Thus we may express

\begin{eqnarray}\label{eq:momentsum}
   && 2^{2nk}\mathbb{E}_{f \in \mathcal{F}_n}(\ngap(f)^{2k})  \\
    &=& \mathbb{E}_{\alpha}\left[\sum_{x^1,\ldots, x^{2k}} \prod_{a,b,c = 0}^n (-1)^{\alpha_{abc}(x_a^1x_b^1x_c^1+\ldots + x_a^{2k}x_b^{2k}x_c^{2k})}\right] \nonumber \\
    &=& \sum_{x^1,\ldots, x^{2k}} \prod_{a,b,c = 0}^n \mathbb{E}_{\alpha_{abc}}\left[(-1)^{\alpha_{abc}(x_a^1x_b^1 x_c^1+\ldots + x_a^{2k}x_b^{2k}x_c^{2k})}\right] \nonumber \\
    &=& \sum_{x^1,\ldots, x^{2k}} \prod_{a,b,c = 0}^n \mathds{1}\left(x_a^1x_b^1 x_c^1+\ldots + x_a^{2k}x_b^{2k}x_c^{2k} = 0\right), \nonumber
\end{eqnarray}
where $\mathds{1}$ is the indicator function.

For each of the $2^{2nk}$ terms in the sum above, there is a corresponding $2k \times n$ matrix $X = (x^j_a)$ over $\mathbb{F}_2$. This term contributes 1 to the overall expectation if for any three columns $X_a$, $X_b$, $X_c$ (each vectors with $2k$ entries), $\langle X_a,X_b,X_c \rangle = 0$, where
\begin{equation}
    \langle u,v,w \rangle = \sum_{j=1}^{2k} u^j v^j w^j \mod 2
\end{equation}
is an extension of the standard inner product $\langle u, v \rangle = \sum_j u^jv^j \mod 2$ to three vectors. Otherwise, the term yields 0. Thus, the problem amounts to counting the number of $2k \times n$ matrices $X$ for which this condition is met. In what follows, we will perform this counting by showing that the number is dominated by matrices $X$ whose $2k$ rows ``pair up'' into $k$ sets of 2, where rows are identical to their pair. This is reminiscent of the Wick contractions used to compute Gaussian integrals and is fundamentally why the moments agree with those of a Gaussian in the limit $n \rightarrow \infty$.

To count the number of terms $X$ that yield 1, we use the properties of $\mathbb{F}_2^{2k}$ as a $2k$-dimensional vector space with inner product $\langle \cdot, \cdot \rangle$. We associate with each matrix $X$, a subspace $H_X \subset \mathbb{F}_2^{2k}$ formed by taking linear combinations of the columns of $X$. Another way to say this is that $H_X$ is the linear binary code generated by the transpose of $X$. For any subspace $V$ we let $V^\perp$ contain vectors that are orthogonal to all the elements of $V$. Note that a vector may be orthogonal to itself. We also define the operation $\times$ to be entry-wise vector multiplication between two vectors. This multiplication operation turns $\mathbb{F}_2^{2k}$ into an algebra. For any subspace $V$, we let $V^\times$ denote the subspace generated by pairwise products $u \times v$ where $u,v$ are in $V$. We will be interested in subspaces $H$ that satisfy the condition
\begin{equation}\label{eq:condition}
    H^\times \subset H^\perp.
\end{equation}
We claim that a term $X$ in the sum above yields 1 if and only if $H_X$ satisfies condition \eqref{eq:condition}. This is seen as follows. 

If $u \in H_X$ and $v \in H_X^\times$, then $\langle u, v \rangle$ may be decomposed into a sum of $\langle X_a, X_b \times X_c \rangle$ for various $a,b,c$ by writing $u$ and $v$ as a linear combination of columns of $X$ and of products of columns of $X$, respectively. If the term yields 1, then for any three columns $X_a, X_b, X_c$ of $X$, $\langle X_a, X_b, X_c \rangle = \langle X_a, X_b \times X_c \rangle = 0$, implying that $H_X^\times$ is orthogonal to $H_X$. Conversely, we have $X_a \in H_X$ and $X_b \times X_c \in H_X^\times$ for all $a,b,c$ so if the condition \eqref{eq:condition} holds, then $\langle X_a, X_b \times X_c \rangle = \langle X_a, X_b, X_c \rangle =0$, implying the term $X$ yields 1.

Given a subspace $H$ of dimension $d$ satisfying condition $\eqref{eq:condition}$, for how many matrices $X$ does $H=H_X$? All such $X$ can be formed by choosing $n$ columns among the $2^d$ elements in $H$, giving an initial count of $2^{dn}$. However, for some of these matrices, $H_X$ will merely be a (strict) subset of $H$. The number of these matrices is at most $2^{n(d-1)}2^d$ since there are at most $2^d$ strict subspaces of $H$ with dimension $d-1$. In the limit of large $n$, this correction will vanish compared to $2^{dn}$. 

This establishes that the exponential growth of the number of matrices $X$ for which $H_X = H$ is governed by the dimension $d$ of $H$, so to leading order we must merely calculate how many distinct $H$ have maximal dimension. We claim that the maximum dimension is $d=k$ and the number of $H$ with this dimension is given by the number of distinct ways to partition $2k$ indices into $k$ pairs. We see this as follows.

Assume $H$ satisfies the condition. Note that the dimension of $H^\perp$ is $2k-d$, and since $H \subset H^\times \subset H^\perp$, $d \leq 2k-d$, and hence $d \leq k$.  When $d=k$, $H = H^\times = H^\perp$, which implies a couple of important facts. First, the all ones vector $\mathbf{1}$ must be in $H$, since $\langle \textbf{1}, u \rangle = \langle u, u \rangle = 0$ and hence $\mathbf{1} \in H^\perp=H$. Second, the space $H$ is a subalgebra of $\mathbb{F}_2^{2k}$, since the fact that $H=H^\times$ implies it is closed under multiplication. We may choose a basis $v^1, \ldots, v^k$ for $H$, with $v^1 = \mathbf{1}$. For any fixed index $a$, we may additionally require that $v^j_a=1$ (the $a$th entry of $v^j$) for all $j$, because we may always add $\mathbf{1}$ to a basis vector if $v^j_a=0$. Since $H$ is a subalgebra, $v^1 \times \ldots \times v^k \in H$, and by construction it has a 1 in its $a$th entry. Moreover, $0=\langle \mathbf{1}, v^1 \times \ldots \times v^k \rangle$ since  $H\subset H^\perp$. Therefore $v^1 \times \ldots \times v^k$ must also have a 1 in some other entry $b \neq a$, and hence $v^j_b = 1$ for all $j$. Since there is a basis for $H$ in which all the basis vectors have $1$s in both index $a$ and index $b$, every vector in $H$ has the same value in index $a$ and index $b$. This shows that if $H$ has dimension $k$ and satisfies condition \eqref{eq:condition}, then every index $1 \leq a \leq 2k$ must have a partner $b \neq a$ for which $u_a = u_b$ for every $u \in H$. Conversely, if every index pairs up in this sense, then $H$ must satisfy condition \eqref{eq:condition}, since each pair of indices always contributes a pair of 0s or a pair of 1s to the sum $\langle u, v \times w \rangle = \sum_a u_a v_a w_a \mod 2$. 

Thus the number of distinct $k$-dimensional $H$ satisfying condition $\eqref{eq:condition}$ is the number of ways to pair up the $2k$ indices, given by the expression
\begin{equation}
    (2k-1)*(2k-3)* \ldots *7*5*3*1 = (2k-1)!!
\end{equation}
For each of these $H$, there are $2^{kn}$ $X$ for which $H_X = H$, with corrections of at most $c_k 2^{(k-1)n}$ with $c_k$ depending only on $k$. In addition, there may be $H$ with dimension $d < k$ that satisfy condition \eqref{eq:condition}, but for each of these the number of $X$ for which $H_X=H$ is at most $2^{dn} \leq 2^{(k-1)n}$. Moreover, the number of subspaces $H$ with dimension $d < k$ is given by some number $c'_k$ depending only on $k$. Thus, we may choose $n_0$ large enough so that $2^{-n_0}c_k*(2k-1)*\ldots*5*3*1 \leq \epsilon/2$ and $2^{-n_0}c'_k \leq \epsilon/2$. 

Recalling the $2^{2nk}$ prefactor in Eq.~\eqref{eq:momentsum}, this proves the statement of the theorem. 
\end{proof}

If $f$ were only a degree-2 polynomial, this theorem would not hold. In particular, terms in the sum associated with matrix $X$ would yield 1 if and only if all pairs of columns $X_a,X_b$ of $X$ satisfy $\langle X_a, X_b \rangle = 0$. It need not also be true that $\langle X_a, X_b, X_c \rangle = 0$ for any trio of columns. Thus the condition \eqref{eq:condition} is replaced by the less restrictive statement $H \subset H^\perp$. Note that this condition is precisely the statement that the linear binary code $H$ is contained in its dual. It is still the case that the maximum dimension of any such $H$ is $d=k$, in which case $H = H^\perp$ (i.e.~$H$ is self-dual), but there are a larger number of subspaces $H$ with dimension $k$ that satisfy the new condition, since it is possible that $H=H^\perp$, while $H^\times$ forms a larger subspace. For example, we can take $H$ be generated by the columns of
\begin{equation}
    \begin{pmatrix}
        1 & 0 & 0 & 0 \\
        1 & 0 & 0 & 1 \\
        1 & 0 & 1 & 0 \\
        1 & 0 & 1 & 1 \\
        1 & 1 & 0 & 0 \\
        1 & 1 & 0 & 1 \\
        1 & 1 & 1 & 0 \\
        1 & 1 & 1 & 1 
    \end{pmatrix}.
\end{equation}
Here, none of the rows are identical so none of the indices have paired up. As expected, $H^\times \not\subset H^\perp$ as seen by the fact that the inner product of the second column with the entrywise product of the third and fourth columns is 1. Yet it is still true that $H \subset H^\perp$.

We showed above that the number of subspaces $H$ that satisfy condition \eqref{eq:condition} and have dimension $k$ is $(2k-1)!!$. There are additional matrices that satisfy just the weaker degree-2 condition $H = H^\perp$, bringing the total to (whenever $k >1$)
\begin{eqnarray}
&&\frac{(2^{2k-1}-2^1)(2^{2k-2}-2^2)\ldots(2^{k+1}-2^{k-1})}{(2^k-2^1)(2^k-2^2)\ldots(2^k-2^{k-1})} \\
&=&(2^1+1)(2^2+1)\ldots(2^{k-1}+1),
\end{eqnarray}
which is derived by first choosing $k-1$ vectors that are linearly independent to form a basis (along with the all ones vector $\mathbf{1}$) for $H$ (numerator), and then dividing by the number of bases associated with a particular subspace. 

For $k=1,2,3$, this number is equal to $(2k-1)!!$, so the first three moments agree with a Gaussian even for degree-2 polynomials. However, for $k=4$ this expression evaluates to 135, while $7!! = 105$. Indeed, as $k$ increases, the expression grows much more quickly, like $2^{O(k^2)}$, than $(2k-1)!!$ grows. 

The statement that the moments of the quantity $\gap(f)$ match those of a Gaussian when $f$ is a degree-3 polynomial could have consequences beyond the scope of this work. It might be possible to use this analysis to formally show that the $\gap(f)$ distribution itself approaches a Gaussian. It is also important to note that if $f$ is instead drawn uniformly at random from the set of all Boolean functions, the distribution of $\gap(f)$ would be exactly a Binomial distribution, which has the same moments for large $n$. Thus, in a sense, random degree-3 polynomials might be thought to be mimicking the behavior of completely random Boolean functions (with many fewer parameters), and the same cannot be said for degree-2 polynomials. Perhaps this could be connected to the fact that computing $\gap(f)$ for degree-3 polynomials is $\#P$-hard, while doing the same for degree-2 polynomials is easy.

Next we prove a statement about the probability mass near 0 in the distribution for degree-3 polynomials.

\begin{lemma}\label{lem:massbound}
    There exists an $n_0$ such that for all $n > n_0$
    \begin{equation}
        \Pr_{f \in \mathcal{F}_n}\left[ (\gap(f)/2^n)^2 \leq 2^{-n-2}\right] \geq 0.12.
    \end{equation}
\end{lemma}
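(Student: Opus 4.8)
The plan is to use the moment information from Theorem~\ref{thm:moments} together with a two-sided concentration argument. Let $Z = (\gap(f)/2^n)^2 \cdot 2^n$, so that by Theorem~\ref{thm:moments} we have $\mathbb{E}(Z^k) \to (2k-1)!!$ as $n \to \infty$ for every fixed $k$; these are exactly the moments of $Y^2$ where $Y \sim \mathcal{N}(0,1)$. In particular $\mathbb{E}(Z) \to 1$ and $\mathbb{E}(Z^2) \to 3$. We want to lower bound $\Pr[Z \leq 1/4]$. The idea is that since the mean of $Z$ is $\approx 1$ and its second moment is bounded ($\approx 3$), $Z$ cannot put too much mass far above $1$; combined with a lower bound on $\mathbb{E}(Z)$, this forces a definite amount of mass at small values of $Z$. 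Concretely, I would follow the kind of reasoning used in Lemma~\ref{lem:promisefraction} but in the opposite direction: there we used Paley--Zygmund to bound mass away from $0$; here we need mass \emph{near} $0$.

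First I would pin down a clean inequality. Write $p = \Pr[Z \leq 1/4]$. Split $\mathbb{E}(Z) = \mathbb{E}(Z \mathds{1}[Z \leq 1/4]) + \mathbb{E}(Z\mathds{1}[Z > 1/4])$. The first term is at most $p/4$. For the second term, apply Cauchy--Schwarz: $\mathbb{E}(Z \mathds{1}[Z > 1/4]) \leq \sqrt{\mathbb{E}(Z^2)}\sqrt{\Pr[Z > 1/4]} = \sqrt{\mathbb{E}(Z^2)}\sqrt{1-p}$. Hence $\mathbb{E}(Z) \leq p/4 + \sqrt{\mathbb{E}(Z^2)}\sqrt{1-p}$. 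Now choose $n_0$ large enough (via Theorem~\ref{thm:moments} with $k=1$ and $k=2$ and a small $\epsilon$) that $\mathbb{E}(Z) \geq 1-\epsilon$ and $\mathbb{E}(Z^2) \leq 3+\epsilon$. This yields $1 - \epsilon \leq p/4 + \sqrt{3+\epsilon}\sqrt{1-p}$, an inequality in the single unknown $p$. Solving (for $\epsilon$ sufficiently small) gives $p \geq 0.12$: indeed at $p = 0.12$ and $\epsilon = 0$ the right-hand side is $0.03 + \sqrt{3}\sqrt{0.88} = 0.03 + 1.6248\ldots = 1.654\ldots > 1$, but we need the inequality to \emph{fail} for $p < 0.12$, so I would instead check that if $p < 0.12$ then $p/4 + \sqrt{3+\epsilon}\sqrt{1-p} > 1-\epsilon$ is forced, i.e. verify monotonicity of the bound and that the crossover point is below $0.12$ with a safe margin absorbing $\epsilon$. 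A short computation shows $p/4 + \sqrt{3}\sqrt{1-p} = 1$ has no solution with $p \in [0,1]$ in this naive form, which signals that Cauchy--Schwarz alone is too lossy.

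Therefore the main obstacle, and where I would spend the real effort, is getting a sharp enough upper bound on the upper-tail contribution $\mathbb{E}(Z\,\mathds{1}[Z > 1/4])$. The fix is to use a higher moment: truncating the fourth moment $\mathbb{E}(Z^4) \to 105$ (Theorem~\ref{thm:moments} with $k=4$) gives much better control of the tail than $\mathbb{E}(Z^2)$. Replacing Cauchy--Schwarz by H\"older with exponents $(4/3, 4)$, we get $\mathbb{E}(Z\mathds{1}[Z>1/4]) \leq \mathbb{E}(Z^{4/3})^{3/4}\Pr[Z>1/4]^{1/4} \leq \mathbb{E}(Z^2)^{1/2}\Pr[Z>1/4]^{1/4}$ is still weak, so better: $\mathbb{E}(Z\mathds{1}[Z>1/4]) \leq \mathbb{E}(Z^4)^{1/4}\Pr[Z>1/4]^{3/4}$, giving $1-\epsilon \leq p/4 + (105+\epsilon)^{1/4}(1-p)^{3/4}$. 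Since $(105)^{1/4} \approx 3.20$, at $p$ small $(1-p)^{3/4} \approx 1 - 3p/4$ and the bound still exceeds $1$; this tells me even fourth moments via H\"older are not enough, and the truly sharp route is to observe that in the limit $Z \to Y^2$ in distribution (which follows because the moments of $Y^2$ determine its law — the Gaussian being moment-determinate via Carleman's condition), so $p \to \Pr[Y^2 \leq 1/4] = \Pr[|Y| \leq 1/2] = \erf(1/(2\sqrt 2)) \approx 0.3829$, comfortably above $0.12$. So the cleanest proof is: (i) invoke Theorem~\ref{thm:moments} to get convergence of all moments to those of $Y^2$; (ii) note $Y^2$ is moment-determinate, hence $Z$ converges in distribution to $Y^2$; (iii) since $1/4$ is a continuity point of the limiting CDF, $\Pr[Z \leq 1/4] \to \erf(1/(2\sqrt2)) > 0.12$, so the bound holds for all $n > n_0$. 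The remaining care is just to cite or include the standard fact that the Gaussian (equivalently $\chi^2_1$) is determined by its moments, and to be explicit that the convergence of every moment plus moment-determinacy of the limit yields weak convergence.
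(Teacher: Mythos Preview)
Your final approach is correct: convergence of every fixed moment of $Z=2^n\ngap(f)^2$ to $(2k-1)!!$ together with moment-determinacy of the $\chi^2_1$ law (Carleman holds since $((2k-1)!!)^{-1/(2k)}\asymp k^{-1/2}$ sums to infinity) yields weak convergence $Z\Rightarrow Y^2$, and since $1/4$ is a continuity point of the $\chi^2_1$ CDF you get $\Pr[Z\le 1/4]\to\erf(1/(2\sqrt2))\approx 0.383>0.12$. The first two attempts with Cauchy--Schwarz and H\"older are, as you note, throwaway, so you should present only the third argument; also be precise that it is the $\chi^2_1$ distribution (not the Gaussian itself) whose moment-determinacy you are invoking.

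The paper takes a genuinely different, more elementary route: rather than appeal to the moment convergence theorem as a black box, it constructs an explicit degree-$15$ polynomial $p(x)$ (built from a rescaled squared Chebyshev polynomial) satisfying $p(x)\le\mathds 1[x\le 1/4]$ for all $x\ge 0$, and then uses only the first fifteen moments from Theorem~\ref{thm:moments} to evaluate $\bbE[p(Z)]\to\sum_j c_j\approx 0.1222$. The tradeoffs are: your argument is shorter, conceptually cleaner, and immediately yields the sharp limiting constant $\approx 0.383$ rather than $0.12$; the paper's argument is entirely finitary (a fixed, explicit list of fifteen moments suffices), avoids citing the moment problem, and in principle makes the $n_0$ effective. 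Either approach is acceptable for the lemma as stated.
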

\begin{proof}
Let $I(x)$ be the indicator function that evaluates to 1 when $x \leq 1/4$ and to 0 otherwise. Suppose $p(x)$ is a polynomial of degree $L$ for which $p(x) \leq I(x)$ whenever $x \geq 0$. By Theorem \ref{thm:moments}, the first $L$ moments $\mathbb{E}_{f \in \mathcal{F}_n} ( 2^{kn} \ngap(f)^{2k})$ can be made arbitrarily close to their Gaussian values by taking $n_0$ large. Thus for any $\epsilon$, we may take $n_0$ large enough so that
\begin{eqnarray}
&&\Pr_{f \in \mathcal{F}_n}\left[ (\gap(f)/2^n)^2 \leq 2^{-n-2}\right] \nonumber \\
&=& \mathbb{E}_{f \in \mathcal{F}_n} \left[I( 2^n \ngap(f)^2)  \right] \nonumber \\
&\geq& \mathbb{E}_{f \in \mathcal{F}_n} \left[p( 2^n \ngap(f)^2)  \right] \nonumber \\
&\geq& \mathbb{E}_{x \sim \mathcal{N}(0,1)} (p(x^2)) -\epsilon,
\end{eqnarray}
where $\mathcal{N}(0,1)$ is the Gaussian distribution with mean 0 and variance 1. 

We construct a specific polynomial $p(x)$ with degree $L=15$.
\begin{equation}
    p(x) = \frac{\delta^2}{1-\delta^2} \left(T_L(\sqrt{1+A-4Ax})^2-1 \right),
\end{equation}
where $T_L$ is the $L$th Chebyshev polynomial of the first kind, $\delta = 0.5$ and $A = T_{1/L}(1/\delta)^2-1 = 0.00773$. 

It can be verified that for any odd choice of $L$ and any choice of $\delta$, this polynomial is smaller than $I(x)$ for all $x \geq 0$. We can also give the coefficients explicitly by writing
\begin{equation}
    p(x) = \sum_{j=0}^{15} \frac{c_j}{(2j-1)!!}x^j,
\end{equation}
where $\{c_j\}_{j=0}^{15}$ is given by

\begin{equation}
    \begin{split}
        \{  &  1,\; -6.0672, \; 29.9730, \; -114.8688, \\
        & 345.0021, \; -829.2997,\;  1620.0455, \; -2593.7392, \\
        & 3410.0118, \; -3665.1216,\; 3183.4033,\;  -2188.3186, \\
        & 1149.8164, \; -435.1008,\;  105.8449,\; -12.4590 \}.
    \end{split}
\end{equation}

Then, since $\mathbb{E}_{x \sim \mathcal{N}(0,1)}(x^{2j}) = (2j-1)!!$ it is easy to evaluate 
\begin{equation}
    \mathbb{E}_{x \sim \mathcal{N}(0,1)} (p(x^2)) = \sum_{j=0}^{15}c_j = 0.1222.
\end{equation}
This proves the claim. Note that the bound could be mildly improved by taking smaller $\delta$ and larger $L$.
\end{proof}

\begin{corollary}
The quantity %
\begin{equation}\label{eq:naiveprob1}
p_0 = \liminf_{n\rightarrow \infty}\left(\max_{j=0,1}\left( \Pr_{f \leftarrow \mathcal{F}_n^{\text{prom}}}(S(f)=j)\right)\right)
\end{equation}
satisfies $p_0 \leq 11/12$ for the \texttt{poly3-SGAP} problem.
\end{corollary}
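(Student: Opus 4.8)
The plan is to obtain the bound on $p_0$ directly by combining the two probability-mass estimates that are already available: the Paley--Zygmund lower bound on the YES mass used inside the proof of Lemma~\ref{lem:promisefraction}, and the lower bound on the NO mass from Lemma~\ref{lem:massbound}. Write $P_{\mathrm{Y}}(n)$ for $\Pr_{f\in\mathcal{F}_n}[(\gap(f)/2^n)^2 \ge 2^{-n-1}]$ and $P_{\mathrm{N}}(n)$ for $\Pr_{f\in\mathcal{F}_n}[(\gap(f)/2^n)^2 \le 2^{-n-2}]$, so that $\Pr_{f\in\mathcal{F}_n}[f\in\mathcal{F}_n^{\mathrm{prom}}] = P_{\mathrm{Y}}(n)+P_{\mathrm{N}}(n)$ (the YES and NO events are disjoint, so this sum is at most $1$), and the two quantities inside the $\max$ in the definition of $p_0$ are $P_{\mathrm{Y}}(n)/(P_{\mathrm{Y}}(n)+P_{\mathrm{N}}(n))$ and $P_{\mathrm{N}}(n)/(P_{\mathrm{Y}}(n)+P_{\mathrm{N}}(n))$ respectively.

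First I would record the inputs for all sufficiently large $n$: the Paley--Zygmund step in the proof of Lemma~\ref{lem:promisefraction} (with $Z = (\gap(f)/2^n)^2$, $\mathbb{E}(Z) = 2^{-n}$, $\mathbb{E}(Z^2) \le 3\cdot 2^{-2n}$, and $\theta = 1/2$) gives $P_{\mathrm{Y}}(n) \ge 1/12$, and Lemma~\ref{lem:massbound} gives $P_{\mathrm{N}}(n) \ge 0.12$.

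Then the conditional-probability bounds follow from one line of algebra each. For the NO instances,
\begin{align}
\frac{P_{\mathrm{N}}(n)}{P_{\mathrm{Y}}(n)+P_{\mathrm{N}}(n)}
&= 1 - \frac{P_{\mathrm{Y}}(n)}{P_{\mathrm{Y}}(n)+P_{\mathrm{N}}(n)} \nonumber \\
&\le 1 - P_{\mathrm{Y}}(n) \le 1 - \frac{1}{12} = \frac{11}{12},
\end{align}
where the first inequality uses $P_{\mathrm{Y}}(n)+P_{\mathrm{N}}(n) \le 1$; symmetrically, the YES fraction is at most $1 - P_{\mathrm{N}}(n) \le 1 - 0.12 = 0.88 < 11/12$. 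Hence $\max_{j\in\{0,1\}}\Pr_{f\leftarrow\mathcal{F}_n^{\mathrm{prom}}}(S(f)=j) \le 11/12$ for all large $n$, and taking the $\liminf$ gives $p_0 \le 11/12$.

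There is essentially no remaining obstacle here --- all the analytic work sits in Theorem~\ref{thm:moments} and Lemma~\ref{lem:massbound}, which are already proved. The only point worth flagging is that the constant $11/12$ is tightly calibrated to the $1/12$ Paley--Zygmund bound on the YES mass, so the binding case is the NO-conditional probability; the complementary $0.88$ bound on the YES side has slack, and improving Lemma~\ref{lem:massbound}'s NO-mass estimate of $0.12$ (or, equivalently, establishing the conjectured Gaussian form of the $\gap$ distribution, which would give $p_0 \le 0.56$) would be the way to sharpen the constant in Conjecture~\ref{conj:poly3aveSBSETH}.
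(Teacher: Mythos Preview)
Your proof is correct and follows essentially the same approach as the paper's: combine the Paley--Zygmund bound $P_{\mathrm{Y}}(n)\ge 1/12$ with the Lemma~\ref{lem:massbound} bound $P_{\mathrm{N}}(n)\ge 0.12$, and use $P_{\mathrm{Y}}(n)+P_{\mathrm{N}}(n)\le 1$ to convert these into upper bounds $11/12$ and $0.88$ on the conditional NO and YES fractions, respectively. The only difference is that you spell out the one-line algebra $P_{\mathrm{N}}/(P_{\mathrm{Y}}+P_{\mathrm{N}}) \le 1 - P_{\mathrm{Y}}$ explicitly, whereas the paper leaves that step implicit.
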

\begin{proof}
The previous theorem states that at least 0.12 of the probability mass lies in the set of NO instances $(S(f)=0)$ for sufficiently large $n$. Meanwhile, as discussed in Lemma \ref{lem:promisefraction}, at least 1/12 of the probability mass lies in the set of YES instances $(S(f)=1)$ for all $n$. In other words, the fraction of NO instances is at most 11/12, and the fraction of YES instances is at most 0.88, which proves the corollary. 
\end{proof}

\bibliographystyle{abbrvnat}
\bibliography{references}

\end{document}